\newcommand{\highlight}[4]{
    \tikz[overlay, remember picture]{
        \fill[#2!50, opacity=0.25] 
        ([xshift={#3[0]}, yshift={#3[1]}]pic cs:#1-start) rectangle ([xshift={#4[0]}, yshift={#4[1]}]pic cs:#1-end);
    }
}
\def\namedlabel#1#2{\begingroup
    #2%
    \def\@currentlabel{#2}%
    \phantomsection\label{#1}\endgroup
}
\definecolor{lightgray}{gray}{0.9}
\newtcbox{\highlightbox}[1][lightgray]{on line, colback=#1, colframe=#1, boxrule=0pt, arc=0pt, outer arc=0pt, boxsep=0pt, left=0pt, right=0pt, top=0pt, bottom=0pt}
\newtheorem{theorem}{Theorem}[section]
\newtheorem{assumption}{Assumption}[section]
\newtheorem{lemma}{Lemma}[section]
\newtheorem{exmp}{Example}[section]
\DeclareMathOperator*{\argmin}{arg\,min}
\DeclareMathOperator*{\R}{\mathbb{R}}
\DeclareMathOperator*{\Rplus}{\mathbb{R}_{+}}
\DeclareMathOperator*{\Rstarplus}{\mathbb{R}_{+}^{*}}
\DeclareMathOperator*{\Natural}{\mathbb{N}}
\DeclareMathOperator*{\expectation}{\mathbb{E}}
\DeclareMathOperator*{\generator}{\mathbb{A}}
\DeclareMathOperator*{\sumoverjumps}{\sum_{l=0 \atop \sigma_{l}<T}^{\infty}}
\DeclareMathOperator*{\enumreac}{[\![1, M]\!]}
\newmdenv[
  topline=false,
  bottomline=false,
  rightline=false,
  skipabove=\topsep,
  skipbelow=\topsep,
  leftmargin=2pt,
  rightmargin=0pt,
  innertopmargin=0pt,
  innerbottommargin=0pt
]{sideline}
\def\arrvline{\hfil\kern\arraycolsep\vline\kern-\arraycolsep\hfilneg}
\title{Unbiased estimation of second-order parameter sensitivities for stochastic reaction networks}
\author{Quentin Badolle$^1$, Ankit Gupta$^1$, Mustafa Khammash$^{1*}$}
\date{\monthname{} \the\year}
\date{
$^1$Department of Biosystems Science and Engineering, ETH Zurich, Basel, Switzerland \\
$^*$ mustafa.khammash@bsse.ethz.ch
}
\begin{document}

\maketitle

\begin{abstract}
Stochastic models for chemical reaction networks are increasingly popular in systems and synthetic biology. These models formulate the reaction dynamics as Continuous-Time Markov Chains (CTMCs) whose propensities are parameterized by a vector $\theta$ and parameter sensitivities are introduced as derivatives of their expected outputs with respect to components of the parameter vector. Sensitivities characterise key properties of the output like robustness and are also at the heart of numerically efficient optimisation routines like Newton-type algorithms used in parameter inference and the design of of control mechanisms. Currently the only unbiased estimator for second-order sensitivities is based on the Girsanov transform and it often suffers from high estimator variance. We develop a novel estimator for second-order sensitivities by first rigorously deriving an integral representation of these sensitivities. We call the resulting method the \emph{Double Bernoulli Path Algorithm} and illustrate its efficiency through numerical examples.
\end{abstract}

Chemical reaction networks provide a mechanistic description of the molecular processes hosted in living cells. These models represent the possible reactions between molecular entities within cells as well as the speed at which these interactions take place. Such a representation has become instrumental in systems and synthetic biology~\cite{del2015biomolecular}. In this context, some molecular species are present in low copy numbers. Stochastic reaction networks~(SRNs) account for the discrete nature of the molecular counts and the resulting inherent stochasticity observed experimentally in the dynamics of these systems~\cite{anderson2015stochastic}. SRNs are a class of Continuous-Time Markov Chains (CTMCs) whose generator is expressed in terms of state-dependent propensities. The probability that a certain reaction takes places during a small time interval is proportional to its corresponding propensity.

In applications, these propensities are usually parameterised by a vector $\theta$. The value of these parameters is frequently uncertain and it is therefore crucial to quantify to which degree the predictions made about the system are going to be affected by a change in parameters. Such predictions include expected outputs of the network, the derivatives of which reflect the magnitude of change due to a local perturbation of the parameters. These derivatives are commonly called~\emph{parameter sensitivities}. In the overwhelming majority of SRNs of practical relevance, sensitivities do not have a known analytical expression and computational methods are the only possibility to estimate them for a given nominal value of the parameters. Accordingly, significant effort has been devoted over the last twenty years to develop efficient estimation procedures in the case of first-order sensitivities~\cite{plyasunov2007efficient,rathinam2010efficient,sheppard2012pathwise,anderson2012efficient,gupta2018estimation}. Like for most stochastic models, these methods essentially fall in three broad categories~\cite{asmussen2007stochastic,mohamed2020monte}: Girsanov Transform (GT) methods~\cite{plyasunov2007efficient}, pathwise methods~\cite{sheppard2012pathwise,wolf2015hybrid} and finite-difference methods~\cite{rathinam2010efficient,anderson2012efficient}. In addition to these now classical methods, a fourth class of methods has recently emerged under the name of \emph{gradient estimator} in the jump-diffusion literature~\cite{wang2024efficient} and \emph{exact Integral Path Algorithm} (eIPA) in the SRN literature~\cite{gupta2018estimation}~(see ref.~\cite{badolle2024generator} for the connection between these two approaches). All these approaches can be further distinguished by their range of applicability, variance, bias, and mean-squared error. Formulas for sensitivities have also proven useful in theoretically characterising the robustness of the network outputs to parameter disturbance and uncertainty~\cite{briat2016antithetic,aoki2019universal,gupta2022universal}. When parameters are unknown, sensitivities can be used as part of gradient-based inference procedures~\cite{boyd2004convex}. So far, higher-order sensitivities have been less extensively investigated although they can provide a refined understanding of the dependency of the network response on the parameters~\cite{wolf2012finite}. Second-order sensitivities can also be used in efficient optimisation routines like the Newton-Raphson algorithm~\cite{boyd2004convex}.

In this paper, we first provide what are to our knowledge the first conditions on the existence of second-order parameter sensitivities for SRNs. We also provide a new representation for these second-order sensitivities in the form of an integral over trajectories. From this, we derive an unbiased estimator of these sensitivities based on samples generated by what we will call the \emph{Double Bernoulli Path Algorithm} abbreviated to Double BPA or DBPA. This new estimator falls in the category as the gradient estimator and eIPA. We demonstrate in several examples that it can provide substantial performance improvement over the only unbiased alternative currently available based on the GT method.

\section{Preliminaries}

\subsection{Stochastic reaction networks} SRNs are introduced in detail in~\cite{anderson2015stochastic,erdi1989mathematical,  wilkinson2018stochastic}. Here, let us consider a network with $N\in\Natural^{*}$ molecular species. The state of the system at any time can be described by a vector in $\Natural^N$ whose $i$-th component corresponds  to the number of molecules of the $i$-th species. The chemical species interact through $M\in\Natural^{*}$ chemical reactions and every time the $k$-th reaction fires, the state of the system is displaced by the $N$-dimensional stoichiometric vector $\zeta_{k}\in\mathbb{Z}^N$. The propensity function $\lambda = (\lambda_{k})_{k\in \enumreac}$ depends on the state of the system $x \in \Natural^N$ and a parameter $\theta \in \R^d$, where $d \in \Natural^{*}$. The reaction dynamics are represented by a Continuous-Time Markov Chain~(CTMC) $(X_{\theta}(t))_{t\in\Rplus}$ whose generator is~\cite{anderson2015stochastic}:
\begin{equation}
\label{eq:def_generator}
 \mathbb{A}_{\theta}f(x) \coloneqq \sum_{k=1}^{M} \lambda_k(x,\theta)\Delta_{\zeta_k}f(x),
\end{equation}

for any bounded, real-valued function $f$ on $\mathbb{N}^N$. Given a collection of independent, unit-rate Poisson processes $\{(Y_k(t))_{t \in \mathbb{R}_{+}}\}_{k \in [\![1, M]\!]}$, we associate to each reaction $k$ a counting process $(R_{k}(t))_{t \in \mathbb{R}_{+}}$ defined as:

\begin{equation}
R_{k}(t) \coloneqq Y_{k}\left(\int_0^t \lambda_k(X_\theta(s), \theta) ds\right).
\end{equation}

For an initial state $x \in \Natural^N$, the random time change representation of $(X_{\theta}(t))$ is given by~\cite{anderson2015stochastic, kurtz1980representations}:
\begin{equation}
\label{eq:rtc}    
X_{\theta}(t) = x + \sum_{k=1}^{M} \zeta_{k} R_{k}(t).
\end{equation}

For a function $f : \Natural^{N} \xrightarrow[]{} \R$ that expresses an output of interest, we define:
\begin{equation}
\Psi_{\theta}(x, f, t) \coloneqq \mathbb{E}[f(X_{\theta}(t)) | X_{\theta}(0)=x] = \mathbb{E}_{x}[f(X_{\theta}(t))].    
\end{equation}

For any $(i,j) \in [\![1, d]\!]^2$, we introduce the first- and second-order parameter sensitivities as~\cite{wolf2012finite, gupta2013unbiased}:
\begin{align}
S_{\theta}^{(i)}(x,f,t) &\coloneqq \frac{\partial \Psi_{\theta}}{\partial \theta_{i}}(x,f,t), \label{eq:def_first_order}\\
S_{\theta}^{(i,j)}(x,f,t) &\coloneqq \frac{\partial^2 \Psi_{\theta}}{\partial \theta_{i} \partial \theta_{j}}(x,f,t).\label{eq:def_second_order}     
\end{align}

\subsection{Estimation of second-order parameter sensitivities}

As mentioned in the introduction, most of the research effort on sensitivities for SRNs has been devoted to first-order derivatives so far. These techniques are reviewed extensively in section 2 of~\cite{gupta2018estimation}. We provide a brief overview focused on second-order sensitivities below.\newline

\textbf{Unbiased estimation.} To our knowledge, the Girsanov Transform~(GT) method in currently the only unbiased method to estimate second-order parameter sensitivities of SRNs. It relies on an estimator $s_{\theta}^{(i,j)}(f, T)$ defined as~\cite{glynn1990likelihood, plyasunov2007efficient}:
\begin{align}
\label{eq:girsanov_transform}
\begin{split}
&s_{\theta}^{(i,j)}(f, T) =  f\big(X_{\theta}(T)\big)\Bigg[\sum\limits_{k=1}^{M}\int_0^{T} \left(\diffp{\log \lambda_k}{{\theta_{i}}{\theta_{j}}}(X_{\theta}(s), \theta) dR_{k}(s) - \diffp{\lambda_k}{{\theta_{i}}{\theta_{j}}}(X_{\theta}(s), \theta) ds\right)  \\
&+ \sum\limits_{k=1}^{M}\int_0^{T}\left(\frac{\partial \log \lambda_k}{\partial{\theta_{i}}}(X_{\theta}(s), \theta) dR_{k}(s) - \frac{\partial \lambda_k}{\partial {\theta_{i}}}(X_{\theta}(s), \theta) ds\right) \sum\limits_{k=1}^{M}\int_0^{T}\left(\frac{\partial \log \lambda_k}{\partial{\theta_{j}}}(X_{\theta}(s), \theta) dR_{k}(s) - \frac{\partial \lambda_k}{\partial {\theta_{j}}}(X_{\theta}(s), \theta) ds\right) \Bigg],
\end{split}
\end{align}

where $X_{\theta}(0) = x$. The applicability of the GT method depends crucially on the existence of the stochastic exponential and the validity of an interchange of derivation and integration~\cite{wang2021validity}. These conditions are challenging to verify and have not yet been derived for second-order sensitivities~\cite{wang2021validity}. From eq.~\eqref{eq:girsanov_transform}, it is clear that the GT estimator is not defined whenever some of the propensities are proportional to $\theta_i$ and/or $\theta_j$ with the parameter being evaluated at zero. This means in particular that second-order sensitivities cannot be computed at zero in the important case of mass action kinetics~\cite{anderson2015stochastic}. This precludes the investigation of the sensitivity of an output to the presence/absence of a reaction. When applicable, the GT method has been shown to suffer from large variance not only for SRNs but also for a range of other stochastic models~\cite{gupta2013unbiased, mohamed2020monte}. Depending on the field of application, it is sometimes refereed to as the likelihood ratio method or the REINFORCE algorithm~\cite{mohamed2020monte}.\newline

\textbf{Biased estimation.} Pick an $\epsilon \in \Rstarplus$ and introduce $\hat{\theta} \coloneqq (\theta_{1}^{\epsilon}, \theta_{2}^{\epsilon}, \theta_{3}^{\epsilon},\theta_{4}^{\epsilon})$ specified by:

\begin{equation}
\theta_{1}^{\epsilon} \coloneqq \theta + (e_i + e_j) \epsilon, \quad \theta_{2}^{\epsilon} \coloneqq \theta + e_i \epsilon, \quad \theta_{3}^{\epsilon} \coloneqq \theta + e_j \epsilon, \quad \theta_{4}^{\epsilon} \coloneqq \theta.   
\end{equation}

Finite difference methods provide a biased estimation of the second-order sensitivity by approximating it as~\cite{wolf2012finite}:
\begin{equation}
\label{eq:second_order_finite_difference}
S_{\theta}^{(i,j)}(x,f,T) \approx  \mathbb{E}_{x} \Bigg[\frac{f\big(X_{\theta_{1}^{\epsilon}}^{(1)}(T)\big)-f\big(X_{\theta_{2}^{\epsilon}}^{(2)}(T)\big)-f\big(X_{\theta_{3}^{\epsilon}}^{(3)}(T)\big)+f\big(X_{\theta_{4}^{\epsilon}}^{(4)}(T)\big)}{\epsilon^{2}}\Bigg],
\end{equation}

where each process indexed by $\ell$ has generator $\mathbb{A}_{\theta_{\ell}^{\epsilon}}$. Eq.~\eqref{eq:second_order_finite_difference} is usually replaced by its centred equivalent, leading to a so called centred finite difference estimator with bias $O(\epsilon^{2})$~\cite{wolf2012finite}. In addition, the four processes are often coupled to reduce the variance of the estimator~\cite{wolf2012finite, asmussen2007stochastic}. One of these coupling schemes is the split coupling with variance $O(\epsilon^{-3})$~\cite{wolf2012finite, anderson2012efficient}. With this in mind, it becomes apparent that any effort in reducing the bias will translate in a sharp increase in variance. This effect is more pronounced for second-order than for first-order sensitivity for which the corresponding estimator again has bias $O(\epsilon^{2})$ but a variance growing like $O(\epsilon^{-1})$ only~\cite{anderson2012efficient}.
\section{An integral formula for second-order parameter sensitivities}

\begin{assumption}
\label{assumption:propensity_dynkin}
The propensity function $\lambda$ satisfies this assumption at the parameter value $\theta \in \mathbb{R}^{d}$ if the following are true:
\begin{enumerate}
    \item[(A)] The function $\lambda(\cdot,\theta)$ is stoichiometrically admissible, \emph{i.e.}:
    \begin{equation*}
    \forall k \in [\![1, M]\!], \quad \forall x \in \mathbb{N}^{N}, \quad \lambda_{k}(x,\theta) > 0 \implies (x + \zeta_k) \succcurlyeq 0.
    \end{equation*}
    \item[(B)] The function $\lambda(\cdot,\theta)$ satisfies the following polynomial growth condition:
    \begin{equation*}
    \exists C_{\lambda} \in \mathbb{R}_{+}^{*}, \quad \forall x \in \mathbb{N}^{N}, \quad \sum_{k=1 \atop k \in P}^{M} \lambda_{k}(x, \theta)\leq C_{\lambda} (1+|x|),
    \end{equation*}
    where $P\coloneqq \{k \in [\![1, M]\!]: \langle 1_{M}, \zeta_k \rangle > 0\}$ is the set of indices of those reactions which have a net positive effect on the total population.
    \end{enumerate}
\end{assumption}

Assumption \ref{assumption:propensity_dynkin} guarantees that the martingale problem for $\mathbb{A}_{\theta}$ defined in eq.~\eqref{eq:def_generator} is well posed and thus that the Markov process $(X_{\theta}(t))$ with generator $\mathbb{A}_{\theta}$ and initial state $x$ is well defined (see section 2 in~\cite{gupta2014} and chapter 7 in~\cite{ethier2009markov}). To state the remaining two assumptions, we need to introduce the notion of \emph{function of polynomial growth} which for a function $f : \mathbb{N}^n \xrightarrow[]{} \mathbb{R}$ with $n \in \mathbb{N}^{*}$ means that:
\begin{equation*}
\exists (C, r) \in \mathbb{R}_{+}^{2}, \quad \forall x \in \mathbb{N}^{n}, \quad |f(x)| \leq C (1 + |x|^r).   
\end{equation*}

\begin{assumption}
\label{assumption:propensity_regularity}
The propensity function $\lambda$ satisfies this assumption at the parameter value $\theta \in \mathbb{R}^{d}$ for the index pair $(i,j) \in [\![1, d]\!]^2$ if the following is true:
\begin{enumerate}
    \item[\namedlabel{sub_ass:continuous_diff}{(A)}] $\forall k \in [\![1, M]\!], \quad \forall x \in \mathbb{N}^{N}$, the function $\lambda_k(x, \cdot)$ is in $C_{\text{loc}}^{3}(\mathbb{R}^{d}, \mathbb{R})$,
    \item[\namedlabel{sub_ass:poly_growth}{(B)}] $\forall k \in [\![1, M]\!], \quad \forall x \in \mathbb{N}^{N}$, the function $\partial_{i,j}^{2} \lambda_k(x, \cdot)$ is of polynomial growth,
    \item[\namedlabel{sub_ass:poly_growth_sup}{(C)}] $\forall k \in [\![1, M]\!]$, there exists an $\epsilon \in \Rstarplus$ such that $\sup_{\xi \in (\theta-\epsilon, \theta+\epsilon)}\{|\partial_{i,i,j}^{3} \lambda_k(\cdot, \xi)|, |\partial_{i,j,j}^{3} \lambda_k(\cdot, \xi)|\}$ is of polynomial growth.
\end{enumerate}

\end{assumption}

\begin{assumption}
\label{assumption:polynomial_growth}
The output function $f$ is of polynomial growth.
\end{assumption}

Assumptions~\ref{assumption:propensity_regularity} and~\ref{assumption:polynomial_growth} are used as technical conditions in the proof of theorem~\ref{thm:main_theorem} (see the~\nameref{appendix} as well as section 2 in~\cite{gupta2013unbiased} and section 3 in~\cite{gupta2014} for similar conditions in the context of first-order sensivities for SRNs).

\begin{theorem}
\label{thm:main_theorem}
Let $(X_{\theta}(t))$ be the continuous-time Markov chain with generator $\generator_{\theta}$ defined in eq.~\eqref{eq:def_generator}. Suppose the propensity function $\lambda$ satisfies assumptions \ref{assumption:propensity_dynkin} and \ref{assumption:propensity_regularity} at $\theta$ for $(i,j)$ and the function $f : \Natural \xrightarrow{} \mathbb{R}$ satisfies assumption~\ref{assumption:polynomial_growth}. For any initial state $x_0 \in \Natural^N$ and final time $T \in \Rplus$,
$S_{\theta}^{(i,j)} (x_0, f, T)$ exists and is given by $S_{\theta}^{(i,j)} (x_0, f, T)= \mathbb{E}_{x_0}[s_{\theta}^{(i,j)}
(f, T)]$ where:
\begin{align}
\label{eq:thm}
\begin{split}
s_{\theta}^{(i,j)}
(f, T) &= \sum_{k=1}^{M} \Bigg[\int_{0}^{T} \frac{\partial^2 \lambda_{k}}{\partial \theta_{i} \partial \theta_{j}}(X_{\theta}(s), \theta) \Delta_{\zeta_k} \Psi_{\theta}(X_{\theta}(s), f, T-s)ds \\
&+ \int_{0}^{T} \frac{\partial \lambda_{k}}{\partial \theta_{i}}(X_{\theta}(s), \theta) \Delta_{\zeta_k} S_{\theta}^{(j)}(X_{\theta}(s), f, T-s) ds + \int_{0}^{T} \frac{\partial \lambda_{k}}{\partial \theta_{j}}(X_{\theta}(s), \theta) \Delta_{\zeta_k} S_{\theta}^{(i)}(X_{\theta}(s), f, T-s) ds\Bigg].
\end{split}
\end{align}
\end{theorem}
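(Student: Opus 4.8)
The strategy I would follow is to realise $S_{\theta}^{(i,j)}$ as the $\theta_{j}$-derivative of the first-order sensitivity $S_{\theta}^{(i)}$, and to carry out that differentiation by a Dynkin-type divided-difference argument, so that the three terms in eq.~\eqref{eq:thm} emerge from the three distinct ways $\theta_{j}$ enters the first-order formula. I take as established, from the first-order theory developed under analogous hypotheses in refs.~\cite{gupta2013unbiased,gupta2014} (the same argument with one fewer derivative), that for every output $g$ of polynomial growth, every $t \in \Rplus$ and every initial state $x$,
\begin{equation}
\label{eq:plan_first_order}
S_{\theta}^{(i)}(x,g,t) = \mathbb{E}_{x}\!\left[\int_{0}^{t} \Phi_{\theta}^{(i)}\big(X_{\theta}(s), g, t-s\big)\, ds\right], \qquad \Phi_{\theta}^{(i)}(y,g,u) \coloneqq \sum_{k=1}^{M} \partial_{i}\lambda_{k}(y,\theta)\, \Delta_{\zeta_k} \Psi_{\theta}(y,g,u),
\end{equation}
and that $y \mapsto \Psi_{\theta}(y,g,t)$ and $y \mapsto S_{\theta}^{(i)}(y,g,t)$ are of polynomial growth, locally uniformly in $\theta$ and uniformly for $t \in [0,T]$. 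I would also record the all-order moment bounds $\sup_{s \le T}\mathbb{E}_{x_0}[\,|X_{\theta}(s)|^{r}\,] < \infty$, which follow from Assumption~\ref{assumption:propensity_dynkin}: part (A) confines the chain to $\Natural^{N}$, while the linear-growth bound in part (B) on the population-increasing propensities controls every moment of $|X_{\theta}|$ on $[0,T]$.

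Setting $\theta' \coloneqq \theta + h e_{j}$ and subtracting eq.~\eqref{eq:plan_first_order} at $\theta'$ and at $\theta$, I would split the divided difference into an explicit-derivative part and a change-of-law part,
\begin{equation}
\label{eq:plan_split}
\frac{S_{\theta'}^{(i)}(x_0,f,T) - S_{\theta}^{(i)}(x_0,f,T)}{h} = \mathbb{E}_{x_0}\!\left[\int_0^T \tfrac{1}{h}\big(\Phi_{\theta'}^{(i)} - \Phi_{\theta}^{(i)}\big)\big(X_{\theta'}(s),f,T-s\big)\, ds\right] + \frac{1}{h}\,\Delta_{\mathrm{law}},
\end{equation}
where $\Delta_{\mathrm{law}} \coloneqq \mathbb{E}_{x_0}[\int_0^T \Phi_{\theta}^{(i)}(X_{\theta'}(s),f,T-s)\,ds] - \mathbb{E}_{x_0}[\int_0^T \Phi_{\theta}^{(i)}(X_{\theta}(s),f,T-s)\,ds]$ isolates the dependence of the law of the chain on $\theta_{j}$. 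In the first (explicit) term I would differentiate $\Phi_{\theta}^{(i)}$ in $\theta_{j}$ using $\partial_{\theta_{j}}\Psi_{\theta} = S_{\theta}^{(j)}$, which produces $\sum_k \partial_{i,j}^{2}\lambda_k\, \Delta_{\zeta_k}\Psi_{\theta} + \sum_k \partial_{i}\lambda_k\, \Delta_{\zeta_k} S_{\theta}^{(j)}$ and hence, in the limit, the first two terms of eq.~\eqref{eq:thm}. For $\Delta_{\mathrm{law}}$ I would apply Dynkin's formula to the value function $S_{\theta}^{(i)}(\cdot,f,\cdot)$ of the running-cost functional in eq.~\eqref{eq:plan_first_order}: since this value function solves the inhomogeneous backward equation $\partial_t S_{\theta}^{(i)} = \generator_{\theta} S_{\theta}^{(i)} + \Phi_{\theta}^{(i)}$ with zero initial data, evaluating $S_{\theta}^{(i)}(X_{\theta'}(s),f,T-s)$ along the $\theta'$-chain and letting the compensator act yields
\begin{equation}
\label{eq:plan_II}
\Delta_{\mathrm{law}} = \mathbb{E}_{x_0}\!\left[\int_0^T \sum_{k=1}^{M}\big(\lambda_k(X_{\theta'}(s),\theta') - \lambda_k(X_{\theta'}(s),\theta)\big)\, \Delta_{\zeta_k} S_{\theta}^{(i)}\big(X_{\theta'}(s),f,T-s\big)\, ds\right],
\end{equation}
so that $h^{-1}\Delta_{\mathrm{law}} \to \mathbb{E}_{x_0}[\int_0^T \sum_k \partial_{j}\lambda_k\, \Delta_{\zeta_k} S_{\theta}^{(i)}\, ds]$, the third term of eq.~\eqref{eq:thm}.

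The main obstacle, and the place where Assumptions~\ref{assumption:propensity_regularity} and~\ref{assumption:polynomial_growth} are spent, is the rigorous passage to the limit $h \to 0$ in eq.~\eqref{eq:plan_split}: one must simultaneously let the divided differences converge to derivatives and let the $\theta'$-chain converge to the $\theta$-chain, while certifying that the limit, the expectation and the time integral may be exchanged. I would couple $X_{\theta'}$ and $X_{\theta}$ through the common Poisson processes of the random time change representation eq.~\eqref{eq:rtc}, giving $X_{\theta'}(s) \to X_{\theta}(s)$ almost surely and in every $L^{p}$, locally uniformly in $\theta$ on $[0,T]$. Dominated convergence then reduces to producing, uniformly in small $h$, a polynomial-growth dominating function for each integrand. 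By the mean value theorem the divided differences in eqs.~\eqref{eq:plan_split}--\eqref{eq:plan_II} are controlled by derivatives of $\lambda$ evaluated at an intermediate parameter in a neighbourhood of $\theta$; combined with the polynomial growth of $f$ (Assumption~\ref{assumption:polynomial_growth}), of $\Psi_{\theta}$ and of $S_{\theta}^{(i)}, S_{\theta}^{(j)}$, and with the moment bounds above, this furnishes the required domination.

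The subtler requirement is that $S_{\theta}^{(i)}$ and $S_{\theta}^{(j)}$ be themselves continuously differentiable in $\theta_{j}$ and $\theta_{i}$ with polynomial-growth derivatives, so that the explicit term of eq.~\eqref{eq:plan_split} converges and its limit is the correct function; differentiating the first-order integrand once more in each direction brings in the mixed third derivatives $\partial_{i,j,j}^{3}\lambda_k$ and $\partial_{i,i,j}^{3}\lambda_k$, whose neighbourhood-uniform polynomial growth is assumed precisely in Assumption~\ref{assumption:propensity_regularity}\ref{sub_ass:poly_growth_sup} (the supremum over $(\theta-\epsilon,\theta+\epsilon)$ being exactly the intermediate-point control demanded by the mean value theorem). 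Assembling the limits of the two parts of eq.~\eqref{eq:plan_split} shows that the difference quotient converges, which simultaneously establishes the existence of $S_{\theta}^{(i,j)}$ and identifies it with $\mathbb{E}_{x_0}[s_{\theta}^{(i,j)}(f,T)]$. Equivalently, the whole argument may be read as Duhamel's principle applied to the twice-differentiated backward equation $\partial_t S_{\theta}^{(i,j)} = \generator_{\theta} S_{\theta}^{(i,j)} + g_{\theta}^{(i,j)}$ with source $g_{\theta}^{(i,j)}$ equal to the integrand of eq.~\eqref{eq:thm} and zero initial data, the Feynman--Kac representation of which is the claimed formula.
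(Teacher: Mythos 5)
Your proposal is correct in its essentials, but it takes a genuinely different route from the paper's proof, so let me compare the two. The paper works directly with the four-point difference quotient for $S^{(i,j)}_\theta$ (its eq.~\eqref{finite_difference}): it constructs a split coupling of four chains at the parameters $\theta$, $\theta+\epsilon e_i$, $\theta+\epsilon e_j$, $\theta+\epsilon(e_i+e_j)$, conditions on the first decoupling time, and sorts the decoupling events by probability order --- the $O(\epsilon^2)$ events driven by $\partial^2_{ij}\lambda_k$ give the first term of eq.~\eqref{eq:thm}, while the $O(\epsilon)$ events driven by $\partial_i\lambda_k$ or $\partial_j\lambda_k$ leave two sub-pairs of chains whose output difference is itself $O(\epsilon)$ and yield the two cross terms; the resulting jump-time representation \eqref{sensitivity_first_expression} is then converted into time integrals by lemma~\ref{lemma:for_second_part}. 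You instead differentiate the first-order BPA representation \eqref{eq:first_order_bpa} in $\theta_j$, splitting the difference quotient into an explicit-parameter part and a change-of-law part, and you resolve the latter by applying Dynkin's formula to the inhomogeneous backward equation $\partial_t S^{(i)}_\theta = \generator_\theta S^{(i)}_\theta + \sum_k\partial_i\lambda_k\,\Delta_{\zeta_k}\Psi_\theta$ evaluated along the perturbed chain; I checked both of your key identities (the add-and-subtract splitting and the generator-perturbation formula for the change-of-law term), and they are correct, with limits matching exactly the three terms of eq.~\eqref{eq:thm}. What your route buys: it is modular (the first-order theory enters as a black box, and iterating the argument would give sensitivities of any order), it produces the time-integral form directly without the jump-time bookkeeping of lemma~\ref{lemma:for_second_part}, and --- because you prove convergence of the genuine one-parameter difference quotient of $S^{(i)}_\theta$ --- it establishes \emph{existence} of the mixed partial outright, whereas the paper's starting identity \eqref{finite_difference} strictly speaking presupposes existence, since convergence of the symmetric four-point quotient alone does not imply twice-differentiability. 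What the paper's route buys: its four-chain split coupling mirrors the coupling actually implemented in the DBPA, it treats $i$ and $j$ symmetrically from the outset, and its intermediate exponential-weight representation \eqref{sensitivity_first_expression} is the one closest to what the algorithm simulates. Two points you should make explicit to close your argument: (i) you obtain $\partial_j\partial_i\Psi_\theta$, so to recover $S^{(i,j)}_\theta$ as defined in eq.~\eqref{eq:def_second_order} together with symmetry of the Hessian, note that interchanging the roles of $i$ and $j$ produces the same (manifestly symmetric) limit formula; (ii) you invoke the first-order theory at the perturbed parameter $\theta+he_j$ and apply Dynkin's formula to a polynomial-growth value function of a possibly unbounded chain, so you should record that assumptions~\ref{assumption:propensity_dynkin} and~\ref{assumption:propensity_regularity} remain valid in a neighbourhood of $\theta$ and that the locally uniform moment bounds license the backward equation --- these are of the same technical weight as the Taylor-expansion and Portmanteau steps in the paper's own proof, so they are items to record rather than flaws in your approach.
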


Theorem~\ref{thm:main_theorem} gives easily verifiable conditions on the network and the output function $f$ under which $\Psi_{\theta}(x,f,t)$ is twice differentiable with respect to $\theta$.
Observe that for a reaction network with only mass action kinetics, the first term in the summation in eq.~\eqref{eq:thm} disappears while the second and third term are only non-zero when $k=i$ or $k=j$ which means that the Hessian in effect combines $M$ elements by summing them two by two.

\section{The Double Bernoulli Path Algorithm}

We now explain how to derive an estimator for the Hessian of $\Psi_{\theta}$ from theorem~\ref{thm:main_theorem} and eq.~\eqref{eq:thm}. In this section, we use the notation $(X_{\theta}(t, x))$ to explicitly indicate that a process with generator $\mathbb{A}_{\theta}$ started from state $x$ at time~$0$, whenever needed.

\subsection{Construction of an unbiased estimator of second-order sensitivities}

\subsubsection{Estimation of the integral formula~\eqref{eq:thm}}

\textbf{Estimation of the first term in eq.~\eqref{eq:thm}.} The first term can be estimated as in the exact Integral Path Algorithm~(eIPA) simply by replacing first-order by second-order derivatives in the approach outlined in~\cite{gupta2018estimation} (see section~3 therein). We henceforth refer to the eIPA as the \emph{Bernoulli Path Algorithm~\emph{(BPA)} for first-order sensitivities}, following~\cite{gupta2021deepcme}. More specifically, introduce what we will call the first-order auxiliary processes:
\begin{align}
\label{eq:first_order_ap}
\begin{cases}
X_{\theta}^{(p, k, 1)}(t) &\coloneqq X_{\theta}(t, X_{\theta}(\sigma_{p})),\\
X_{\theta}^{(p, k, 2)}(t) &\coloneqq X_{\theta}(t, X_{\theta}(\sigma_{p}) + \zeta_{k}),  
\end{cases}
\end{align}
where $\sigma_{p}$ is the $p$-th jump time of $(X_{\theta}(t))$ which we will call the main process, taking $\sigma_{0}=0$ for convenience. Observe that $(X_{\theta}^{(p, k, 2)}(t))$ starts from the state $X_{\theta}(\sigma_{p})$ perturbed by the stoichiometry vector $\zeta_{k}$ of reaction $k$. Using the fact that trajectories of the main process are constant between jump times and the tower property of conditional expectations, observe in eq.~\eqref{eq:thm} that:
\scriptsize
\begin{align}
\sum_{k=1}^{M} \expectation\left[\textcolor[HTML]{D21312}{\int_{0}^{T}}\frac{\partial^2 \lambda_{k}}{\partial \theta_{i} \partial \theta_{j}}(X_{\theta}(s), \theta) \Delta_{\zeta_{k}} \Psi_{\theta}(X_{\theta}(s), f, T-s) \textcolor[HTML]{D21312}{ds}\right] &=\expectation\left[\textcolor[HTML]{D21312}{\sum_{p = 0 \atop \sigma_{p}<T}^{\infty}} \sum_{k=1}^{M}\frac{\partial^2 \lambda_{k}}{\partial \theta_{i} \partial \theta_{j}}(X_{\theta}(\sigma_{p}), \theta)\textcolor[HTML]{D21312}{\int_{\sigma_{p}}^{\sigma_{p+1}\wedge T}}\Delta_{\zeta_{k}} \Psi_{\theta}(X_{\theta}(\sigma_{p}), f, T-s) \textcolor[HTML]{D21312}{ds}\right]\nonumber\\
&=\expectation\left[\textcolor[HTML]{D21312}{\sum_{p = 0 \atop \sigma_{p}<T}^{\infty}} \sum_{k=1}^{M}\frac{\partial^2 \lambda_{k}}{\partial \theta_{i} \partial \theta_{j}}(X_{\theta}(\sigma_{p}), \theta)\textcolor[HTML]{D21312}{\int_{T - (\sigma_{p+1}\wedge T)}^{T - \sigma_{p}}}(f(X_{\theta}^{(p,k,2)}(s)) - f(X_{\theta}^{(p,k,1)}(s)) )\textcolor[HTML]{D21312}{ds}\right].\label{eq:first_term_thm_final}
\end{align}
\normalsize

\textbf{Estimation of the second and third terms in eq.~\eqref{eq:thm}.} Introduce the matrix $\Delta S_{\theta}(x, f, t)$ with:
\begin{equation}
[\Delta S_{\theta}(x, f, t)]_{ik} \coloneqq \Delta_{\zeta_k} S_{\theta}^{(i)}(x, f, t).   
\end{equation}

Observe that the second and third term in eq.~\eqref{eq:thm} are expressed in terms of $\Delta S_{\theta}(x, f, t) \text{J}_{\theta}\lambda(x, \theta)$ and its transpose, where $\text{J}_{\theta}\lambda(x, \theta)$ is the Jacobian of $\lambda$ with respect to $\theta$ at the point $(x, \theta)$. This means that we can simulate only one of the two terms and take its transpose to estimate the second-order sensitivity using eq.~\eqref{eq:thm}. This strategy in particular enforces the symmetry of the Hessian.\newline

Step 0. We know from~\cite{gupta2018estimation} that for any final time $T \in \mathbb{R}_{+}$, $S_{\theta}^{(j)} (x_0, f, T) = \mathbb{E}[s_{\theta}^{(j)}(f, T)]$ where:
\begin{equation}
\label{eq:first_order_bpa}
\begin{split}
s_{\theta}^{(j)}
(f, T) &= \int_{0}^{T}  \sum_{\ell=1}^{M}\frac{\partial \lambda_{\ell}}{\partial \theta_{j}}(X_{\theta}(u, x_{0}), \theta) \Delta_{\zeta_\ell} \Psi_{\theta}(X_{\theta}(u, x_{0}), f, T-u)du.
\end{split}
\end{equation}

Using eq.~\eqref{eq:first_order_bpa} and permuting the resulting two integrals before leveraging again the tower property, we obtain that:
\scriptsize
\begin{equation}
\label{eq:adapted_first_order_bpa}
\textcolor[HTML]{D21312}{\int_{t_{1}}^{t_{2}}} S_{\theta}^{(j)}(x_{0}, f, T-s)\textcolor[HTML]{D21312}{ds} = \expectation\Bigg[\textcolor[HTML]{8728e2}{\int_{0}^{T - t_{1}}} \sum_{\ell=1}^{M}\frac{\partial \lambda_{\ell}}{\partial \theta_{j}}(X_{\theta}(u, x_0), \theta)\textcolor[HTML]{D21312}{\int_{t_{1}}^{\min(t_{2}, T-u)}}
\left(f\left(X_{\theta}(T - s - u, X_{\theta}(u, x_{0})+\zeta_\ell)\right) - f\left(X_{\theta}(T -s-u, X_{\theta}(u, x_{0})\right)\right))\textcolor[HTML]{D21312}{ds}\textcolor[HTML]{8728e2}{du}\Bigg].
\end{equation}
\normalsize

Step 1. Now let us start by recalling that the second term of eq.~\eqref{eq:thm} can be expressed by definition as:
\scriptsize
\begin{align}
\label{eq:second_term_thm}
\begin{split}
\sum_{k=1}^{M}\expectation\Bigg[\textcolor[HTML]{D21312}{\int_{0}^{T}} \frac{\partial \lambda_{k}}{\partial \theta_{i}}(X_{\theta}(s), \theta) \Delta_{\zeta_k} S_{\theta}^{(j)}(X_{\theta}(s), f, T-s) \textcolor[HTML]{D21312}{ds}\Bigg] = \sum_{k=1}^{M}\expectation\Bigg[\textcolor[HTML]{D21312}{\int_{0}^{T}} &\frac{\partial \lambda_{k}}{\partial \theta_{i}}(X_{\theta}(s), \theta) S_{\theta}^{(j)}(X_{\theta}(s)+\zeta_{k}, f, T-s) \textcolor[HTML]{D21312}{ds}\Bigg]\\ 
&- \sum_{k=1}^{M}\expectation\Bigg[\textcolor[HTML]{D21312}{\int_{0}^{T}} \frac{\partial \lambda_{k}}{\partial \theta_{i}}(X_{\theta}(s), \theta) S_{\theta}^{(j)}(X_{\theta}(s)+\zeta_{k}, f, T-s) \textcolor[HTML]{D21312}{ds}\Bigg].
\end{split}
\end{align}
\normalsize

We will focus the derivations on the first term on the right-hand side of eq.~\eqref{eq:second_term_thm}. Using as before that trajectories of the main process are constant between jump times, it can be rewritten as:

\scriptsize
\begin{equation}
\sum_{k=1}^{M}\expectation\left[\textcolor[HTML]{D21312}{\int_{0}^{T}} \frac{\partial \lambda_{k}}{\partial \theta_{i}}(X_{\theta}(s), \theta) S_{\theta}^{(j)}(X_{\theta}(s)+\zeta_{k}, f, T-s) \textcolor[HTML]{D21312}{ds}\right] = \expectation\Bigg[\textcolor[HTML]{D21312}{\sum_{p = 0 \atop \sigma_{p}<T}^{\infty}} \sum_{k=1}^{M}\frac{\partial \lambda_{k}}{\partial \theta_{i}}(X_{\theta}(\sigma_{p}), \theta)\textcolor[HTML]{D21312}{\int_{\sigma_p}^{\sigma_{p+1} \wedge T}}S_{\theta}^{(j)}(X_{\theta}(\sigma_{p})+\zeta_{k}, f, T-s) \textcolor[HTML]{D21312}{ds}\Bigg].
\end{equation}
\normalsize

Using eq.~\eqref{eq:adapted_first_order_bpa} together with the notations introduced in eq.~\eqref{eq:first_order_ap}, it then immediately follows that:
\scriptsize
\begin{align}
\begin{split}
\label{eq:second_term_thm_intermediate}
\sum_{k=1}^{M}\expectation\Bigg[\textcolor[HTML]{D21312}{\int_{0}^{T}} \frac{\partial \lambda_{k}}{\partial \theta_{i}}(X_{\theta}(s), \theta) &S_{\theta}^{(j)}(X_{\theta}(s)+\zeta_{k}, f, T-s) \textcolor[HTML]{D21312}{ds}\Bigg] = \expectation\Bigg[\textcolor[HTML]{D21312}{\sum_{p = 0 \atop \sigma_{p}<T}^{\infty}} \sum_{k=1}^{M}\frac{\partial \lambda_{k}}{\partial \theta_{i}}(X_{\theta}(\sigma_{p}), \theta) \textcolor[HTML]{8728e2}{\int_{0}^{T - \sigma_p}}  \sum_{\ell=1}^{M}\frac{\partial \lambda_{\ell}}{\partial \theta_{j}}(X_{\theta}^{(p, k, 2)}(u), \theta)\\
&\textcolor[HTML]{D21312}{\int_{\sigma_{p}}^{\min(\sigma_{p+1} \wedge T, T-u)}}\left(f\left(X_{\theta}(T - s - u, X_{\theta}^{(p, k, 2)}(u)+\zeta_\ell)\right) - f\left(X_{\theta}(T - s - u, X_{\theta}^{(p, k, 2)}(u))\right) \right)\textcolor[HTML]{D21312}{ds}\textcolor[HTML]{8728e2}{du}\Bigg].
\end{split}
\end{align}
\normalsize

Now introduce what we will call the second-order auxiliary processes:
\begin{equation}
\begin{cases}
X_{\theta}^{(p, k, 1,q, \ell, 1)}(t) &\coloneqq X_{\theta}(t, X_{\theta}^{(p, k, 1)}(\sigma_q^{(p, k, 1)})),\\
X_{\theta}^{(p, k, 1,q, \ell, 2)}(t) &\coloneqq X_{\theta}(t, X_{\theta}^{(p, k, 1)}(\sigma_q^{(p, k, 1)})+\zeta_\ell),
\end{cases} \text{ and } \begin{cases}
X_{\theta}^{(p, k, 2,q, \ell, 1)}(t) &\coloneqq X_{\theta}(t, X_{\theta}^{(p, k, 2)}(\sigma_q^{(p, k, 2)})),\\
X_{\theta}^{(p, k, 2,q, \ell, 2)}(t) &\coloneqq X_{\theta}(t, X_{\theta}^{(p, k, 2)}(\sigma_q^{(p, k, 2)})+\zeta_\ell),
\end{cases}
\end{equation}

where $\sigma_q^{(p, k, \texttt{x})}$ denotes the $q$-th jump of the first-order process $(X_{\theta}^{(p, k, \texttt{x})}(t))$ with $\texttt{x} \in \{1,2\}$.  With the notations, using the fact that trajectories of the first-order processes are themselves constant between jump times, we can further rewrite eq.~\eqref{eq:second_term_thm_intermediate} as:
\scriptsize
\begin{align}
\begin{split}
\label{eq:second_term_thm_intermediate_part}
\sum_{k=1}^{M}\expectation\Bigg[\textcolor[HTML]{D21312}{\int_{0}^{T}} \frac{\partial \lambda_{k}}{\partial \theta_{i}}(X_{\theta}(s), \theta) &S_{\theta}^{(j)}(X_{\theta}(s)+\zeta_{k}, f, T-s) \textcolor[HTML]{D21312}{ds}\Bigg] =
\expectation\Bigg[\textcolor[HTML]{D21312}{\sum_{p = 0 \atop \sigma_{p}<T}^{\infty}} \sum_{k=1}^{M}\frac{\partial \lambda_{k}}{\partial \theta_{i}}(X_{\theta}(\sigma_{p}), \theta)\textcolor[HTML]{8728e2}{\sum_{q = 0 \atop \sigma_{q}^{(p,k,2)}<T - \sigma_p}^{\infty}} \sum_{\ell=1}^{M}\frac{\partial \lambda_{\ell}}{\partial \theta_{j}}(X_{\theta}^{(p,k,2)}(\sigma_q^{(p, k, 2)}), \theta)\\
&\textcolor[HTML]{8728e2}{\int_{\sigma_q^{(p,k,2)}}^{\sigma_{q+1}^{(p,k,2)}\wedge (T - \sigma_p)}} \textcolor[HTML]{D21312}{\int_{\sigma_{p}}^{\min(\sigma_{p+1} \wedge T, T-u)}}\left(f\left(X_{\theta}^{(p, k, 2,q, \ell, 2)}(T-s-u)\right) - f\left(X_{\theta}^{(p, k, 2,q, \ell, 1)}(T-s-u)\right)\right)\textcolor[HTML]{D21312}{ds}\textcolor[HTML]{8728e2}{du}\Bigg].
\end{split}
\end{align}
\normalsize

Step 2. The second term on the right-hand side of eq.~\eqref{eq:second_term_thm} can be treated similarly. Putting everything together, we get that:
\scriptsize
\begin{align}
\begin{split}
\label{eq:second_term_thm_final}
\sum_{k=1}^{M}\expectation\Bigg[\textcolor[HTML]{D21312}{\int_{0}^{T}} \frac{\partial \lambda_{k}}{\partial \theta_{i}}(X_{\theta}(s), \theta) &\Delta_{\zeta_k} S_{\theta}^{(j)}(X_{\theta}(s), f, T-s) \textcolor[HTML]{D21312}{ds}\Bigg]\\
&= \expectation\Bigg[\textcolor[HTML]{D21312}{\sum_{p = 0 \atop \sigma_{p}<T}^{\infty}} \sum_{k=1}^{M}\frac{\partial \lambda_{k}}{\partial \theta_{i}}(X_{\theta}(\sigma_{p}), \theta)\textcolor[HTML]{8728e2}{\sum_{q = 0 \atop \sigma_{q}^{(p,k,2)}<T - \sigma_p}^{\infty}} \sum_{\ell=1}^{M}\frac{\partial \lambda_{\ell}}{\partial \theta_{j}}(X_{\theta}^{(p,k,2)}(\sigma_q^{(p, k, 2)}), \theta)\\
&\textcolor[HTML]{8728e2}{\int_{\sigma_q^{(p,k,2)}}^{\sigma_{q+1}^{(p,k,2)}\wedge (T - \sigma_p)}} \textcolor[HTML]{D21312}{\int_{\sigma_{p}}^{\min(\sigma_{p+1} \wedge T, T-u)}}\left(f\left(X_{\theta}^{(p, k, 2,q, \ell, 2)}(T-s-u)\right) - f\left(X_{\theta}^{(p, k, 2,q, \ell, 1)}(T-s-u)\right)\right)\textcolor[HTML]{D21312}{ds}\textcolor[HTML]{8728e2}{du}\\
-&\textcolor[HTML]{D21312}{\sum_{p = 0 \atop \sigma_{p}<T}^{\infty}} \sum_{k=1}^{M}\frac{\partial \lambda_{k}}{\partial \theta_{i}}(X_{\theta}(\sigma_{p}), \theta) \textcolor[HTML]{8728e2}{\sum_{q = 0 \atop \sigma_{q}^{(p,k,1)}<T - \sigma_p}^{\infty}} \sum_{\ell=1}^{M}\frac{\partial \lambda_{\ell}}{\partial \theta_{j}}(X_{\theta}^{(p,k,1)}(\sigma_q^{(p, k, 1)}), \theta)\\
& \textcolor[HTML]{8728e2}{\int_{\sigma_q^{(p,k,1)}}^{\sigma_{q+1}^{(p,k,1)}\wedge (T - \sigma_p)}} \textcolor[HTML]{D21312}{\int_{\sigma_{p}}^{\min(\sigma_{p+1} \wedge T, T-u)}}\left(f\left(X_{\theta}^{(p, k, 1,q, \ell, 2)}(T-s-u)\right) - f\left(X_{\theta}^{(p, k, 1,q, \ell, 1)}(T-s-u)\right)\right)\textcolor[HTML]{D21312}{ds}\textcolor[HTML]{8728e2}{du}\Bigg].
\end{split}
\end{align}
\normalsize

\textbf{Estimation of eq.~\eqref{eq:thm}.} With eq.~\eqref{eq:first_term_thm_final} and~\eqref{eq:second_term_thm_final} we  have everything to simulate a random variable with the same expectation as $s_{\theta}(f, T)$ in eq.~\eqref{eq:thm} of theorem~\ref{thm:main_theorem}. Observe that we have introduced a hierarchy of auxiliary processes for this, which is constituted of the main, the first-order and the second-order processes. At each jump time $\sigma_p$ of the main process $(X_{\theta}(t))$, $2M$ first-order processes are generated until time $T-\sigma_p$. At each jump time $\sigma_q^{(p,k,\texttt{x})}$ of a first-order process $(X_{\theta}^{(p, k, \texttt{x})}(t))$, $2M$ second-order processes are generated. With this in mind, it is worth highlighting here that this hierarchy does not scale with the number of parameters but only with the number~$M$ of reactions.\newline

Observe that this approach allows us to generate one sample of the whole Hessian for multiple output functions from a single trajectory of the main process, which is also the case for the GT method. For non-centred finite differences, $1+ d + d(d+1)/2$ trajectories are needed to get one sample of the Hessian while $1+2d^2$ trajectories are needed for centred finite differences.
Also mote that estimates for the average output $\Psi_{\theta}(x,f,t)$ and its Jacobian can be obtained from the same trajectory of the main process, like with the GT method. For non-centred finite differences, the trajectory corresponding to the unperturbed parameter can be used to estimate the average output while the $1+d$ trajectories corresponding to the unperturbed parameter and those where a single element is perturbed can be used for the Jacobian. For centred finite differences, the trajectory corresponding to the unperturbed parameter can be used to estimate the average output while the $2d$ trajectories with the same parameter perturbed twice can be exploited to estimate the Jacobian, albeit on a grid with double the step-size.
Finally, it is clear from eq.~\eqref{eq:thm} and the discussion above that the approach developed in~\cite{gupta2018estimation} using tau-leap simulations to obtain a (biased) estimator of first-order sensitivities could be adapted here~(see the presentation of $\tau$IPA in section 3 of~\cite{gupta2018estimation}).

\subsubsection{Variance reduction of the estimator}
In order to reduce the variance of the estimator while keeping it unbiased, we couple together:

\begin{itemize}
    \item the first-order processes $X_{\theta}^{(p, k, 1)}(t)$ and  $X_{\theta}^{(p, k, 2)}(t)$,
    \item the second-order processes $X_{\theta}^{(p, k, \texttt{x}, q, \ell, 1)}(t)$ and  $X_{\theta}^{(p, k, \texttt{x}, q, \ell, 2)}(t)$ where $\texttt{x} \in \{1,2\}$,
\end{itemize}

using the first-order split coupling introduced in~\cite{anderson2012efficient}. This means for example that we use:
\begin{align}
X_{\theta}^{(p, k, 1)}(t) \coloneqq& \text{ }X_{\theta}(t, X_{\theta}(\sigma_{p})) + \sum_{k'=1}^{M}\zeta_{k'} Y_{k'}\left(\int_0^t \left(\lambda_{k'}(X_{\theta}^{(p, k', 1)}(s), \theta) \wedge \lambda_{k'}(X_{\theta}^{(p, k', 2)}(s), \theta) \right)ds\right)\nonumber \\
&+ \sum_{k'=1}^{M}\zeta_{k'} Y_{k'}^{(1)}\left(\int_0^t \left(\lambda_{k'}(X_{\theta}^{(p, k', 1)}(s), \theta) - \lambda_{k'}(X_{\theta}^{(p, k', 1)}(s), \theta) \wedge \lambda_{k'}(X_{\theta}^{(p, k', 2)}(s), \theta) \right)ds\right),\\
X_{\theta}^{(p, k', 2)}(t) \coloneqq& \text{ }X_{\theta}(t, X_{\theta}(\sigma_{p}))+\zeta_k + \sum_{k'=1}^{M}\zeta_{k'} Y_{k'}\left(\int_0^t \left(\lambda_{k'}(X_{\theta}^{(p, k', 1)}(s), \theta) \wedge \lambda_{k'}(X_{\theta}^{(p, k', 2)}(s), \theta) \right)ds\right)\nonumber \\
&+ \sum_{k'=1}^{M}\zeta_{k'} Y_{k'}^{(2)}\left(\int_0^t \left(\lambda_{k'}(X_{\theta}^{(p, k', 2)}(s), \theta) - \lambda_{k'}(X_{\theta}^{(p, k', 1)}(s), \theta) \wedge \lambda_{k'}(X_{\theta}^{(p, k', 2)}(s), \theta) \right)ds\right),
\end{align}

where $\{(Y_{k'}(t)), (Y_{k'}^{(1)}(t)), (Y_{k'}^{(2)}(t))\}$ are independent, unit-rate Poisson processes. We write $\sigma_q^{(p,k)}$ the $q$-th jump of the first-order process $(X_{\theta}^{(p,k,1)}(t), X_{\theta}^{(p,k,2)}(t))$ and generate second-order processes at each of these jump times~(see panel A in figure~\ref{fig:graphical_bpa}). Note that the processes could alternatively be coupled through the Common Reaction Path scheme, leading to a hierarchy with $(M+1)^2$ leaves instead of $4M^2$~\cite{rathinam2010efficient}. On the other hand, this alternative coupling tends to be less tight and might result in a larger estimator variance~\cite{anderson2012efficient}.

\begin{figure}[h!]
\centering
\includegraphics[width=0.7\linewidth]{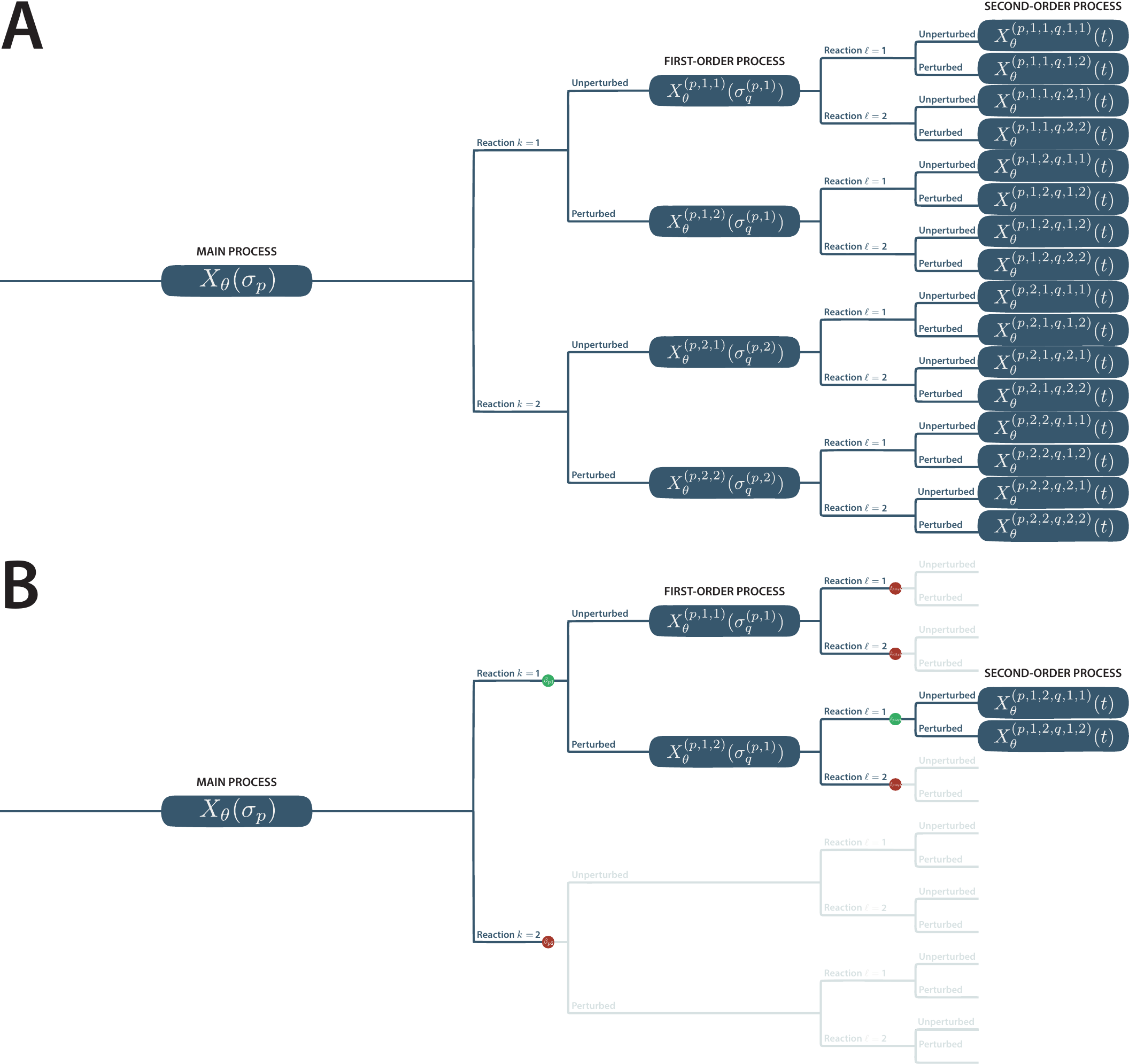}
\caption{\label{fig:graphical_bpa} Representation of the Double BPA as a balanced binary tree where siblings are coupled. (A)~The full tree has $4M^2$ leaves where $M$ is the number of reactions. (B)~Bernoulli random variables are symbolised by dots. They are green when they equal 1 (meaning two coupled auxiliary processes get simulated) and red otherwise. After the introduction of these variables, only parts of the tree get generated while still leaving the estimator unbiased.}
\end{figure}

\subsubsection{Modulation of the computational cost of the estimator}

\textbf{Introduction of Bernoulli random variables.} Introduce the family of independent random variables $\{\beta_{pk}\}$ with Bernoulli distribution of parameter $\gamma_{pk}$ as well as $\{\beta_{pk1q\ell}\}$ and $\{\beta_{pk2q\ell}\}$ with parameter $\gamma_{pk1q\ell}$ and $\gamma_{pk2q\ell}$. These random variables can be inserted in eq.~\eqref{eq:first_term_thm_final} and \eqref{eq:second_term_thm_final} to modulate the computational cost per trajectory of the estimator without introducing any bias. More specifically, we can proceed as follows for eq.~\eqref{eq:first_term_thm_final}:
\scriptsize
\begin{align}
\label{eq:bernoulli_first}
\sum_{k=1}^{M}\expectation\Bigg[\int_{0}^{T}\frac{\partial^2 \lambda_{k}}{\partial \theta_{i} \partial \theta_{j}}(X_{\theta}(s), \theta) &\Delta_{\zeta_{k}} \Psi_{\theta}(X_{\theta}(s), f, T-s) ds\Bigg]\nonumber \\
= &\expectation\Bigg[\sum_{p = 0 \atop \sigma_{p}<T}^{\infty}\sum_{k=1}^{M}\frac{\partial^2 \lambda_{k}}{\partial \theta_{i} \partial \theta_{j}}(X_{\theta}(\sigma_{p}), \theta) \textcolor[HTML]{005b96}{\frac{\beta_{pk}}{\gamma_{pk}}}\int_{T - (\sigma_{p+1}\wedge T)}^{T - \sigma_{p}}(f(X_{\theta}^{(p,k,2)}(s), \theta) - f(X_{\theta}^{(p,k,1)}(s), \theta))ds\Bigg],
\end{align}
\normalsize

and for eq.~\eqref{eq:second_term_thm_intermediate_part}:
\scriptsize
\begin{multline} 
\label{eq:bernoulli_second}
\sum_{k=1}^{M}\expectation\left[\int_{0}^{T} \frac{\partial \lambda_{k}}{\partial \theta_{i}}(X_{\theta}(s), \theta) S_{\theta}^{(j)}(X_{\theta}(s)+\zeta_{k}, f, T-s) ds\right] =
\expectation\Bigg[\sum_{p = 0 \atop \sigma_{p}<T}^{\infty}\sum_{k=1}^{M}\frac{\partial \lambda_{k}}{\partial \theta_{i}}(X_{\theta}(\sigma_{p}), \theta)\textcolor[HTML]{005b96}{\frac{\beta_{pk}}{\gamma_{pk}}}\sum_{q = 0 \atop \sigma_{q}^{(p,k)}<T - \sigma_p}^{\infty}\sum_{\ell=1}^{M}\frac{\partial \lambda_{\ell}}{\partial \theta_{j}}(X_{\theta}^{(p,k,2)}(\sigma_q^{(p, k)}), \theta)\textcolor[HTML]{e55bb7}{\frac{\beta_{pk2q\ell}}{\gamma_{pk2q\ell}}}\\
\int_{\sigma_q^{(p,k)}}^{\sigma_{q+1}^{(p,k)}\wedge (T - \sigma_p)}\int_{\sigma_{p}}^{\min(\sigma_{p+1} \wedge T, T-u)}\left(f\left(X_{\theta}^{(p, k, 2,q, \ell, 2)}(T-s-u)\right) - f\left(X_{\theta}^{(p, k, 2,q, \ell, 1)}(T-s-u)\right)\right)dsdu\Bigg].
\end{multline}
\normalsize

\textbf{Choice of the parameter $\gamma_{pk}$ for $\beta_{pk}$.} Let $\mu_1$ be the upper bound for the expected number of desired pairs of first-order auxiliary processes $(X_{\theta}^{(p, k, 1)}(t))$ and  $(X_{\theta}^{(p, k, 2)}(t))$ per trajectory of the main process $(X_{\theta}(t))$. Choose $\gamma_{pk}$ to be:
\begin{equation}
\label{eq:first_order_bernoulli}
\gamma_{pk} \coloneqq 1 \wedge \frac{\delta_{p}}{C_1} \left(\sum_{i=1}^{d}2\bigg|\frac{\partial \lambda_{k}}{\partial \theta_{i}}(X_\theta(\sigma_{p}), \theta)\bigg|+ \sum_{i,j} \bigg|\frac{\partial^2 \lambda_{k}}{\partial \theta_{i}\partial \theta_{j}}(X_\theta(\sigma_{p}), \theta)\bigg|\right) \text{, with: } \delta_{p} \coloneqq (\sigma_{p+1} \wedge T) - \sigma_{p}.
\end{equation}

Observe that for a given trajectory of the main process and the associated auxiliary processes, the average number of pairs of first-order auxiliary processes is bounded from above as follows:
\begin{align}
\sum_{p = 0 \atop \sigma_{p}<T}^{\infty}\sum_{k=1}^{M} \mathbb{E} \big[\beta_{pk}\big] = \sum_{p = 0 \atop \sigma_{p}<T}^{\infty} \sum_{k=1}^{M}  \gamma_{pk} &\leq  \sum_{p = 0 \atop \sigma_{p}<T}^{\infty} \sum_{k=1}^{M} \frac{\delta_{p}}{C_1} \left(\sum_{i=1}^{d}2\bigg|\frac{\partial \lambda_{k}}{\partial \theta_{i}}(X_\theta(\sigma_{p}), \theta)\bigg|+ \sum_{i,j} \bigg|\frac{\partial^2 \lambda_{k}}{\partial \theta_{i}\partial \theta_{j}}(X_\theta(\sigma_{p}), \theta)\bigg|\right)\nonumber \\
&=  \frac{1}{C_1} \int_{0}^{T} \sum_{k=1}^{M} \left(\sum_{i=1}^{d}2\bigg|\frac{\partial \lambda_{k}}{\partial \theta_{i}}(X_\theta(s), \theta)\bigg|+ \sum_{i,j} \bigg|\frac{\partial^2 \lambda_{k}}{\partial \theta_{i}\partial \theta_{j}}(X_\theta(s), \theta)\bigg|\right)ds.
\end{align}

Choosing the normalisation constant $C_1$ to be: 
\begin{equation}
\label{eq:c_1}
C_1 \coloneqq \frac{1}{\mu_{1}}\mathbb{E}\left[\int_{0}^{T} \sum_{k=1}^{M} \left(\sum_{i=1}^{d}2\bigg|\frac{\partial \lambda_{k}}{\partial \theta_{i}}(X_\theta(s), \theta)\bigg|+ \sum_{i,j} \bigg|\frac{\partial^2 \lambda_{k}}{\partial \theta_{i}\partial \theta_{j}}(X_\theta(s), \theta)\bigg|\right)ds\right],    
\end{equation}

we get as desired that the average number of pairs of first-order processes per trajectory is upper bounded by $\mu_{1}$, i.e.:
\begin{equation}
\label{ineq:upper_bound}
\mathbb{E}\left[\sum_{p = 0 \atop \sigma_{p}<T}^{\infty}\sum_{k=1}^{M}  \beta_{pk}\right] \leq \mu_{1}.
\end{equation}

\textbf{Choice of the parameters $\gamma_{pk1q\ell}$ and $\gamma_{pk2q\ell}$ for $\beta_{pk1q\ell}$ and $\beta_{pk2q\ell}$.} Similarly, write $\mu_2$ the upper bound for the expected number of desired pairs of second-order auxiliary processes $(X_{\theta}^{(p, k, \texttt{x}, q, \ell, 1)}(t))$ and  $(X_{\theta}^{(p, k, \texttt{x}, q, \ell, 2)}(t))$ per trajectory of the main process once the number of first-order auxiliary processes is modulated by $\gamma_{pk}$. Let us define $\gamma_{pk1q\ell}$ and $\gamma_{pk2q\ell}$ as:
\begin{equation}    
\label{eq:second_order_bernoulli}
\begin{cases}
\gamma_{pk1q\ell} &\coloneqq 1 \wedge \frac{\delta_{p}\delta_{q}^{(p,k)}}{C_2} \sum_{j=1}^{d}\bigg|\frac{\partial \lambda_{\ell}}{\partial \theta_{j}}(X_{\theta}^{(p,k,1)}(\sigma_q^{(p, k)}), \theta)\bigg|,\\
\gamma_{pk2q\ell} &\coloneqq 1 \wedge \frac{\delta_{p}\delta_{q}^{(p,k)}}{C_2} \sum_{j=1}^{d}\bigg|\frac{\partial \lambda_{\ell}}{\partial \theta_{j}}(X_{\theta}^{(p,k,2)}(\sigma_q^{(p, k)}), \theta)\bigg|,
\end{cases} \text{ with: } \delta_{q}^{(p,k)} \coloneqq (\sigma_{q+1}^{(p,k)} \wedge (T - \sigma_p)) - \sigma_{q}^{(p,k)}.
\end{equation}

Choosing the second normalisation constant $C_2$ to be:
\begin{equation}
\label{eq:c_2}
C_{2} \coloneqq \frac{1}{\mu_{2}}  \mathbb{E}\left[\sum_{p = 0 \atop \sigma_{p}<T}^{\infty} \sum_{k=1}^{M} 1_{\gamma_{pk}=1}\delta_{p}\int_{0}^{T-\sigma_{p}}\sum_{\ell=1}^{M}\sum_{j=1}^{d} \left(\bigg|\frac{\partial \lambda_{\ell}}{\partial \theta_{j}}(X_{\theta}^{(p,k,1)}(u), \theta)\bigg|+\bigg|\frac{\partial \lambda_{\ell}}{\partial \theta_{j}}(X_{\theta}^{(p,k,2)}(u), \theta)\bigg|\right)du\right],   
\end{equation}

we obtain as desired that the average number of pairs of second-order processes per trajectory is at most $\mu_{2}$, i.e.:
\begin{equation}
\mathbb{E}\left[\sum_{p = 0 \atop \sigma_{p}<T}^{\infty}\sum_{k=1}^{M} 1_{\gamma_{pk}=1} \sum_{q = 0 \atop \sigma_{q}^{(p,k)}<T - \sigma_p}^{\infty}  \sum_{\ell=1}^{M} \left(\beta_{pk1q\ell} + \beta_{pk2q\ell}\right)\right] \leq \mu_{2}.    
\end{equation}

The estimator resulting from this is represented graphically in panel B of figure~\ref{fig:graphical_bpa}.
The introduction of a dependency of $\gamma_{pk}$, $\gamma_{pk1q\ell}$ and $\gamma_{pk2q\ell}$ on the first- and second-order derivative of the propensity $\lambda$ in eq.~\eqref{eq:first_order_bernoulli} and~\eqref{eq:second_order_bernoulli} is motivated by the fact that if the parameters $\theta_i$ and $\theta_j$ have a large influence on $\lambda_k$ at the current state then there is a greater chance that it will affect the final value of the sample as seen in eq.~\eqref{eq:second_term_thm_final}. It is clear how to adapt the definition of the Bernoulli parameters in the case when the derivative with respect to only a specific parameter combination $(\theta_i, \theta_j)$ is of interest.
Before moving to the numerical examples, let us emphasise again that the introduction of $\{\beta_{pk}\}$, $\{\beta_{pk1q\ell}\}$ and $\{\beta_{pk2q\ell}\}$ as well as the specific choice of normalisation constant $C_1$ and $C_2$ does not introduce a bias in the resulting estimator. Also, if modulation of the computational cost per trajectory is not desired, setting $\gamma_{pk} = \gamma_{pk1q\ell}= \gamma_{pk2q\ell} \equiv 1$ results in auxiliary processes being generated at each jump time.

\subsection{Implementation of the Double BPA}

We first introduce the methods which are specific to the Double BPA and then the standard ones for the simulation of SRNs.

\subsubsection{Methods specific to the Double BPA}

To ensure a smooth transition from the estimator introduced above and the pseudo-code below, we provide in figure~\ref{fig:graphical_pseudocode} the correspondence between the hierarchy of processes defined by the Double BPA and the quantities being updated in the pseudo-code. In this code, we highlight in salmon the parts corresponding to the simulation of SRNs using the Stochastic Simulation Algorithm~(SSA)~\cite{gillespie1976general,gillespie1977exact} or the coupled modified Next Reaction Method~(mNRM)~\cite{anderson2007modified} to clearly dissociate them from the Double BPA-specific routines.\newline

\begin{figure*}[h!]
\centering
\includegraphics[width=0.6\linewidth]{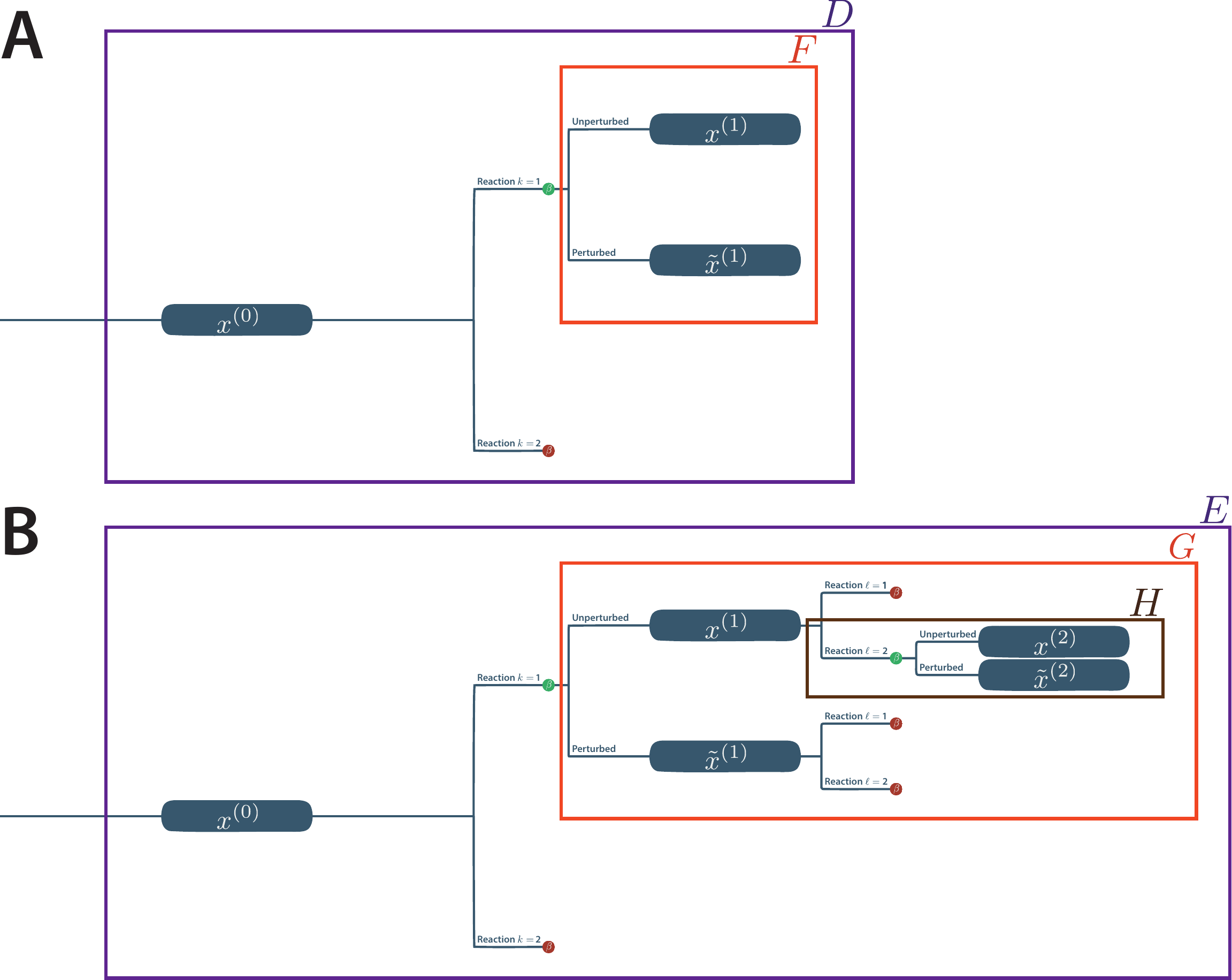}
\caption{\label{fig:graphical_pseudocode} Correspondence between the hierarchy of processes defined by the Double BPA and the quantities being updated in the pseudo-code.}
\end{figure*}

Associated to the variables in the pseudo-code, superscript $(0)$ refers to the main process, $(1)$ to first-order processes and $(2)$ to second-order processes. The constants $C_1$ and $C_2$ defined in eq.~\eqref{eq:c_1} and \eqref{eq:c_2} are estimated by running the BPA a limited number of times. Estimation of these constants need not be accurate as their specific value will only modulate the computational cost per trajectory without compromising the estimator's unbiasedness. $\mathcal{B}(\gamma)$ with $\gamma \in [0,1]$ is a Bernoulli variable with parameter $p$.

\begin{algorithm}[H]
\caption{Generate one sample with the Double BPA by computing the integral over the trajectory of the main process from $0$ to $T$ in eq.~\eqref{eq:first_term_thm_final} and~\eqref{eq:second_term_thm_final}}
\begin{algorithmic}[1]
\Function{GenerateBPASecondOrder}{$x^{(0)}$, $T^{(0)}$, $C$}
\State $D_{ijk} \leftarrow 0$, $E_{ijk} \leftarrow 0$ for $i,j \in [\![1, d]\!]$ and $k \in [\![1, M]\!]$ 
\State \tikzmark{highlight1-start}$t^{(0)} \leftarrow 0$
\While{$t^{(0)} < T^{(0)}$}
\State $(\Delta t^{(0)}, k^{*}) \leftarrow$ \Call{NextReactionSSA}{$x$}
\State $\Delta t^{(0)} \leftarrow \min(\Delta t^{(0)}, T^{(0)} - t^{(0)})$\tikzmark{highlight1-end}
\For{$k \in [\![1, M]\!]$}
\State $\gamma \leftarrow  1 \wedge \frac{\Delta t^{(0)}}{C_1} \left(\sum_{i=1}^{d}2\left|\frac{\partial \lambda_{k}}{\partial \theta_{i}}(x^{(0)}, \theta)\right|+ \sum_{i,j} \left|\frac{\partial^2 \lambda_{k}}{\partial \theta_{i}\partial \theta_{j}}(x^{(0)}, \theta)\right|\right)$
\State $\beta \sim \mathcal{B}(\gamma)$
\If{$\beta = 1$} \Comment{start (coupled) first-order auxiliary paths for reaction $k$}
\State $x^{(1)} \leftarrow x^{(0)}$, $\Tilde{x}^{(1)} \leftarrow x^{(0)} + \zeta_{k}$
\State \parbox[t]{420pt}{($\Delta D$, $\Delta E$) $\leftarrow$  \Call{FirstOrderDifference}{$x^{(1)}$, $\Tilde{x}^{(1)}$, $t^{(0)}$, $\Delta t^{(0)}$, $T^{(0)}$, $C_2$}\strut}
\State $D_{ijk} \leftarrow D_{ijk} + \frac{\partial^2 \lambda_{k}}{\partial \theta_{i}\partial \theta_{j}}(x^{(0)}, \theta)  \times  \frac{\Delta D}{\gamma}$ for $i,j \in [\![1, d]\!]$
\State $E_{ijk} \leftarrow E_{ijk} + \frac{\partial \lambda_{k}}{\partial \theta_{i}}(x^{(0)}, \theta) \times \frac{\Delta E_j}{\gamma}$ for $i,j \in [\![1, d]\!]$
\EndIf
\EndFor 
\State \tikzmark{highlight2-start}$t^{(0)} \leftarrow t^{(0)} + \Delta t^{(0)}$
\State $x^{(0)} \leftarrow$ \Call{UpdateState}{$x^{(0)}$, $ k^{*}$}
\EndWhile\tikzmark{highlight2-end}
\State $[\mathcal{H}_1]_{ij} \leftarrow \sum_{k=1}^{M} D_{ijk} $
\State $[\mathcal{H}_2]_{ij} \leftarrow \sum_{k=1}^{M} E_{ijk} $
\State $\mathcal{H} \leftarrow \mathcal{H}_1 + \mathcal{H}_2 + \mathcal{H}_2^{T}$
\State \Return $\mathcal{H}$
\EndFunction
\end{algorithmic}
\end{algorithm}

\highlight{highlight1}{red}{{-1cm,1em}}{{11.3cm,-0.3em}}
\highlight{highlight2}{red}{{-1.65cm,1em}}{{14.8cm,-0.3em}}

Eq.~\eqref{eq:first_term_thm_final} requires the computation of an integral over a first-order trajectory from $T - (\sigma_{p+1}\wedge T)$ to $T - \sigma_{p}$ while eq.~\eqref{eq:second_term_thm_final} requires one from $0$ to $T - \sigma_{p}$. For convenience, we therefore use the Markov property of first-order trajectories to compute the required integrals first from $0$ to $T - (\sigma_{p+1}\wedge T)$ and then from $T - (\sigma_{p+1}\wedge T)$ to $T - \sigma_{p}$ in algorithm~\ref{alg:algo_2}.

\begin{algorithm}[H]
\caption{Compute the integral over a first-order trajectory from $T - (\sigma_{p+1}\wedge T)$ to $T - \sigma_{p}$ in eq.~\eqref{eq:first_term_thm_final} and from $0$ to $T - \sigma_{p}$ in eq.~\eqref{eq:second_term_thm_final}}
\label{alg:algo_2}
\begin{algorithmic}[1]
\Function{FirstOrderDifference}{$x^{(1)}$, $\Tilde{x}^{(1)}$, $t^{(0)}$, $\Delta t^{(0)}$, $T^{(0)}$, $C_2$}
\State $\Delta D \leftarrow 0$, $\Delta E \leftarrow 0_d$
\State $\varphi^{(1)} \leftarrow 0$, $T^{(1)} \leftarrow T^{(0)} - (t^{(0)} + \Delta t^{(0)})$ \Comment{integrate from $0$ to $T - (\sigma_{p+1} \wedge T)$}
\State \parbox[t]{460pt}{($x^{(1)}$, $\Tilde{x}^{(1)}$, $\Delta D$, $\Delta E$) $\leftarrow$  \Call{FirstOrderDifferenceSegment}{$x^{(1)}$, $\Tilde{x}^{(1)}$, $\varphi^{(1)}$, $T^{(1)}$, $\Delta D$, $\Delta E$, $t^{(0)}$, $\Delta t^{(0)}$, $T^{(0)}$, $C_2$}\strut}
\State $\varphi^{(1)} \leftarrow T^{(0)} - (t^{(0)} + \Delta t^{(0)}$), $T^{(1)} \leftarrow T^{(0)} - t^{(0)}$ \Comment{integrate from $T - (\sigma_{p+1} \wedge T)$ to $T - \sigma_{p}$}
\State  \parbox[t]{460pt}{($x^{(1)}$, $\Tilde{x}^{(1)}$, $\Delta D$, $\Delta E$) $\leftarrow$  \Call{FirstOrderDifferenceSegment}{$x^{(1)}$, $\Tilde{x}^{(1)}$, $\varphi^{(1)}$, $T^{(1)}$, $\Delta D$, $\Delta E$, $t^{(0)}$, $\Delta t^{(0)}$, $T^{(0)}$, $C_2$}\strut}
\State \Return $\Delta D$, $\Delta E$
\EndFunction
\end{algorithmic}
\end{algorithm}

\newpage
\begin{algorithm}[H]
\caption{Compute the integral over of a first-order trajectory either from $0$ to $T - (\sigma_{p+1}\wedge T)$ in eq.~\eqref{eq:second_term_thm_final} or from $T - (\sigma_{p+1}\wedge T)$ to $T - \sigma_{p}$ in eq.~\eqref{eq:first_term_thm_final} and~\eqref{eq:second_term_thm_final}}
\begin{algorithmic}[1]
\Function{FirstOrderDifferenceSegment}{$x^{(1)}$, $\Tilde{x}^{(1)}$, $\varphi^{(1)}$, $T^{(1)}$, $\Delta D$, $\Delta E$, $t^{(0)}$, $\Delta t^{(0)}$, $T^{(0)}$, $C_2$}
\State $F \leftarrow 0$, $G _{j\ell} \leftarrow 0$ for $j \in [\![1, d]\!]$ and $\ell \in [\![1, M]\!]$ 
\State  \tikzmark{highlight3-start}$t^{(1)} \leftarrow \varphi^{(1)}$
\State $I_{\ell m} \leftarrow 0$ and $\tau_{\ell m} \sim \mathcal{E}(1)$ for $\ell \in [\![1, M]\!]$ and $m \in [\![1, 3]\!]$
\While{$t^{(1)} < T^{(1)}$}
\State ($\Delta t^{(1)}$, $k^{*}$, $m^{*}$, $I$, $\tau$) $\leftarrow$ \Call{NextReactionCoupledNRM}{$x^{(1)}$, $\Tilde{x}^{(1)}$, $I$, $\tau$}
\State $\Delta t^{(1)} \leftarrow \min(\Delta t^{(1)}, T^{(1)}-t^{(1)})$\tikzmark{highlight3-end}
\If{$T^{(1)} = T^{(0)} - t^{(0)}$} \Comment{integral over first-order trajectory in eq.~\eqref{eq:first_term_thm_final}}
\State $\Delta F \leftarrow$ \Call{FirstTermDifference}{$x^{(1)}$, $\Tilde{x}^{(1)}$, $\Delta t^{(1)}$}
\State $F \leftarrow F + \Delta F$
\EndIf
\For{\texttt{x} $\in \{x^{(1)},\Tilde{x}^{(1)}\}$} \Comment{integral over first-order trajectory in eq.~\eqref{eq:second_term_thm_final}}
\State $u_{\text{min}} \leftarrow t^{(1)}$, $u_{\text{max}} \leftarrow t^{(1)} + \Delta t^{(1)}$
\State $r_{\text{bott., left}} \leftarrow T^{(0)} - (t^{(0)}+\Delta t^{(0)}) - u_{\min}$, $r_{\text{top, left}} \leftarrow T^{(0)} - t^{(0)} - u_{\min}$
\State $r_{\text{bott., right}} \leftarrow T^{(0)} - (t^{(0)}+\Delta t^{(0)}) - u_{\max}$, $r_{\text{top, right}} \leftarrow T^{(0)} - t^{(0)} - u_{\max}$
\For{$\ell \in [\![1, M]\!]$}
\State $\gamma \leftarrow 1 \wedge \frac{\Delta t^{(0)}\Delta t^{(1)}}{C_2} \sum_{j=1}^{d}\bigg|\frac{\partial \lambda_{\ell}}{\partial \theta_{j}}(\texttt{x}, \theta)\bigg|$
\State $\beta \sim \mathcal{B}(\gamma)$
\If{$\beta = 1$} \Comment{start (coupled) second-order auxiliary paths for reaction $\ell$}
\State $x^{(2)} \leftarrow \texttt{x}$, $\Tilde{x}^{(2)} \leftarrow \texttt{x} + \zeta_{\ell}$
\State \parbox[t]{400pt}{$\Delta G \leftarrow$ \Call{SecondOrderDifference}{$x^{(2)}$, $\Tilde{x}^{(2)}$, $\varphi^{(1)}$, $t^{(0)}$, $\Delta t^{(0)}$, $T^{(0)}$, $r_{\text{bott., left}}$, $r_{\text{top, left}}$, $r_{\text{bott., right}}$, $r_{\text{top, right}}$, $u_{\text{min}}$, $u_{\text{max}}$}}
\If{$\texttt{x} = x^{(1)}$} 
\State $G_{j\ell} \leftarrow G_{j\ell} - \frac{\partial \lambda_{\ell}}{\partial \theta_{j}}(\texttt{x}, \theta) \frac{\Delta G}{\gamma}$ for $j \in [\![1, d]\!]$
\ElsIf{$\texttt{x} = \Tilde{x}^{(1)}$}
\State $G_{j\ell} \leftarrow G_{j\ell} + \frac{\partial \lambda_{\ell}}{\partial \theta_{j}}(\texttt{x}, \theta) \frac{\Delta G}{\gamma}$ for $j \in [\![1, d]\!]$
\EndIf
\EndIf 
\EndFor
\EndFor
\State \tikzmark{highlight4-start}$t^{(1)} \leftarrow$ $t^{(1)} + \Delta t^{(1)}$
\If{$t^{(1)} \leq T^{(1)}$}
\State ($x^{(1)}$, $\Tilde{x}^{(1)}$) $\leftarrow$ \Call{UpdateCoupledState}{$x^{(1)}$, $\Tilde{x}^{(1)}$, $k^{*}$, $m^{*}$}
\EndIf
\EndWhile\tikzmark{highlight4-end}
\State $G_{j} \leftarrow \sum_{\ell=1}^{M} G_{j\ell}$
\State $\Delta D \leftarrow \Delta D + F$, $\Delta E \leftarrow \Delta E + G$
\State \Return $x^{(1)}$, $\Tilde{x}^{(1)}$, $\Delta D$, $\Delta E$
\EndFunction
\end{algorithmic}
\end{algorithm}

\highlight{highlight3}{red}{{-1cm,1em}}{{11.3cm,-0.3em}}
\highlight{highlight4}{red}{{-1.65cm,1em}}{{14.8cm,-0.3em}}

\begin{algorithm}[H]
\caption{Update the integral over a first-order trajectory in eq.~\eqref{eq:first_term_thm_final}}
\begin{algorithmic}[1]
\Function{FirstTermDifference}{$x^{(1)}$, $\Tilde{x}^{(1)}$, $\Delta t^{(1)}$}
\State $f_d \leftarrow f(\Tilde{x}^{(1)}) - f(x^{(1)})$
\State $\Delta F \leftarrow f_d \times \Delta t^{(1)}$
\State \Return $\Delta F$
\EndFunction
\end{algorithmic}
\end{algorithm}

Eq.~\eqref{eq:second_term_thm_final} requires the computation of an integral over of a second-order trajectory over the domain $[\sigma_q^{(p,k)},\sigma_{q+1}^{(p,k)}\wedge (T - \sigma_p)] \times [\sigma_{p},\min(\sigma_{p+1} \wedge T, T-u)]$. 
To explain how to perform this integration, let us change the integration variable in the integral against $s$ in eq.~\eqref{eq:second_term_thm_intermediate_part} by setting $r = T- s -u$ and consider $\texttt{x} \in \{1,2\}$. When $\min(\sigma_{p+1} \wedge T, T-u) = \sigma_{p+1} \wedge T$, this translates into:

\scriptsize
\begin{align}
\label{eq:before_switching}
\begin{split}
\textcolor[HTML]{8728e2}{\int_{\sigma_q^{(p,k)}}^{\sigma_{q+1}^{(p,k)}\wedge (T - \sigma_p)}}\textcolor[HTML]{D21312}{\int_{\sigma_{p}}^{\min(\sigma_{p+1} \wedge T, T-u)}}&\left(f\left(X_{\theta}^{(p, k, \texttt{x}, q, \ell, 2)}(T-s-u)\right) - f\left(X_{\theta}^{(p, k, \texttt{x}, q, \ell, 1)}(T-s-u)\right)\right)\textcolor[HTML]{D21312}{ds}\textcolor[HTML]{8728e2}{du}\\ &= \textcolor[HTML]{8728e2}{\int_{\sigma_q^{(p,k)}}^{\sigma_{q+1}^{(p,k)}\wedge (T - \sigma_p)}}\textcolor[HTML]{D21312}{\int_{T - (\sigma_{p+1} \wedge T) - u}^{T - \sigma_{p} - u}} \left(f\left(X_{\theta}^{(p, k, \texttt{x}, q, \ell, 2)}(r)\right) - f\left(X_{\theta}^{(p, k, \texttt{x}, q, \ell, 1)}(r)\right)\right)\textcolor[HTML]{D21312}{dr}\textcolor[HTML]{8728e2}{du},
\end{split}
\end{align}
\normalsize

while when $\min(\sigma_{p+1} \wedge T, T-u) = T-u$, we have:

\scriptsize
\begin{align}
\label{eq:after_switching}
\begin{split}
\textcolor[HTML]{8728e2}{\int_{\sigma_q^{(p,k)}}^{\sigma_{q+1}^{(p,k)}\wedge (T - \sigma_p)}}\textcolor[HTML]{D21312}{\int_{\sigma_{p}}^{\min(\sigma_{p+1} \wedge T, T-u)}}&\left(f\left(X_{\theta}^{(p, k, \texttt{x}, q, \ell, 2)}(T-s-u)\right) - f\left(X_{\theta}^{(p, k, \texttt{x}, q, \ell, 1)}(T-s-u)\right)\right)\textcolor[HTML]{D21312}{ds}\textcolor[HTML]{8728e2}{du}\\ &= \textcolor[HTML]{8728e2}{\int_{\sigma_q^{(p,k)}}^{\sigma_{q+1}^{(p,k)}\wedge (T - \sigma_p)}}\textcolor[HTML]{D21312}{\int_{0}^{T - \sigma_{p} - u}} \left(f\left(X_{\theta}^{(p, k, \texttt{x}, q, \ell, 2)}(r)\right) - f\left(X_{\theta}^{(p, k, \texttt{x}, q, \ell, 1)}(r)\right)\right)\textcolor[HTML]{D21312}{dr}\textcolor[HTML]{8728e2}{du}.
\end{split}
\end{align}
\normalsize

The double integral define various shapes which are represented graphically in figure~\ref{fig:graphical_second_order_difference}.
For convenience, we repeteadly use the Markov property of second-order trajectories to compute them.

\begin{figure*}[h!]
\centering
\includegraphics[width=\linewidth]{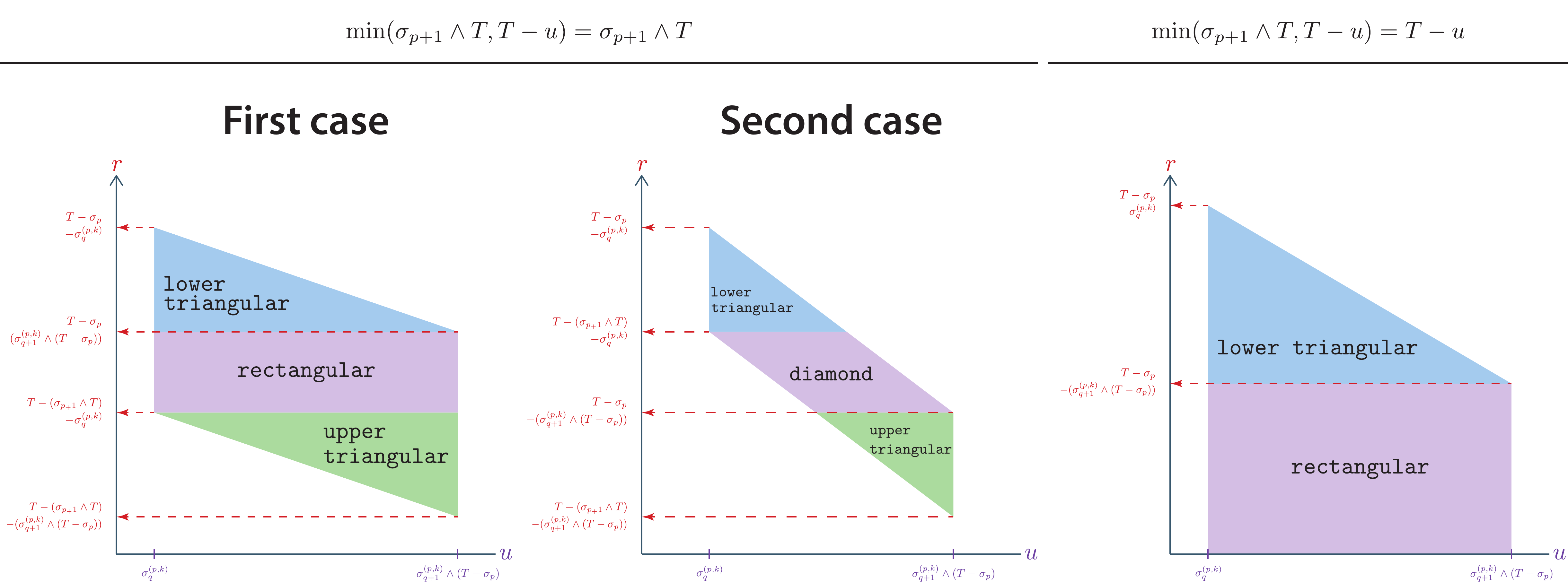}
\caption{\label{fig:graphical_second_order_difference} Integration domains over the second-order trajectories. The shape of the domain over which to integrate is dictated by (i)~the value of $\min(\sigma_{p+1} \wedge T, T-u)$ (see eq.~\eqref{eq:before_switching} and~\eqref{eq:after_switching}) and (ii)~the relative magnitude of $T-\sigma_p - (\sigma_{q+1}^{(p,k)} \wedge (T - \sigma_{p}))$ and $T - (\sigma_{p+1}\wedge T) -\sigma_q^{(p,k)}$. } 
\end{figure*}

\begin{algorithm}[H]
\caption{Compute the integral over of a second-order trajectory over the domain $\textcolor[HTML]{8728e2}{[\sigma_q^{(p,k)},\sigma_{q+1}^{(p,k)}\wedge (T - \sigma_p)]}$ $\times$ $\textcolor[HTML]{D21312}{[\sigma_{p},\min(\sigma_{p+1} \wedge T, T-u)]}$ in eq.~\eqref{eq:second_term_thm_final}}
\begin{algorithmic}[1]
\Function{SecondOrderDifference}{$x^{(2)}$, $\Tilde{x}^{(2)}$, $T^{(1)}$, $t^{(0)}$, $\Delta t^{(0)}$, $T^{(0)}$, $r_{\text{bott., left}}$, $r_{\text{top, left}}$, $r_{\text{bott., right}}$, $r_{\text{top, right}}$, $u_{\text{min}}$, $u_{\text{max}}$}
\State $\Delta G \leftarrow 0$
\If{$T^{(1)} = T^{(0)} - (t^{(0)} + \Delta t^{(0)})$} \Comment{$\min(\sigma_{p+1} \wedge T, T-u) = \sigma_{p+1} \wedge T$}
\If{$r_{\text{bott., left}} \leq r_{\text{top, right}}$} \Comment{first case in figure \ref{fig:graphical_second_order_difference}}
\State $r_{\min} \leftarrow 0$ , $\varphi^{(2)} \leftarrow r_{\text{bott., right}}$, $T^{(2)} \leftarrow r_{\text{bott., left}}$
\State \parbox[t]{440pt}{$(x^{(2)}$, $\Tilde{x}^{(2)},  \Delta G)$ $\leftarrow$ \Call{SecondOrderDifferenceSegment}{$x^{(2)}$, $\Tilde{x}^{(2)}$, $r_{\min}$, $\varphi^{(2)}$, $T^{(2)}$, $\Delta G$, $t^{(0)}$, $\Delta t^{(0)}$, $T^{(0)}$, $u_{\min}$, $u_{\max}$, \texttt{upper triangular}}\strut}
\State $r_{\min} \leftarrow r_{\text{bott., left}}$, $\varphi^{(2)} \leftarrow 0$, $T^{(2)} \leftarrow r_{\text{top, right}}$
\State \parbox[t]{440pt}{$(x^{(2)}$, $\Tilde{x}^{(2)},  \Delta G)$ $\leftarrow$ \Call{SecondOrderDifferenceSegment}{$x^{(2)}$, $\Tilde{x}^{(2)}$, $r_{\min}$, $\varphi^{(2)}$, $T^{(2)}$, $\Delta G$, $t^{(0)}$, $\Delta t^{(0)}$, $T^{(0)}$, $u_{\min}$, $u_{\max}$, \texttt{rectangular}}\strut}
\State $r_{\min} \leftarrow r_{\text{top, right}}$, $\varphi^{(2)} \leftarrow 0$, $T^{(2)} \leftarrow r_{\text{top, left}}$
\State \parbox[t]{440pt}{$(x^{(2)}$, $\Tilde{x}^{(2)},  \Delta G)$ $\leftarrow$ \Call{SecondOrderDifferenceSegment}{$x^{(2)}$, $\Tilde{x}^{(2)}$, $r_{\min}$, $\varphi^{(2)}$, $T^{(2)}$, $\Delta G$, $t^{(0)}$, $\Delta t^{(0)}$, $T^{(0)}$, $u_{\min}$, $u_{\max}$, \texttt{lower triangular}}\strut}
\ElsIf{$r_{\text{bott., left}} > r_{\text{top, right}}$} \Comment{second case in figure \ref{fig:graphical_second_order_difference}}
\State $r_{\min} \leftarrow 0$, $\varphi^{(2)} \leftarrow r_{\text{bott., right}}$, $T^{(2)} \leftarrow r_{\text{top, right}}$
\State \parbox[t]{440pt}{$(x^{(2)}$, $\Tilde{x}^{(2)},  \Delta G)$ $\leftarrow$ \Call{SecondOrderDifferenceSegment}{$x^{(2)}$, $\Tilde{x}^{(2)}$, $r_{\min}$, $\varphi^{(2)}$, $T^{(2)}$, $\Delta G$, $t^{(0)}$, $\Delta t^{(0)}$, $T^{(0)}$, $u_{\min}$, $u_{\max}$,  \texttt{upper triangular}}\strut}
\State $r_{\min} \leftarrow r_{\text{top, right}}$, $\varphi^{(2)} \leftarrow 0$, $T^{(2)} \leftarrow r_{\text{bott., left}}$
\State \parbox[t]{440pt}{$(x^{(2)}$, $\Tilde{x}^{(2)}, \Delta G)$ $\leftarrow$ \Call{SecondOrderDifferenceSegment}{$x^{(2)}$, $\Tilde{x}^{(2)}$, $r_{\min}$, $\varphi^{(2)}$, $T^{(2)}$, $\Delta G$, $t^{(0)}$, $\Delta t^{(0)}$, $T^{(0)}$, $u_{\min}$, $u_{\max}$, \texttt{diamond}}\strut}
\State $r_{\min} \leftarrow r_{\text{bott., left}}$, $\varphi^{(2)} \leftarrow 0$, $T^{(2)} \leftarrow r_{\text{top., left}}$
\State \parbox[t]{440pt}{$(x^{(2)}$, $\Tilde{x}^{(2)},  \Delta G)$ $\leftarrow$ \Call{SecondOrderDifferenceSegment}{$x^{(2)}$, $\Tilde{x}^{(2)}$, $r_{\min}$, $\varphi^{(2)}$, $T^{(2)}$, $\Delta G$, $t^{(0)}$, $\Delta t^{(0)}$, $T^{(0)}$, $u_{\min}$, $u_{\max}$, \texttt{lower triangular}}\strut}
\EndIf
\ElsIf{$T^{(1)} = T^{(0)} - t^{(0)}$} \Comment{$\min(\sigma_{p+1} \wedge T, T-u) = T-u$}
\State $r_{\min} \leftarrow 0$, $\varphi^{(2)} \leftarrow 0$, $T^{(2)} \leftarrow r_{\text{top, right}}$
\State \parbox[t]{460pt}{$(x^{(2)}$, $\Tilde{x}^{(2)},  \Delta G)$ $\leftarrow$ \Call{SecondOrderDifferenceSegment}{$x^{(2)}$, $\Tilde{x}^{(2)}$, $r_{\min}$, $\varphi^{(2)}$, $T^{(2)}$, $\Delta G$, $t^{(0)}$, $\Delta t^{(0)}$, $T^{(0)}$, $u_{\min}$, $u_{\max}$,  \texttt{rectangular}}\strut}
\State  $r_{\min} \leftarrow r_{\text{top, right}}$, $\varphi^{(2)} \leftarrow 0$, $T^{(2)} \leftarrow r_{\text{top, left}}$
\State \parbox[t]{460pt}{$(x^{(2)},\Tilde{x}^{(2)}, \Delta G) \leftarrow$ \Call{SecondOrderDifferenceSegment}{$x^{(2)}$, $\Tilde{x}^{(2)}$, $r_{\min}$, $\varphi^{(2)}$, $T^{(2)}$, $\Delta G$, $t^{(0)}$, $\Delta t^{(0)}$, $T^{(0)}$, $u_{\min}$, $u_{\max}$, \texttt{lower triangular}}\strut}
\EndIf
\State \Return $\Delta G$
\EndFunction
\end{algorithmic}
\end{algorithm}

\begin{algorithm}[H]
\caption{Compute the integral over part of a second-order trajectory in eq.~\eqref{eq:second_term_thm_final}}
\label{alg:second_order_integral}
\begin{algorithmic}[1]
\Function{SecondOrderDifferenceSegment}{$x^{(2)}$, $\Tilde{x}^{(2)}$, $r_{\min}$, $\varphi^{(2)}$, $T^{(2)}$, $\Delta G$, $t^{(0)}$, $\Delta t^{(0)}$, $T^{(0)}$, $u_{\min}$, $u_{\max}$, \texttt{domain}}
\State $H \leftarrow 0$
\State \tikzmark{highlight5-start}($x^{(2)}$, $\Tilde{x}^{(2)}$) $\leftarrow$ \Call{GenerateCoupledNRM}{$x^{(2)}$, $\Tilde{x}^{(2)}$, $\varphi^{(2)}$}
\State $t^{(2)} \leftarrow r_{\min} + \varphi^{(2)}$
\State $I_{o m} \leftarrow 0$ and $\tau_{om} \sim \mathcal{E}(1)$ for $o \in [\![1, M]\!]$ and $m \in [\![1, 3]\!]$
\While{$t^{(2)} < T^{(2)}$}
\State ($\Delta t^{(2)}$, $o^{*}$, $m^{*}$, $I$, $\tau$) $\leftarrow$ \Call{NextReactionCoupledNRM}{$x^{(2)}$, $\Tilde{x}^{(2)}$, $I$, $\tau$}
\State $\Delta t^{(2)} \leftarrow \min(\Delta t^{(2)}, T^{(2)}-t^{(2)})$\tikzmark{highlight5-end}
\State \parbox[t]{450pt}{$\Delta H \leftarrow$ \Call{DoubleIntegralDifference}{$x^{(2)}$, $\Tilde{x}^{(2)}$, $t^{(2)}$, $\Delta t^{(2)}$, $t^{(0)}$, $\Delta t^{(0)}$, $T^{(0)}$, $u_{\min}$, $u_{\max}$, \texttt{domain}}}
\State $H \leftarrow H + \Delta H$
\State \tikzmark{highlight6-start} $t^{(2)}$ $\leftarrow$ $t^{(2)} + \Delta t^{(2)}$
\If{$t^{(2)} \leq T^{(2)}$}
\State ($x^{(2)}$, $\Tilde{x}^{(2)}$) $\leftarrow$ \Call{UpdateCoupledState}{$x^{(2)}$, $\Tilde{x}^{(2)}$, $o^{*}$, $m^{*}$}
\EndIf
\EndWhile \tikzmark{highlight6-end}
\State $\Delta G \leftarrow \Delta G + H$
\State \Return $x^{(2)}$, $\Tilde{x}^{(2)}$, $\Delta G$
\EndFunction
\end{algorithmic}
\end{algorithm}

\highlight{highlight5}{red}{{-1cm,1em}}{{11.3cm,-0.3em}}
\highlight{highlight6}{red}{{-1.65cm,1em}}{{14.8cm,-0.3em}}

\begin{algorithm}[H]
\caption{Update the integral over part of a second-order trajectory in eq.~\eqref{eq:second_term_thm_final}}
\begin{algorithmic}[1]
\Function{DoubleIntegralDifference}{$x^{(2)}$, $\Tilde{x}^{(2)}$, $t^{(2)}$, $\Delta t^{(2)}$, $t^{(0)}$, $\Delta t^{(0)}$, $T^{(0)}$, $u_{\min}$, $u_{\max}$, \texttt{domain}}
\State $f_d \leftarrow f(\Tilde{x}^{(2)}) - f(x^{(2)})$
\If{\texttt{domain} \textbf{is} \texttt{upper triangular}}
\State $u^{*}_r \leftarrow T^{(0)} - (t^{(0)} + \Delta t^{(0)}) - t^{(2)}$
\State $u^{*}_l \leftarrow T^{(0)} - (t^{(0)} + \Delta t^{(0)}) - (t^{(2)} + \Delta t^{(2)})$
\State $U \leftarrow u_{\max}-u^{*}_r + (u^{*}_r - u^{*}_l)/2$
\ElsIf{\texttt{domain} \textbf{is} \texttt{rectangular}}
\State $U \leftarrow u_{\max}-u_{\min}$
\ElsIf{\texttt{domain} \textbf{is} \texttt{lower triangular}}
\State $u^{*}_r \leftarrow T^{(0)} - t^{(0)} - t^{(2)}$
\State $u^{*}_l \leftarrow T^{(0)} - t^{(0)} - (t^{(2)} + \Delta t^{(2)})$
\State $U \leftarrow u^{*}_l-u_{\min} + (u^{*}_r - u^{*}_l)/2$
\ElsIf{\texttt{domain} \textbf{is} \texttt{diamond}}
\State $u^{*}_{rb} \leftarrow T^{(0)} - (t^{(0)} + \Delta t^{(0)}) - t^{(2)}$
\State $u^{*}_{lb} \leftarrow T^{(0)} - (t^{(0)} + \Delta t^{(0)}) - (t^{(2)} + \Delta t^{(2)})$
\State $u^{*}_{rt} \leftarrow T^{(0)} - t^{(0)} - t^{(2)}$
\State $u^{*}_{lt} \leftarrow T^{(0)} - t^{(0)} - (t^{(2)} + \Delta t^{(2)})$
\State $U \leftarrow (u^{*}_{rb} - u^{*}_{lb})/ 2 + (u^{*}_{rt} - u^{*}_{lt}) / 2 + u^{*}_{lt} - u^{*}_{rb}$
\EndIf
\State $U \leftarrow f_d \times U \times \Delta t^{(2)}$
\State \Return $U$
\EndFunction
\end{algorithmic}
\end{algorithm}

\subsection{Generic methods to simulate (coupled) stochastic reaction networks}

For ease of exposition, we suppress the dependency of the propensity $\lambda$ on the system parameter $\theta$. $\mathcal{E}(\mu)$ is an exponential variable with parameter $\mu > 0$ and $\mathcal{B}(p_1,\dots, p_{M})$ a random variable which takes each value $k$ with probability $p_k$. 

\begin{algorithm}[H]
\caption{Select next reaction for SSA}
\begin{algorithmic}[1]
\Function{NextReactionSSA}{$x$}
\label{alg:next_reaction_ssa}
\State $\lambda_0 \leftarrow \sum_{k=1}^{M} \lambda_{k}(x)$
\If{$\lambda_0 = 0$}
\State $\Delta t \leftarrow \infty$ 
\ElsIf{$\lambda_0 > 0$}
\State $\Delta t \sim \mathcal{E}(\lambda_0)$
\State $p_k \leftarrow \lambda_k(x) / \lambda_0(x)$ for $k \in [\![1, M]\!]$
\State $k^{*} \sim \mathcal{B}(p_1,\dots, p_{M})$
\EndIf
\State \Return $\Delta t$, $k^{*}$
\EndFunction
\end{algorithmic}
\end{algorithm}

\begin{algorithm}[H]
\caption{Update state for SSA}
\begin{algorithmic}[1]
\Function{UpdateState}{$x$, $k^{*}$}
\State $x \leftarrow x + \zeta_{k^{*}}$
\State \Return $x$
\EndFunction
\end{algorithmic}
\end{algorithm}

\begin{algorithm}[H]
\caption{Select next reaction for coupled mNRM}
\begin{algorithmic}[1]
\Function{NextReactionCoupledNRM}{$x$, $\Tilde{x}$, $I$, $\tau$}
\State $\lambda_{k1} \leftarrow \lambda_k(x)$, $\lambda_{k2} \leftarrow \lambda_k(\Tilde{x})$, $\lambda_{k3} \leftarrow \lambda_k(x) \wedge \lambda_k(\Tilde{x})$ for $k \in [\![1, M]\!]$
\For{$k \in [\![1, M]\!]$ and $\ell \in [\![1, 3]\!]$}
\If{$\lambda_{k\ell} = 0$}
\State $\Delta t_{k\ell} \leftarrow \infty$
\Else
\State $\Delta t_{k\ell} \leftarrow (\tau_{k\ell} - I_{k\ell})/\lambda_{k\ell}$
\EndIf
\EndFor
\State $\Delta t \leftarrow \min_{k\ell} \Delta t_{k\ell}$
\State $(k^{*},\ell^{*}) \leftarrow \argmin_{k\ell} \Delta t_{k\ell}$
\State $I_{k\ell} \leftarrow \lambda_{k\ell} \times \Delta t$ for $k \in [\![1, M]\!]$ and $\ell \in [\![1, 3]\!]$
\State $\tau_{k^{*}\ell^{*}} \sim \mathcal{E}(1)$
\State \Return $\Delta t$, $k^{*}$, $\ell^{*}$, $I$, $\tau$
\EndFunction
\end{algorithmic}
\end{algorithm}

\begin{algorithm}[H]
\caption{Update state for coupled mNRM}
\begin{algorithmic}[1]
\Function{UpdateCoupledState}{$x$, $\Tilde{x}$, $k^{*}$, $\ell^{*}$}
\If{$\ell^{*} \in \{0,1\}$}
\State  $x\leftarrow x + \zeta_{k^{*}}$
\EndIf
\If{$\ell^{*} \in \{0,2\}$}
\State $\Tilde{x} \leftarrow \Tilde{x} + \zeta_{k^{*}}$
\EndIf
\State \Return $x$, $\Tilde{x}$
\EndFunction
\end{algorithmic}
\end{algorithm}

\begin{algorithm}[H]
\caption{Simulate two trajectories with the coupled mNRM}
\begin{algorithmic}[1]
\Function{GenerateCoupledNRM}{$x$, $\Tilde{x}$, $T$}
\State $t \leftarrow 0$
\State $I_{\ell m} \leftarrow 0$ and $\tau_{\ell m} \sim \mathcal{E}(1)$ for $\ell \in [\![1, M]\!]$ and $m \in [\![1, 3]\!]$
\While{$t \leq T$}
\State ($\Delta t$, $k^{*}$, $\ell^{*}$, $I$, $\tau$) $\leftarrow$ \Call{NextReactionCoupledNRM}{$x$, $\Tilde{x}$, $I$, $\tau$}
\State $t$ $\leftarrow$ $t + \Delta t$
\If{$t \leq T$}
\State ($x$, $\Tilde{x}$) $\leftarrow$ \Call{UpdateCoupledState}{$x$, $\Tilde{x}$, $k^{*}$, $\ell^{*}$}
\EndIf
\EndWhile
\State \Return $x$, $\Tilde{x}$
\EndFunction
\end{algorithmic}
\end{algorithm}

\section{Numerical examples}

In this section, we compare the performance of the estimator associated to the DBPA with the only unbiased alternative, which uses the GT method. While both are guaranteed to converge to the exact value as the number of samples increases, we nonetheless expect them to lead to estimates with different variances or mean-squared errors for a finite-number of samples. Given that the cost per sample, as measured by the average simulation time per sample, \emph{a priori} differs between both methods, we assess their performance in terms of the mean-squared error $\text{MSE}(t_{\text{comp}})$ defined as:
\begin{equation}
\label{eq:mc_mse}
\text{MSE}(t_{\text{comp}}) \coloneqq \sigma_{Y}^{2} / \bar n(t_{\text{comp}}),
\end{equation}

where $\sigma_{Y}^{2}$ is the variance of one sample from the DBPA or the GT method and  $\bar n(t_{\text{comp}})$ is the average number of samples generated within a given computational time budget $t_{\text{comp}}$. Introducing $\mathcal{M}_{0}:= t_{\text{comp}}/\bar{n}(t_{\text{comp}})$ the average simulation time per sample and $\mathcal{M} \coloneqq \mathcal{M}_{0}\sigma_{Y}^{2}$ the variance-adjusted cost per sample, we have:
\begin{equation}
\text{MSE}(t_{\text{comp}}) = \mathcal{M}/t_{\text{comp}}.
\end{equation}

This means that it is sufficient to report $\mathcal{M}$ to fully characterise the relative performance of the two unbiased methods for any computational time $t_{\text{comp}}$. In~\cite{gupta2018estimation}, it was observed that the chosen value of the upper bound on the number of auxiliary paths left the performance of the BPA for first-order sensitivities largely unaffected and was arbitrarily set to $10$ for all numerical experiments. Here, we use throughout the numerical examples $\mu_1 = 10$ as the upper bound for the expected number of desired pairs of first-order auxiliary processes and $\mu_2 = 100$ as the upper bound for the expected number of desired pairs of second-order auxiliary processes per trajectory of the main process once the number of first-order auxiliary processes is modulated. All simulations were run on a MacBook Pro with a 2 GHz Quad-Core Intel Core i5 processor. The corresponding scripts have been made available in a GitHub repository: \href{https://github.com/quentin-badolle/DoubleBPA}{{\tt github.com/quentin-badolle/DoubleBPA}}.

\begin{exmp}[Gene expression network]\label{ex:cge}
\end{exmp}

Let us first consider the constitutive gene expression network from~\cite{thattai2001intrinsic}:
\begin{equation}
\emptyset \xrightarrow[]{\theta_1} \mathbf{S_1},\quad \mathbf{S_1} \xrightarrow[]{\theta_2} \mathbf{S_1} + \mathbf{S_2},\quad \mathbf{S_1} \xrightarrow[]{\theta_3} \emptyset,\quad \mathbf{S_2} \xrightarrow[]{\theta_4} \emptyset,
\end{equation}

with mass-action kinetics:
\begin{equation}
\label{eq:prop_cge}
\lambda_1(x, \theta) = \theta_1,\quad \lambda_2(x, \theta) = \theta_2 x_1,\quad \lambda_3(x, \theta) = \theta_3 x_1,\quad \lambda_4(x, \theta) = \theta_4 x_2.
\end{equation}
The network consists of two species, where $\mathbf{S_1}$ corresponds to mRNA and $\mathbf{S_2}$ to protein. The first reaction represents transcription and the second translation. The last two reactions describe the degradation of mRNA and protein. As in~\cite{gupta2013unbiased}, we set $\theta_1 = 0.6$, $\theta_2 = 1.7329$, $\theta_3 = 0.3466$, $\theta_4 = 0.0023$. We choose the initial state to be $x=(0,0)$ and define the output function $f$ as $f(x) \coloneqq x_2$ which means that we will look at the second-order sensitivity of the mean protein abundance. On the left-hand side of Figure~\ref{fig:cge}, we provide estimates of $S_{\theta}^{(i,j)}(x,f,t)$ given by the DBPA and GT method for multiple pairs $(i,j)$ and a range of times $t$. Given that the estimator associated to the two approaches is unbiased, it is unsurprising that the estimates provided are close even for a finite number of samples, and that the error bars overlap. Each propensity in eq.~\eqref{eq:prop_cge} is linear in the abundance of at most one species and moments can therefore be computed in closed form, from which expressions for the sensitivities can be obtained without resorting to a computational method. The performance of the DBPA and GT method, as assessed by the value of their variance-adjusted cost per sample $\mathcal{M}$, is given on the right-hand side of Figure~\ref{fig:cge}. The results indicate that the DBPA can offer a substantial performance improvement over the GT method (up to 319-fold here).

\begin{exmp}[Genetic toggle switch]
\end{exmp}
We now move to the genetic toggle switch given in~\cite{gardner2000construction}:

\begin{equation}
\label{eq:ts}
\emptyset \xrightarrow[]{\lambda_1} \mathbf{S_1},\quad \mathbf{S_1} \xrightarrow[]{\lambda_2} \emptyset,\quad \emptyset \xrightarrow[]{\lambda_3} \mathbf{S_2},\quad \mathbf{S_2} \xrightarrow[]{\lambda_4} \emptyset,
\end{equation}

where the propensities are given by:
\begin{equation}
\lambda_1(x, \theta) = \frac{\theta_1}{\theta_2 + x_2^{\theta_3}} + \theta_4,\quad \lambda_2(x,\theta) = \theta_5 x_1,\quad \lambda_3(x, \theta) = \frac{\theta_6}{\theta_7 + x_1^{\theta_8}} + \theta_9,\quad \lambda_4(x,\theta) = \theta_{10} x_2.
\end{equation}

In this network, two proteins $\mathbf{S_1}$ and $\mathbf{S_2}$ inhibit each other's production non-linearly through the propensities $\lambda_1$ and $\lambda_2$. The rates $\theta_4$ and $\theta_8$ correspond to constitutive expression of the proteins. In contrast to the previous model, transcription and translation are lumped in a single step here. The propensities $\lambda_2$ and $\lambda_4$ account for the degradation of $\mathbf{S_1}$ and $\mathbf{S_2}$. We set $\theta_1 = 1.0$, $\theta_2 = 1.0$, $\theta_3 = 2.5$, $\theta_4 = 0.5$, $\theta_5 = 0.0023$, $\theta_6 = 1.0$, $\theta_7 = 1.0$, $\theta_8 = 1.0$, $\theta_9 = 0.5$, $\theta_{10} = 0.0023$. In eq.~\eqref{eq:ts}, $\mathbf{S_1}$ and $\mathbf{S_2}$ play symmetric roles and we choose $f$ as $f(x) \coloneqq x_1$.  Again, we use $x = (0,0)$ as the initial state and consider the estimation of $S_{\theta}^{(i,j)}(x,f,t)$ by the DBPA and GT method for multiple pairs $(i,j)$ and a range of times $t$ in Figure~\ref{fig:ts}. In contrast to the previous example, numerical estimation is currently the only possible approach to compute the sensitivities now. Like in the previous example, while the estimates provided by both methods are in close agreement, the DBPA comes with a markedly lower variance-adjusted cost per sample $\mathcal{M}$ (up to 722-fold here).

\begin{exmp}[Antithetic integral controller]
\end{exmp}
We finally consider a gene expression network under the control of the antithetic integral controller from~\cite{briat2016antithetic, aoki2019universal}:
\begin{equation}
\mathbf{S_3} \xrightarrow[]{\lambda_1} \mathbf{S_1} + \mathbf{S_3},\text{ }
\mathbf{S_1} \xrightarrow[]{\lambda_2} \mathbf{S_1} + \mathbf{S_2},\text{ }
\mathbf{S_1} \xrightarrow[]{\lambda_3} \emptyset,\text{ }
\mathbf{S_2} \xrightarrow[]{\lambda_4} \emptyset,\text{ }
\mathbf{S_2} \xrightarrow[]{\lambda_5} \mathbf{S_2} + \mathbf{S_4},\text{ }
\mathbf{S_3} + \mathbf{S_4} \xrightarrow[]{\lambda_6} \emptyset,\text{ }
\emptyset \xrightarrow[]{\lambda_7} \mathbf{S_3},
\end{equation}

where:
\begin{align}
\lambda_1(x,\theta) &= \theta_1 x_3, & \lambda_2(x,\theta) &= \theta_2 x_1,\\
\lambda_3(x,\theta) &= \theta_3 x_1, & \lambda_4(x,\theta) &= \theta_4 x_2,\\
\lambda_5(x,\theta) &= \theta_5 x_2, & \lambda_6(x,\theta) &= \theta_6 x_3 x_4,\\
\lambda_7(x,\theta) &= \theta_7.
\end{align}

Here, the abundance of the protein $\mathbf{S_2}$ is controlled by the antithetic integral controller involving the regulatory species $\mathbf{S_3}$ and $\mathbf{S_4}$. The transcription rate of the mRNA $\mathbf{S_1}$ is proportional to the abundance of $\mathbf{S_3}$ (actuation reaction) while the production rate of $\mathbf{S_4}$ is proportional to the abundance of the protein $\mathbf{S_2}$ (sensing reaction). Like in example \ref{ex:cge}, the propensity $\lambda_2$ corresponds to the translation of the mRNA $\mathbf{S_1}$ into the protein $\mathbf{S_2}$ and the propensities $\lambda_3$ and $\lambda_4$ to their degradation. $\mathbf{S_3}$ and $\mathbf{S_4}$ are jointly degraded in an annihilation reaction with propensity $\lambda_6$. We set $\theta_1 = 0.6$, $\theta_2 = 1.7329$, $\theta_3 = 0.3466$, $\theta_4 = 0.0023$, $\theta_5 = 0.1$, $\theta_6 = 10.0$, $\theta_7 = 0.5$ and choose the initial state to be $x=(0,0,0,0)$. Regulated variables like glucose levels are omnipresent in biological processes and the antithetic integral controller has been suggested as a synthetic way to specify some characteristics of living cells including in therapeutic applications~\cite{filo2023biomolecular}. Both the abundance of the protein $\mathbf{S_2}$ and its variability are therefore of practical relevance and we report on the performance of the DBPA and GT method for $f(x)=x_2$ and $f(x)=(x_2)^2$. When $f(x)=x_2$, it is known that the stationary mean equals $\theta_5 / \theta_7$~\cite{briat2016antithetic}. Approximations for the stationary variance have also been obtained~\cite{briat2018antithetic}. In the meantime, no expression is known for the finite-time mean and variance. Computational methods to estimate second-order sensitivities  therefore emerge as a valuable tool to characterise the transient response of the network. Results are provided in Figure~\ref{fig:aic} and the conclusions follow those from the previous two examples. In particular, the performance displayed in panel D of Figure~\ref{fig:aic} corresponds to a $364$-fold improvement of the DBPA over the GT method.\newline

\begin{figure}[htbp]
    \centering
    \begin{subfigure}[b]{\textwidth}
    \centering
    \subfloat[\centering$S_{\theta}^{(1,4)}(x,f,t)$]{
        \includegraphics[width=0.46\textwidth]{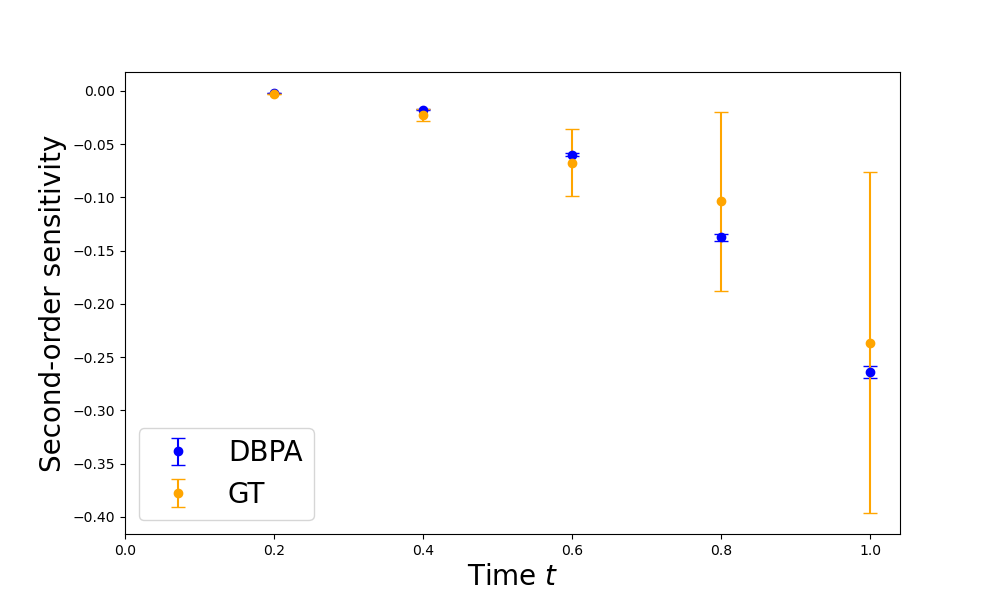}
        \label{fig:sub1}
    }
    \subfloat[\centering$\mathcal{M}$ for $S_{\theta}^{(1,4)}(x,f,t)$]{
        \includegraphics[width=0.46\textwidth]{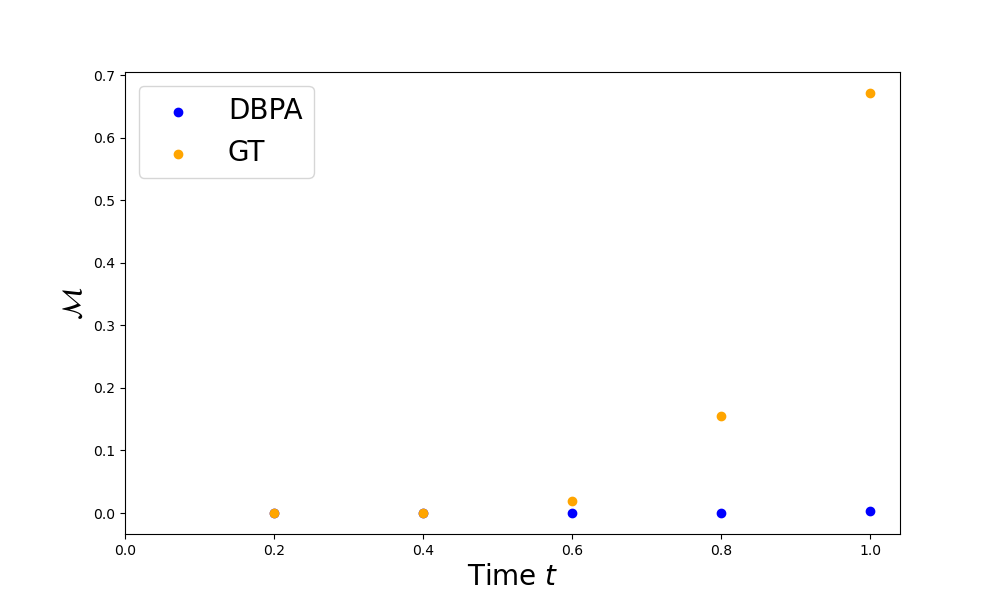}
        \label{fig:sub2}
    }
    \end{subfigure}
    \begin{subfigure}[b]{\textwidth}
    \centering
    \subfloat[\centering$S_{\theta}^{(2,4)}(x,f,t)$]{
        \includegraphics[width=0.46\textwidth]{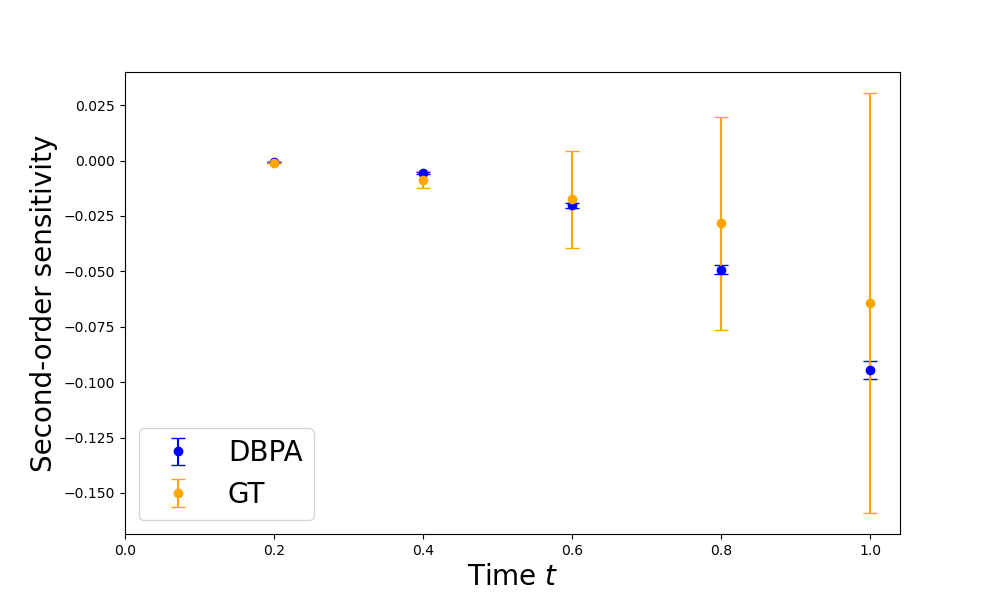}
        \label{fig:sub3}
    }
    \subfloat[\centering$\mathcal{M}$ for $S_{\theta}^{(2,4)}(x,f,t)$]{
        \includegraphics[width=0.46\textwidth]{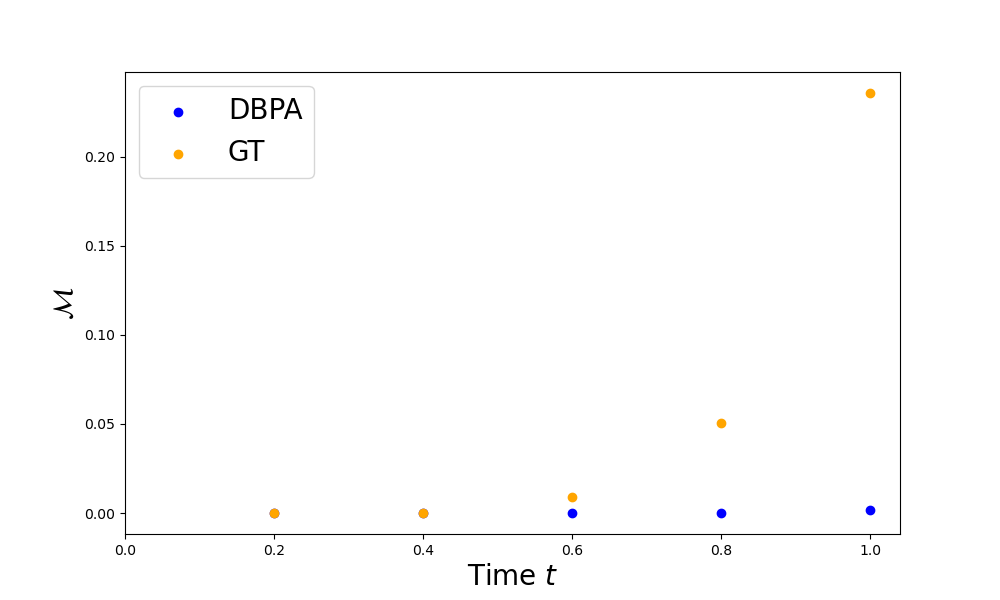}
        \label{fig:sub4}
    }
    \end{subfigure}
    \begin{subfigure}[b]{\textwidth}
    \centering
    \subfloat[\centering$S_{\theta}^{(3,4)}(x,f,t)$]{
        \includegraphics[width=0.46\textwidth]{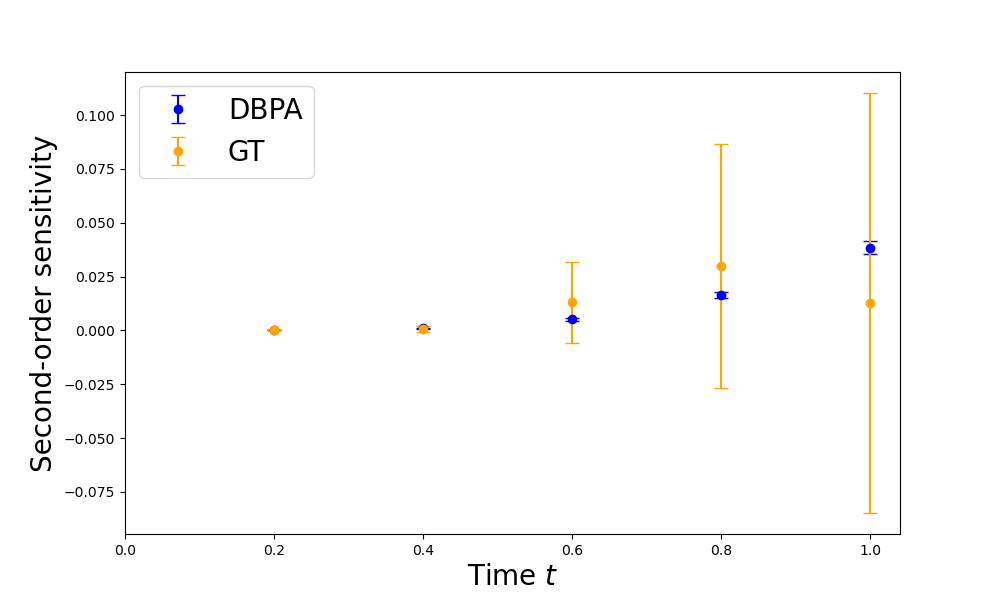}
        \label{fig:sub5}
    }
    \subfloat[\centering$\mathcal{M}$ for $S_{\theta}^{(3,4)}(x,f,t)$]{
        \includegraphics[width=0.46\textwidth]{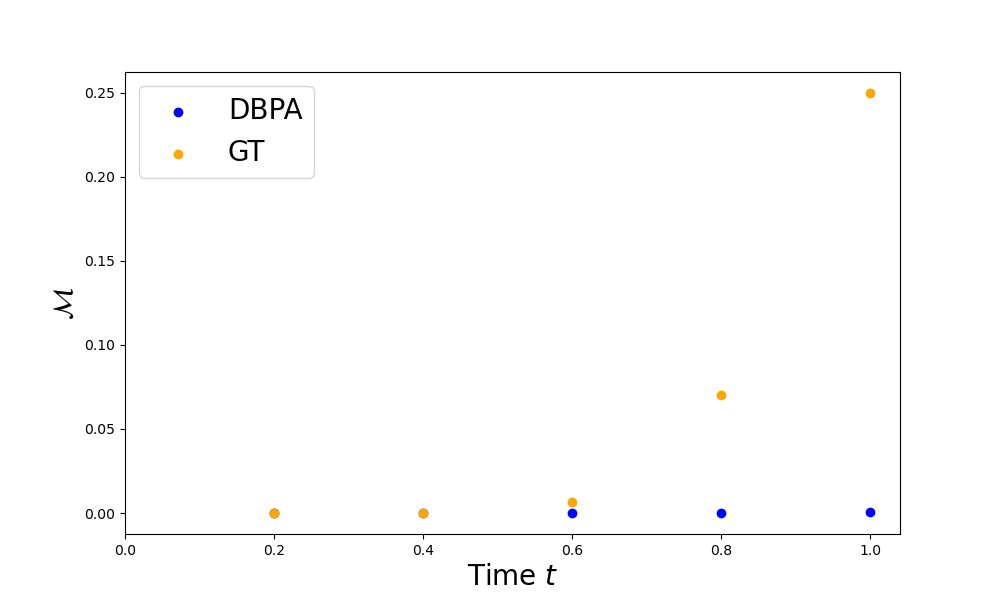}
        \label{fig:sub6}
    }
    \end{subfigure}
    \caption{Gene expression network. The sensitivity $S_{\theta}^{(i,j)}(x,f,t)$ is computed using $10^4$ DBPA simulations and $5\times10^5$ GT simulations. The parameters of the network are set to $\theta_1 = 0.6$, $\theta_2 = 1.7329$, $\theta_3 = 0.3466$, $\theta_4 = 0.0023$. The initial state is chosen to be $x=(0,0)$ and the output function $f$ to be $f(x) \coloneqq x_2$. In the panels on left-hand side, error bars correspond to two standard deviations. While both methods are unbiased and will lead to the same sensitivity estimate asymptotically, the DBPA can offer large performance improvements over the GT method by having a lower variance-adjusted cost per sample $\mathcal{M}$, as showcased in the panels on the right-hand side.}
    \label{fig:cge}
\end{figure}


\begin{figure}[htbp]
    \centering
    \begin{subfigure}[b]{\textwidth}
    \centering
    \subfloat[\centering$S_{\theta}^{(1,5)}(x,f,t)$]{
        \includegraphics[width=0.46\textwidth]{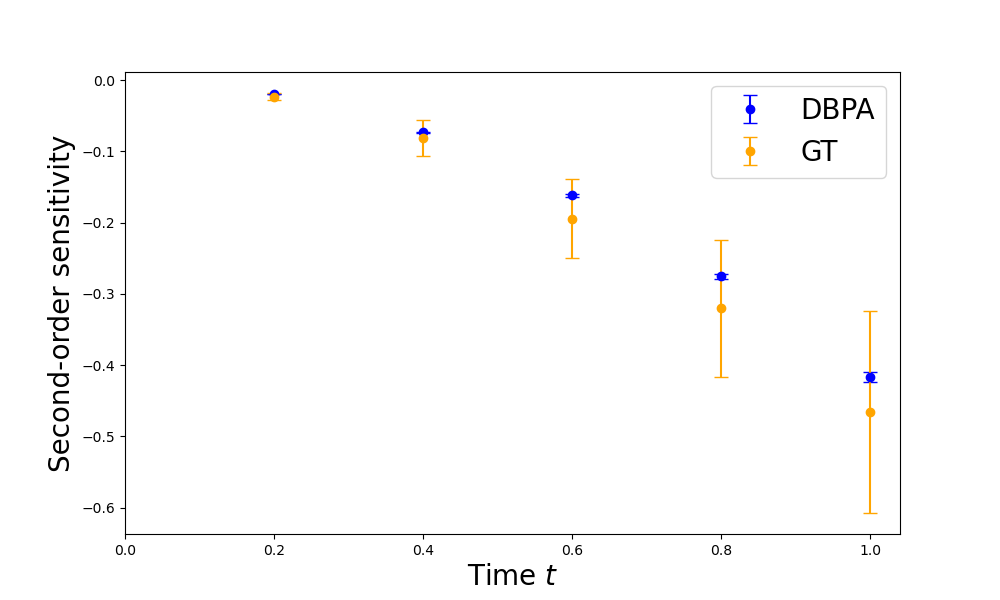}
        \label{fig:sub1}
    }
    \subfloat[\centering$\mathcal{M}$ for $S_{\theta}^{(1,5)}(x,f,t)$]{
        \includegraphics[width=0.46\textwidth]{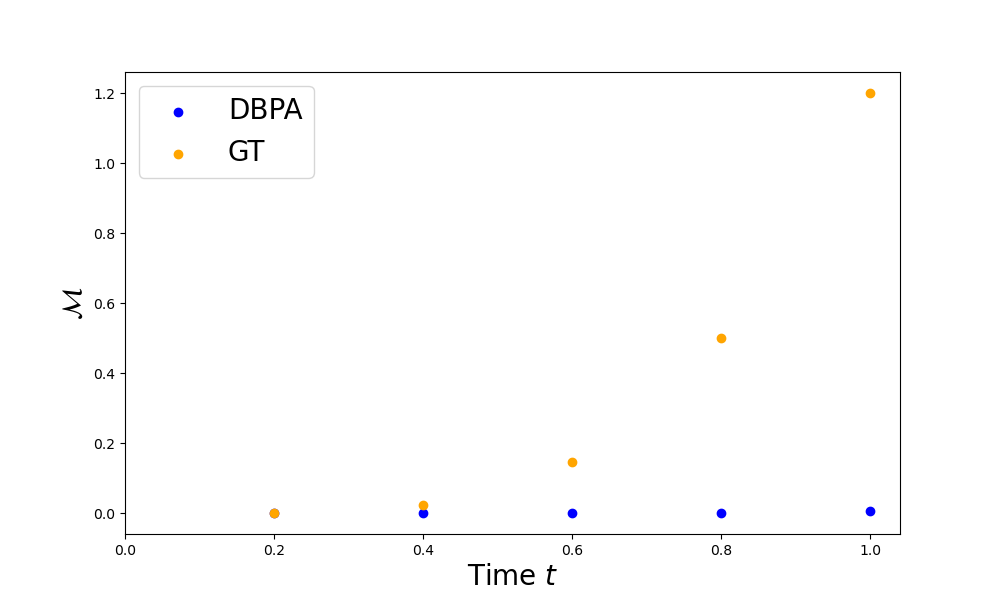}
        \label{fig:sub2}
    }
    \end{subfigure}
    \begin{subfigure}[b]{\textwidth}
    \centering
    \subfloat[\centering$S_{\theta}^{(1,10)}(x,f,t)$]{
        \includegraphics[width=0.46\textwidth]{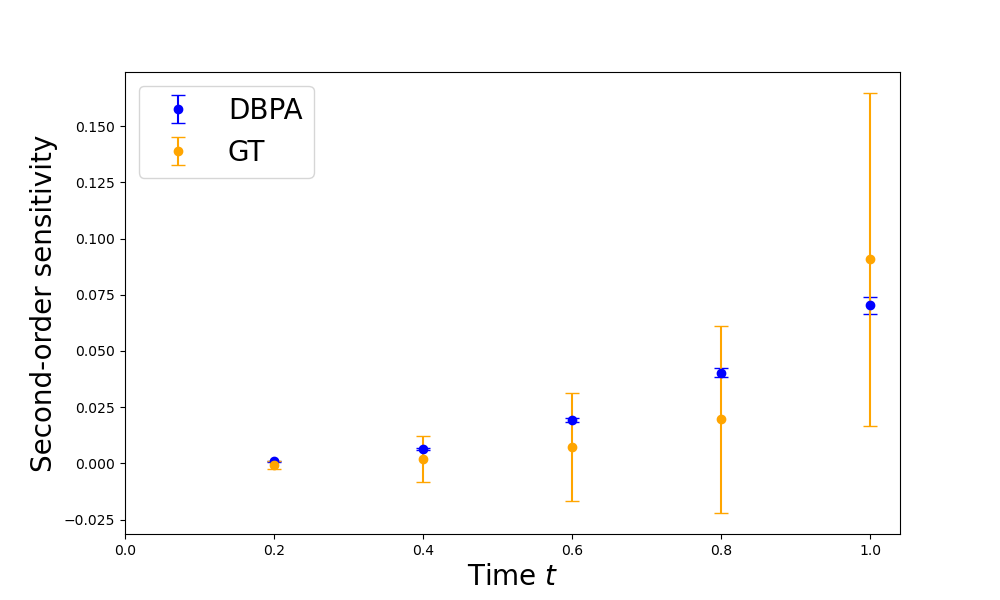}
        \label{fig:sub3}
    }
    \subfloat[\centering$\mathcal{M}$ for $S_{\theta}^{(1,10)}(x,f,t)$]{
        \includegraphics[width=0.46\textwidth]{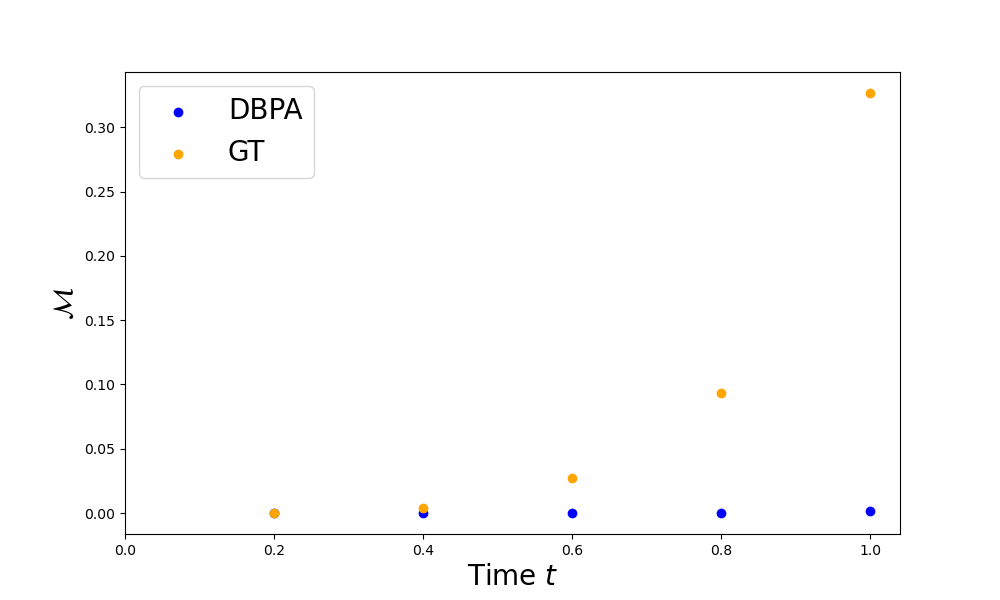}
        \label{fig:sub4}
    }
    \end{subfigure}
    \begin{subfigure}[b]{\textwidth}
    \centering
    \subfloat[\centering$S_{\theta}^{(4,8)}(x,f,t)$]{
        \includegraphics[width=0.46\textwidth]{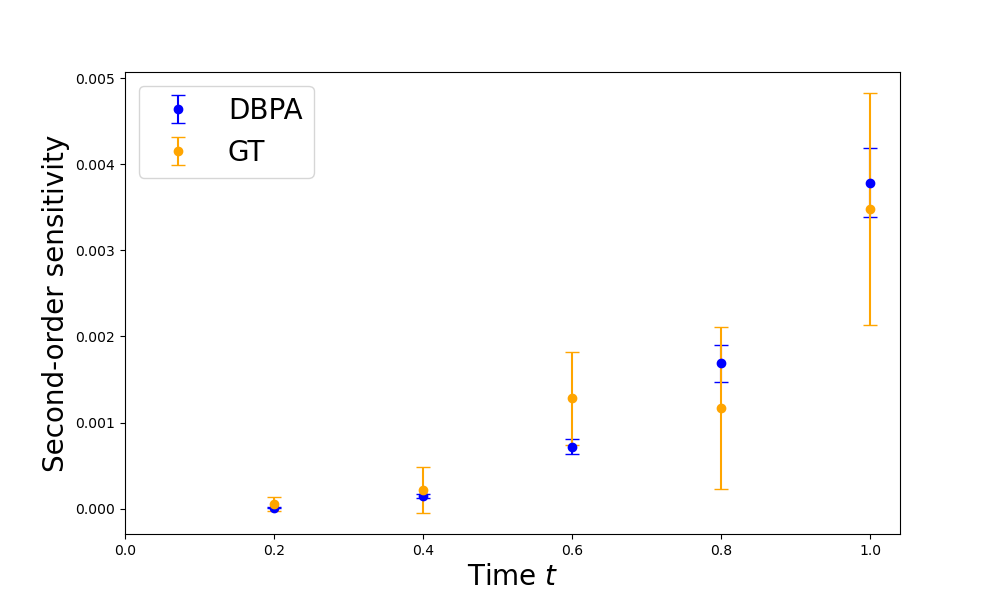}
        \label{fig:sub5}
    }
    \subfloat[\centering$\mathcal{M}$ for $S_{\theta}^{(4,8)}(x,f,t)$]{
        \includegraphics[width=0.46\textwidth]{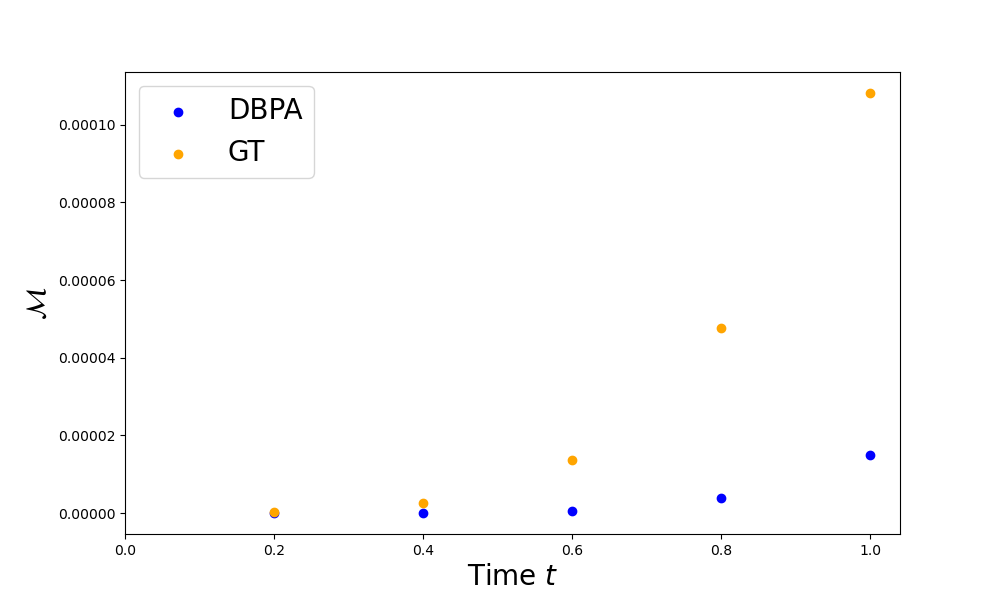}
        \label{fig:sub6}
    }
    \end{subfigure}
    \caption{Toggle switch network. The sensitivity $S_{\theta}^{(i,j)}(x,f,t)$ is computed using $10^4$ DBPA simulations and $5\times10^5$ GT simulations. The parameters of the network are set to $\theta_1 = 1.0$, $\theta_2 = 1.0$, $\theta_3 = 2.5$, $\theta_4 = 0.5$, $\theta_5 = 0.0023$, $\theta_6 = 1.0$, $\theta_7 = 1.0$, $\theta_8 = 1.0$, $\theta_9 = 0.5$, $\theta_{10} = 0.0023$. The initial state is chosen to be $x=(0,0)$ and the output function $f$ to be $f(x) \coloneqq x_1$. In the panels on left-hand side, error bars correspond to two standard deviations. While both methods are unbiased and will lead to the same sensitivity estimate asymptotically, the DBPA can offer large performance improvements over the GT method by having a lower variance-adjusted cost per sample $\mathcal{M}$, as showcased in the panels on the right-hand side.}
    \label{fig:ts}
\end{figure}


\begin{figure}[htbp]
    \centering
    \begin{subfigure}[b]{\textwidth}
    \centering
    \subfloat[\centering$S_{\theta}^{(1,5)}(x,f,t)$ with $f(x)=x_2$]{
        \includegraphics[width=0.46\textwidth]{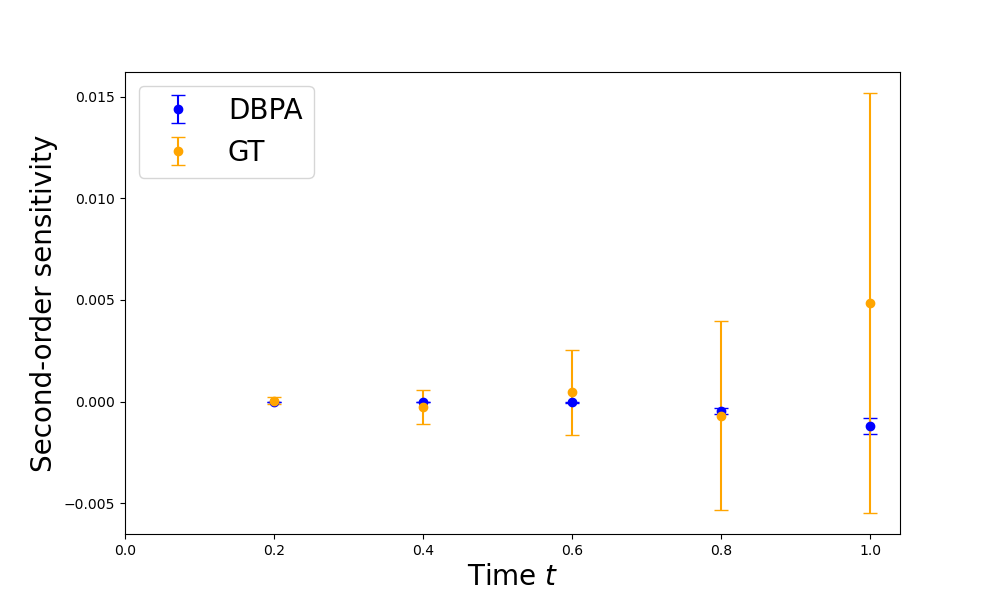}
        \label{fig:sub1}
    }
    \subfloat[\centering$\mathcal{M}$ for $S_{\theta}^{(1,5)}(x,f,t)$ with $f(x)=x_2$]{
        \includegraphics[width=0.46\textwidth]{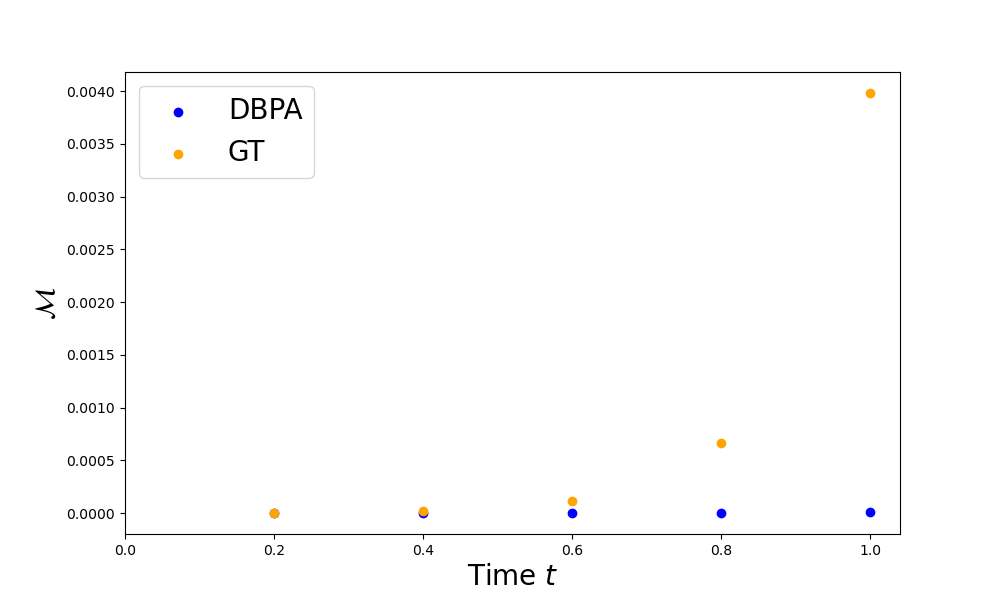}
        \label{fig:sub2}
    }
    \end{subfigure}
    \begin{subfigure}[b]{\textwidth}
    \centering
    \subfloat[\centering$S_{\theta}^{(5,7)}(x,f,t)$ with $f(x)=x_2$]{
        \includegraphics[width=0.46\textwidth]{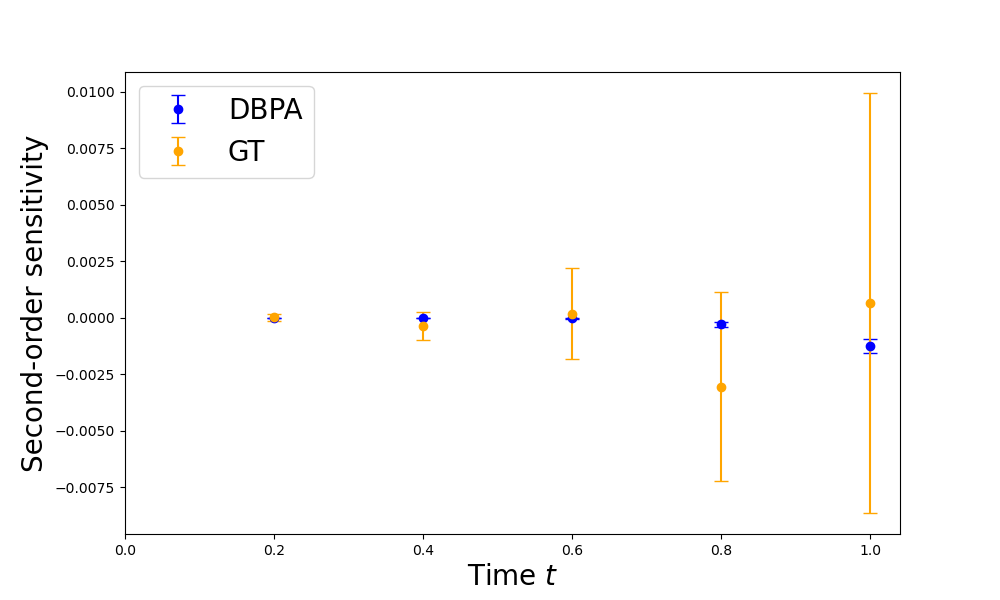}
        \label{fig:sub3}
    }
    \subfloat[\centering$\mathcal{M}$ for $S_{\theta}^{(5,7)}(x,f,t)$ with $f(x)=x_2$]{
        \includegraphics[width=0.46\textwidth]{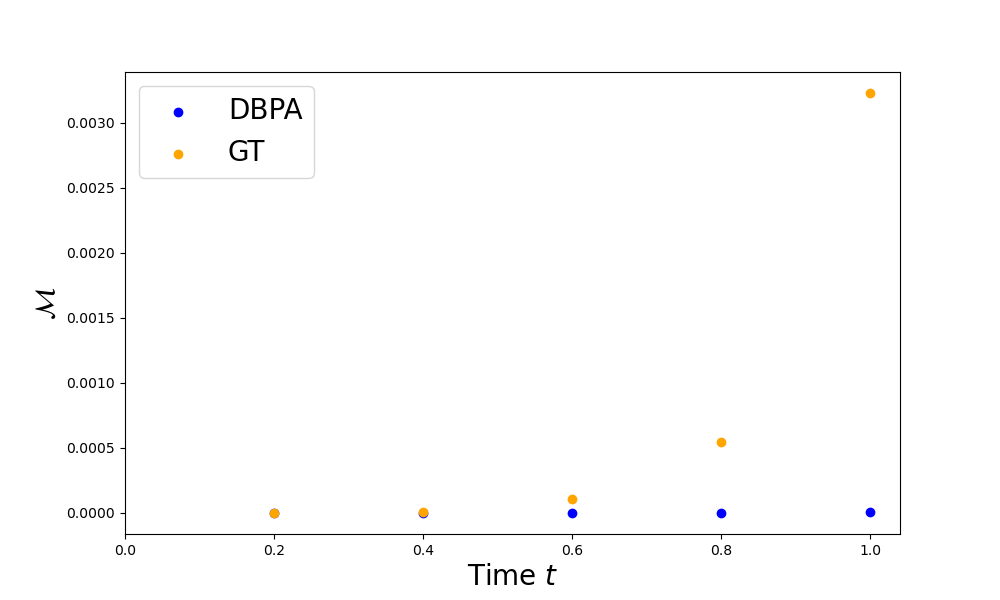}
        \label{fig:sub4}
    }
    \end{subfigure}
    \begin{subfigure}[b]{\textwidth}
    \centering
    \subfloat[\centering$S_{\theta}^{(5,7)}(x,f,t)$ with $f(x)=(x_2)^2$]{
        \includegraphics[width=0.46\textwidth]{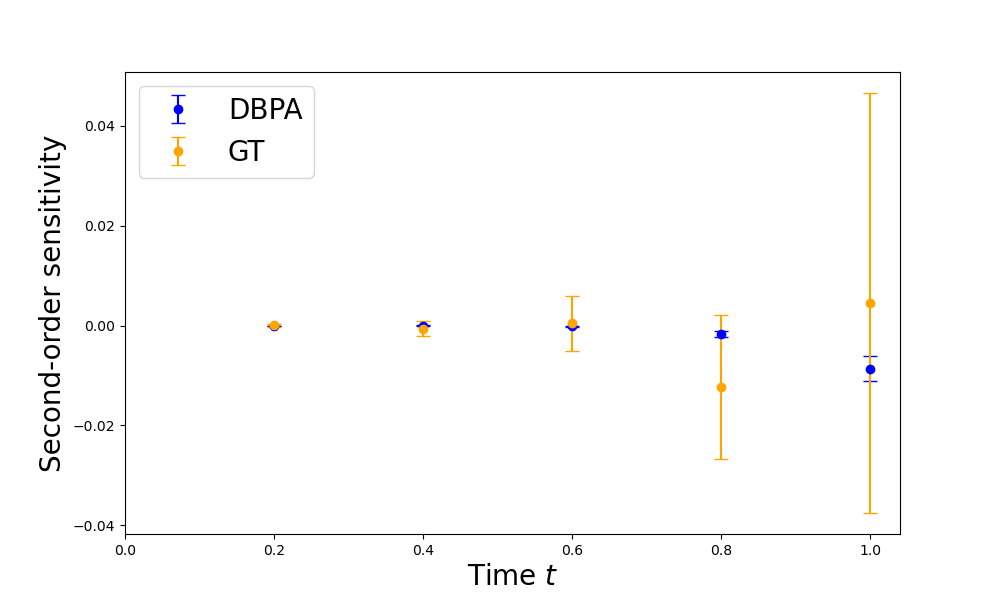}
        \label{fig:sub5}
    }
    \subfloat[\centering$\mathcal{M}$ for $S_{\theta}^{(5,7)}(x,f,t)$ with $f(x)=(x_2)^2$]{
        \includegraphics[width=0.46\textwidth]{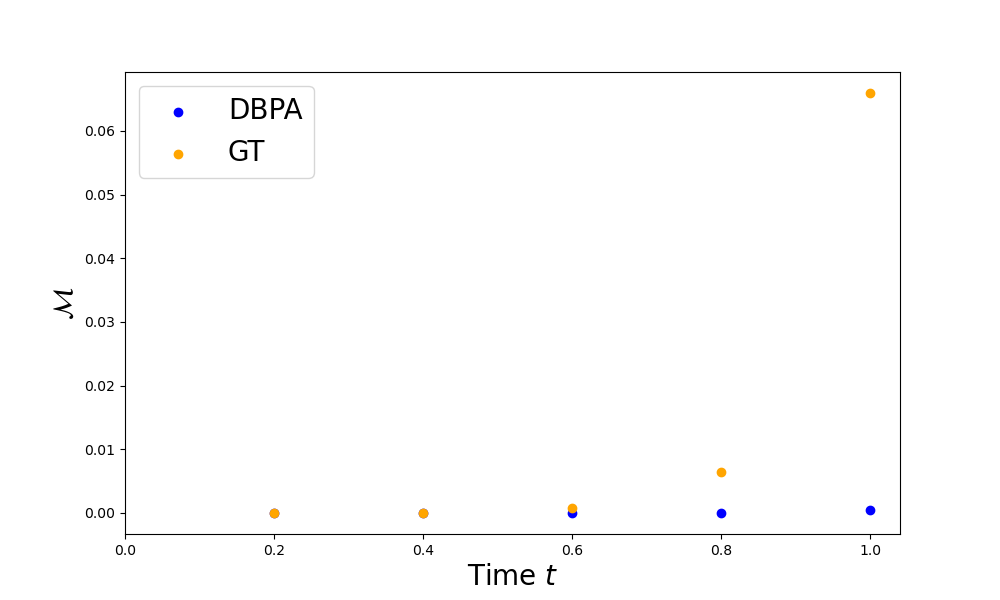}
        \label{fig:sub6}
    }
    \end{subfigure}
    \caption{Antithetic integral controller. The sensitivity $S_{\theta}^{(i,j)}(x,f,t)$ is computed using $10^4$ DBPA simulations and $5\times10^5$ GT simulations. The parameters of the network are set to $\theta_1 = 1.0$, $\theta_2 = 1.0$, $\theta_3 = 2.5$, $\theta_4 = 0.5$, $\theta_5 = 0.0023$, $\theta_6 = 1.0$, $\theta_7 = 1.0$, $\theta_8 = 1.0$, $\theta_9 = 0.5$, $\theta_{10} = 0.0023$. The initial state is chosen to be $x=(0,0,0,0)$. In the panels on left-hand side, error bars correspond to two standard deviations. While both methods are unbiased and will lead to the same sensitivity estimate asymptotically, the DBPA can offer large performance improvements over the GT method by having a lower variance-adjusted cost per sample $\mathcal{M}$, as showcased in the panels on the right-hand side.}
    \label{fig:aic}
\end{figure}
\section{Summary and Outlook}

In applications related to systems and synthetic biology, parameter sensitivities are key to characterise and infer SRNs. In this work, we provided conditions on the existence of second-order sensitivities of an expected output of interest. We also derived a new integral representation for these second-order derivatives. Based on this formula, we introduced the Double BPA which can generate unbiased samples of the Hessian of an average output which takes the form of the expected value of some function of the state at some time. We illustrate on numerical examples that the Double BPA can provide a substantial performance improvement over the only unbiased alternative previously available which is based on the GT method.

Like the GT method, the Double BPA relies on exact simulations of the reaction dynamics. These simulations can be computationally demanding, especially in the presence of time-scale separation. In the spirit of ref.~\cite{gupta2018estimation}, we could extend the approach developed here to rely instead on approximate simulations like those from the tau-leap algorithms, using again eq.~\eqref{eq:thm} as the starting point. Relying on these approximate simulations would however come at the price of introducing a bias. Integrating multi-level strategies as those from ref.~\cite{anderson2012multilevel} would be another way to improve the computational efficiency of our approach. In ref.~\cite{gupta2021deepcme}, the equivalent of eq.~\eqref{eq:thm} for first-order sensitivities was used in the DeepCME framework (see eq.~\eqref{eq:first_order_bpa} for the formula). In this hybrid Deep Learning-Monte Carlo approach, the expected output $\Psi_{\theta}(x,f,t)$ was first approximated using a neural network and this surrogate was used instead of the simulation of auxiliary paths when evaluating the integral over paths. Similarly, we envision leveraging eq.~\eqref{eq:thm} together with a neural approximation of $\Psi_{\theta}(x,f,t)$ and its first-order sensitivities to avoid the simulation of first- and second-order auxiliary paths altogether. We leave the extension of eq.~\eqref{eq:thm} to sensitivities of arbitrary order and to stationary second-order sensitivities as an immediate continuation of the present work.

\section*{Acknowledgments}

This research was funded in whole or in part by the Swiss National Science Foundation (SNSF) grant no. 182653 and 216505.

\appendix
\section{Appendix}
\label{appendix}

In this appendix we prove our main result theorem~\ref{thm:main_theorem}. We will need the following lemma:

\begin{lemma}[From the proof of theorem 3.3 in~\cite{gupta2018estimation}]
\label{lemma:for_second_part}
Pick $T \in \Rplus$. Let $(X_{\theta}(t))$ be the Markov process with generator $\mathbb{A}_{\theta}$ defined in eq.~\eqref{eq:def_generator}. Suppose the propensity function $\lambda$ satisfies assumptions \ref{assumption:propensity_dynkin} and \ref{assumption:propensity_regularity} at $\theta$ for $(i,j)$. Introduce  the function $\lambda_0 : \Natural^N \times \mathbb{R}^p\xrightarrow{} \Rplus$ denoting the sum of propensities: $\lambda_0(x,\theta) \coloneqq \sum_{k=1}^{M} \lambda_{k}(x,\theta)$. Let $\sigma_{l}$ be the $l$-th jump time of the process $(X_{\theta}(t))$ for $l \in \Natural$ with $\sigma_{0}=0$ for convenience. Introduce the inter-jump time $\delta_{l}\coloneqq(\sigma_{l+1} \wedge T) - (\sigma_{l} \wedge T)$ for $l \in \Natural$. For any continuous function $g : [0,\infty) \rightarrow [0,\infty)$, we have:

\begin{equation}
\label{eq:conditioned_delta}
\mathbb{E} \Bigg[\int_{0}^{\delta_{l}}g(s)ds\Bigg| X_{\theta}(\sigma_{l})=y,\sigma_{l}=u\Bigg]=\int_{0}^{T-u}g(s)e^{-\lambda_{0}(y,\theta)s}ds, \quad \forall y \in \mathbb{N}^{N}, \quad \forall u <T.
\end{equation}
\end{lemma}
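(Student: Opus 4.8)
The plan is to reduce the statement to a single elementary property of the continuous-time Markov chain $(X_{\theta}(t))$, namely that the holding time in a state is exponentially distributed, and then to evaluate a one-dimensional integral against that exponential law. First I would invoke the strong Markov property at the stopping time $\sigma_l$. Conditioned on the event $\{X_{\theta}(\sigma_l)=y\}$, the holding time $W \coloneqq \sigma_{l+1}-\sigma_l$ of the chain in state $y$ is exponentially distributed with rate $\lambda_0(y,\theta)=\sum_{k=1}^M \lambda_k(y,\theta)$, and is independent of $\sigma_l$; this is the standard characterisation of the embedded jump chain associated with the generator $\mathbb{A}_\theta$, which is well defined under Assumption~\ref{assumption:propensity_dynkin}. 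In particular, conditioning additionally on $\{\sigma_l=u\}$ does not change the law of $W$, so that $\mathbb{P}(W>s \mid X_\theta(\sigma_l)=y,\sigma_l=u)=e^{-\lambda_0(y,\theta)s}$ for all $s\geq 0$.

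Next I would express $\delta_l$ in terms of $W$. Since we condition on $\sigma_l=u<T$, we have $\sigma_l\wedge T=u$ and $\sigma_{l+1}\wedge T=(u+W)\wedge T$, so that
\[
\delta_l=(\sigma_{l+1}\wedge T)-(\sigma_l\wedge T)=(u+W)\wedge T-u=W\wedge(T-u).
\]
The key manipulation is then the pointwise identity, valid because $g$ is defined on $[0,\infty)$,
\[
\int_0^{\delta_l}g(s)\,ds=\int_0^{W\wedge(T-u)}g(s)\,ds=\int_0^{T-u}g(s)\,\mathbb{1}\{s<W\}\,ds,
\]
where the last equality uses that, for $s\in[0,T-u)$, the event $\{s<W\wedge(T-u)\}$ coincides with $\{s<W\}$.

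Finally I would take the conditional expectation of this expression and exchange it with the $s$-integral. Since $g\geq 0$, the integrand $g(s)\,\mathbb{1}\{s<W\}$ is nonnegative, so Tonelli's theorem justifies the interchange and yields
\[
\mathbb{E}\!\left[\int_0^{T-u}g(s)\,\mathbb{1}\{s<W\}\,ds \;\middle|\; X_\theta(\sigma_l)=y,\sigma_l=u\right]=\int_0^{T-u}g(s)\,\mathbb{P}(W>s)\,ds=\int_0^{T-u}g(s)\,e^{-\lambda_0(y,\theta)s}\,ds,
\]
which is exactly eq.~\eqref{eq:conditioned_delta}. The main obstacle is the first step: justifying rigorously that the holding time $W$ is exponential with rate $\lambda_0(y,\theta)$ and independent of $\sigma_l$ under the joint conditioning on $\{X_\theta(\sigma_l)=y,\sigma_l=u\}$, together with correctly accounting for the truncation at $T$ in the definition of $\delta_l$; once these are in place, the indicator rewriting and the Tonelli exchange are routine and rely only on the nonnegativity and continuity of $g$.
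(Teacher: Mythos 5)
Your proof is correct, and it takes a genuinely different computational route from the paper's. Both arguments rest on the same probabilistic fact---conditioned on $X_{\theta}(\sigma_{l})=y$ and $\sigma_{l}=u<T$, the residual holding time $W=\sigma_{l+1}-\sigma_{l}$ is exponential with rate $\lambda_{0}(y,\theta)$---but they exploit it differently. The paper works with the law of the truncated variable $\delta_{l}$ itself, which is a mixture of an exponential density on $[0,T-u)$ and an atom at $T-u$; it then computes $\mathbb{E}\big[\int_{0}^{\delta_{l}}g(s)\,ds\big]$ as the sum of a density term and an atom term, and removes the unwanted boundary contribution by an explicit integration by parts. You instead write $\delta_{l}=W\wedge(T-u)$, make the pointwise rewriting $\int_{0}^{\delta_{l}}g(s)\,ds=\int_{0}^{T-u}g(s)\,\mathbb{1}\{s<W\}\,ds$, and then apply Tonelli, which reduces everything to the survival function $\mathbb{P}(W>s\mid X_{\theta}(\sigma_{l})=y,\sigma_{l}=u)=e^{-\lambda_{0}(y,\theta)s}$. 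Your route is shorter and avoids two pieces of bookkeeping that the paper needs: the atom of $\delta_{l}$ at $T-u$ (whose contribution is precisely the term that the paper's integration by parts cancels) and the integration by parts itself; it also sidesteps the minor imprecision in the paper's stated CDF at the point $s=T-u$, since you never need the full law of $\delta_{l}$. The trade-off is that you must justify the indicator identity and the interchange of expectation and integral, but as you note both are immediate from nonnegativity of $g$. In both proofs the one ingredient taken for granted is the same and is standard: the (regular) conditional law of the holding time given $\{X_{\theta}(\sigma_{l})=y,\,\sigma_{l}=u\}$, valid under Assumption~\ref{assumption:propensity_dynkin} via the strong Markov property.
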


\begin{proof}
Given $X_{\theta}(\sigma_{l})=y$ and $\sigma_{l}=u$, the cumulative distribution function of $\delta_{l}$ is:

$$
P(\delta_{l}<s|X_{\theta}(\sigma_{l})=y,\sigma_{l}=u) = \left\{
\begin{array}{rr}
0, & \text{if } s<0\\
1 - e^{-\lambda_{0}(y,\theta)s}, & \text{if } 0 \leq s < (T-u)\\
1, & \text{otherwise}.
\end{array}\right. 
$$

Now we can compute:

\begin{equation}
\label{eq:ipp_intermediate}
\mathbb{E} \Bigg[\int_{0}^{\delta_{l}}g(s)ds\Bigg| X_{\theta}(\sigma_{l})=y,\sigma_{l}=u\Bigg]= \int_{0}^{T-u}\Bigg(\int_{0}^{s}g(t)dt\Bigg)\lambda_{0}(y,\theta)e^{-\lambda_{0}(y,\theta)s}ds+\Bigg(\int_{0}^{T-u}g(t)dt\Bigg)e^{-\lambda_{0}(y,\theta)(T-u)}. 
\end{equation}

By integration by parts, observe that:

\begin{align}
\int_{0}^{T-u}\Bigg(\int_{0}^{s}g(t)dt\Bigg)\lambda_{0}(y,\theta)e^{-\lambda_{0}(y,\theta)s}ds
&= -\Bigg[\Bigg(\int_{0}^{s}g(t)dt\Bigg)e^{-\lambda_{0}(y,\theta)s}\Bigg]_{0}^{T-u} + \int_{0}^{T-u}g(s)e^{-\lambda_{0}(y,\theta)s}ds \nonumber\\
&= -\Bigg(\int_{0}^{T-u}g(s)ds\Bigg)e^{-\lambda_{0}(y,\theta)(T-u)} + \int_{0}^{T-u}g(s)e^{-\lambda_{0}(y,\theta)s}ds.\label{eq:ipp}
\end{align}

Replacing eq.~\eqref{eq:ipp} in eq.~\eqref{eq:ipp_intermediate}, we obtain eq.~\eqref{eq:conditioned_delta} as desired. 
\end{proof}

\begin{proof}[Proof of theorem~\ref{thm:main_theorem}]

The proof builds on ideas introduced in~\cite{anderson2015stochastic, gupta2013unbiased, gupta2018estimation}. Pick an $\epsilon \in \Rstarplus$ and introduce $\hat{\theta} \coloneqq (\theta_{1}^{\epsilon}, \theta_{2}^{\epsilon}, \theta_{3}^{\epsilon},\theta_{4}^{\epsilon})$ specified by:
$$
\theta_{1}^{\epsilon} \coloneqq \theta + (e_i + e_j) \epsilon, \quad \theta_{2}^{\epsilon} \coloneqq \theta + e_i \epsilon, \quad \theta_{3}^{\epsilon} \coloneqq \theta + e_j \epsilon, \quad \theta_{4}^{\epsilon} \coloneqq \theta.
$$

For each $k \in \enumreac$ and $q \in [\![1, 4]\!]$, introduce $\lambda_{k,q} : \Natural^N \xrightarrow{} \Rplus$ defined as:
$$
\lambda_{k,q}(x) \coloneqq \lambda_{k}(x, \theta_{q}^{\epsilon}).
$$

For $\hat{x}=(x_{1},x_{2},x_{3},x_{4})$ with $x_{q} \in \Natural^N$, we then introduce:
\begin{align*}
\Lambda_{k,(1,1,1,1)}(\hat{x}) &\coloneqq \lambda_{k,1}(x_{1}) \wedge \lambda_{k,2}(x_{2}) \wedge \lambda_{k,3}(x_{3}) \wedge \lambda_{k,4}(x_{4}), \\
\Lambda_{k,(1,1,0,0)}(\hat{x}) &\coloneqq \lambda_{k,1}(x_{1}) \wedge \lambda_{k,2}(x_{2}) - \Lambda_{k,(1,1,1,1)}(\hat{x}), \\
\Lambda_{k,(0,0,1,1)}(\hat{x}) &\coloneqq \lambda_{k,3}(x_{3}) \wedge \lambda_{k,4}(x_{4}) - \Lambda_{k,(1,1,1,1)}(\hat{x}), \\
\Lambda_{k,(1,0,1,0)}(\hat{x}) &\coloneqq (\lambda_{k,1}(x_{1}) - \lambda_{k,1}(x_{1}) \wedge \lambda_{k,2}(x_{2})) \wedge (\lambda_{k,3}(x_{3}) - \lambda_{k,3}(x_{3}) \wedge \lambda_{k,4}(x_{4})), \\
\Lambda_{k,(0,1,0,1)}(\hat{x}) &\coloneqq (\lambda_{k,2}(x_{2}) - \lambda_{k,1}(x_{1}) \wedge \lambda_{k,2}(x_{2})) \wedge (\lambda_{k,4}(x_{4}) - \lambda_{k,3}(x_{3}) \wedge \lambda_{k,4}(x_{4})), \\
\Lambda_{k,(1,0,0,0)}(\hat{x}) &\coloneqq (\lambda_{k,1}(x_{1}) - \lambda_{k,1}(x_{1}) \wedge \lambda_{k,2}(x_{2})) - \Lambda_{k,(1,0,1,0)}(\hat{x}), \\
\Lambda_{k,(0,1,0,0)}(\hat{x}) &\coloneqq (\lambda_{k,2}(x_{2}) - \lambda_{k,1}(x_{1}) \wedge \lambda_{k,2}(x_{2})) - \Lambda_{k,(0,1,0,1)}(\hat{x}), \\
\Lambda_{k,(0,0,1,0)}(\hat{x}) &\coloneqq (\lambda_{k,3}(x_{3}) - \lambda_{k,3}(x_{3}) \wedge \lambda_{k,4}(x_{4})) - \Lambda_{k,(1,0,1,0)}(\hat{x}), \\
\Lambda_{k,(0,0,0,1)}(\hat{x}) &\coloneqq (\lambda_{k,4}(x_{4}) - \lambda_{k,3}(x_{3}) \wedge \lambda_{k,4}(x_{4})) - \Lambda_{k,(0,1,0,1)}(\hat{x}).
\end{align*}

Introduce a collection of independent, unit-rate Poisson processes $\{(Y_{k, b}(t))_{t \in \mathbb{R}_{+}}\}_{k \in [\![1, M]\!], b\in \{0,1\}^4}$. Let $(X_{\hat{\theta}}^{(1)}(t))_{t \in \mathbb{R}_{+}}$, $(X_{\hat{\theta}}^{(2)}(t))_{t \in \mathbb{R}_{+}}$, $(X_{\hat{\theta}}^{(3)}(t))_{t \in \mathbb{R}_{+}}$ and $(X_{\hat{\theta}}^{(4)}(t))_{t \in \mathbb{R}_{+}}$ be given by the following time change representations:
\begin{align*}
    X_{\hat{\theta}}^{(1)}(t) &\coloneqq x_{0} + \sum_{k=1}^{M} \zeta_{k} \Big(R_{k,(1,1,1,1)}(t) + R_{k,(1,1,0,0)}(t) + R_{k,(1,0,1,0)}(t) + R_{k,(1,0,0,0)}(t)\Big), \\
    X_{\hat{\theta}}^{(2)}(t) &\coloneqq x_{0} + \sum_{k=1}^{M} \zeta_{k} \Big(R_{k,(1,1,1,1)}(t) + R_{k,(1,1,0,0)}(t) + R_{k,(0,1,0,1)}(t) + R_{k,(0,1,0,0)}(t)\Big), \\
    X_{\hat{\theta}}^{(3)}(t) &\coloneqq x_{0} + \sum_{k=1}^{M} \zeta_{k} \Big(R_{k,(1,1,1,1)}(t) + R_{k,(0,0,1,1)}(t) + R_{k,(1,0,1,0)}(t) + R_{k,(0,0,1,0)}(t) \Big), \\
    X_{\hat{\theta}}^{(4)}(t) &\coloneqq x_{0} + \sum_{k=1}^{M} \zeta_{k} \Big(R_{k,(1,1,1,1)}(t) + R_{k,(0,0,1,1)}(t) + R_{k,(0,1,0,1)}(t) + R_{k,(0,0,0,1)}(t)\Big),
\end{align*}

where we define a family $\{(R_{k, b}(t))_{t \in \mathbb{R}_{+}}\}_{k \in [\![1, M]\!], b\in \{0,1\}^4}$ of counting processes as:
$$
R_{k,b}(t) \coloneqq Y_{k,b}\Big(\int_{0}^{t} \Lambda_{k,b}(X_{\hat{\theta}}^{(1)}(s),X_{\hat{\theta}}^{(2)}(s), X_{\hat{\theta}}^{(3)}(s),X_{\hat{\theta}}^{(4)}(s))\Big)ds.
$$

Introduce $X_{\hat{\theta}}(t) \coloneqq (X_{\hat{\theta}}^{(1)}(t), X_{\hat{\theta}}^{(2)}(t), X_{\hat{\theta}}^{(3)}(t), X_{\hat{\theta}}^{(4)}(t))$. Observe that the position of the non-zero elements in $b$ specifies the index of the processes in $(X_{\hat{\theta}}(t))$ being coupled by $(R_{k,b}(t))$, while the norm $|b|$ specifies the number of processes being coupled. For each $q \in [\![1, 4]\!]$, note that $(X_{\hat{\theta}}^{(q)}(t))$ is a Markov process with generator $\generator_{\theta_{q}^{\epsilon}}$ and with intensity $\lambda_{k,q}$ for each reaction $k \in \enumreac$. Since $\sum_{b \in \{0,1\}^4} \Lambda_{k,b} \rightarrow \lambda_k$ as $\epsilon \rightarrow 0$, the processes $(X_{\hat{\theta}}^{(1)}(t))$, $(X_{\hat{\theta}}^{(2)}(t))$,
$(X_{\hat{\theta}}^{(3)}(t))$, and $(X_{\hat{\theta}}^{(4)}(t))$ themselves converge almost surely to $(X_{\theta}(t))$ as $\epsilon \rightarrow 0$.\newline

\textbf{Step 1.} Let $\sigma_{l}$ be as previously the $l$-th jump time of the process $(X_{\theta}(t))$ for $l \in \Natural$ with $\sigma_{0}=0$ for convenience. Introduce the function $\lambda_0 : \Natural^N \times \mathbb{R}^p\xrightarrow{} \Rplus$ denoting the sum of propensities: $\lambda_0(x,\theta) \coloneqq \sum_{k=1}^{M} \lambda_{k}(x,\theta)$. Let us first prove that:
\begin{align}
\begin{split}
\label{sensitivity_first_expression}
s_{\theta}^{(i,j)}(f, T) &= \sum_{k=1}^{M} \sumoverjumps \Bigg[ \frac{\partial^2 \lambda_{k}}{\partial \theta_{i} \partial \theta_{j}}(X_{\theta}(\sigma_l), \theta) \Bigg(\int_{0}^{T-\sigma_l}\Delta_{\zeta_{k}}\Psi_{\theta}(X_{\theta}(\sigma_{l}), f, T-\sigma_{l}-s) e^{-\lambda_{0}(X_{\theta}(\sigma_l), \theta)s}ds\Bigg) \\
&+ \frac{\partial \lambda_{k}}{\partial \theta_{i}}(X_{\theta}(\sigma_l), \theta) \Bigg(\int_{0}^{T-\sigma_l} \Delta_{\zeta_{k}}S_{\theta}^{(j)}(X_{\theta}(\sigma_{l}), f, T-\sigma_{l}-s) e^{-\lambda_{0}(X_{\theta}(\sigma_l), \theta)s}ds\Bigg) \\
&+ \frac{\partial \lambda_{k}}{\partial \theta_{j}}(X_{\theta}(\sigma_l), \theta) \Bigg(\int_{0}^{T-\sigma_l} \Delta_{\zeta_{k}}S_{\theta}^{(i)}(X_{\theta}(\sigma_{l}), f, T-\sigma_{l}-s) e^{-\lambda_{0}(X_{\theta}(\sigma_l), \theta)s}ds\Bigg)\Bigg].
\end{split}
\end{align}

\textbf{Step 1.1.} To start with, recall that by definition of second-order derivatives:
\begin{equation}
\label{finite_difference}
S_{\theta}^{(i,j)}(x_{0},f,T) = \lim_{\epsilon \rightarrow 0} \mathbb{E}_{x_0} \Bigg[\frac{f\big(X_{\hat{\theta}}^{(1)}(T)\big)-f\big(X_{\hat{\theta}}^{(2)}(T)\big)-f\big(X_{\hat{\theta}}^{(3)}(T)\big)+f\big(X_{\hat{\theta}}^{(4)}(T)\big)}{\epsilon^{2}}\Bigg].
\end{equation}

For each $k \in \enumreac$ and $b \in \{0,1\}^4$, the first time each counting process $(R_{k,b}(t))$ fires is defined by:
$\tau_{k, b}^{\epsilon} \coloneqq \inf \Big\{t \in \Rplus:R_{k,b}(t) \geq 1 \Big\}$. Taking $d \in [\![1, 3]\!]$, we define in turn: $
\tau_{k, d}^{\epsilon} \coloneqq \min_{b \in \{0,1\}^4, |b|=d} \tau_{k, b}^{\epsilon}$, as well as: $\tau_{k}^{\epsilon} \coloneqq \min_{d \in [\![1, 3]\!]} \tau_{k, d}^{\epsilon}$. The first splitting time between the processes $(X_{\hat{\theta}}^{(1)}(t))$, $(X_{\hat{\theta}}^{(2)}(t))$, $(X_{\hat{\theta}}^{(3)}(t))$ and $(X_{\hat{\theta}}^{(4)}(t))$ is given by the stopping time $\tau^{\epsilon}$ defined by: $\tau^{\epsilon} \coloneqq \min_{k \in \enumreac} \tau_{k}^{\epsilon}$. Let $\sigma_{l}^{\epsilon}$ be the $l$-th jump time of the process $(\sum_{k} R_{k,(1,1,1,1)}(t))$ for $l \in \Natural$ with $\sigma_{0}^{\epsilon}=0$. Let us introduce:
\begin{equation}
\label{eq:def_roh_jump}
\rho_{l,k}^{\epsilon}(\theta) \coloneqq \mathbb{E}\bigg[\mathbb{1}_{\{\sigma_{l}^{\epsilon}<\tau^{\epsilon}<\sigma_{l+1}^{\epsilon}, \tau^{\epsilon}=\tau_{k}^{\epsilon}\}} \bigg(f(X_{\hat{\theta}}^{(1)}(T))-f(X_{\hat{\theta}}^{(2)}(T))-f(X_{\hat{\theta}}^{(3)}(T))+f(X_{\hat{\theta}}^{(4)}(T))\bigg)\bigg],
\end{equation}

With these notations at hand, eq.~\eqref{finite_difference} can be rewritten as:
\begin{equation}
\label{finite_difference_bis}
S_{\theta}^{(i,j)}(x_{0},f,T) = \lim_{\epsilon \rightarrow 0} \frac{1}{\epsilon^{2}} \sum_{l=0}^{\infty} \sum_{k=1}^{M} \rho_{l,k}^{\epsilon}(\theta).
\end{equation}

\textbf{Step 1.2.} Let us focus on the individual $\rho_{l,k}^{\epsilon}(\theta)$. Introduce $\{\mathcal{F}_{t}^{\epsilon}\}_{t \in \Rplus}$ the filtration generated by $(X_{\hat{\theta}}(t))$ and $E_{l,k}^{\epsilon}\coloneqq\{\tau^{\epsilon}<\sigma_{l+1}^{\epsilon}, \tau^{\epsilon}=\tau_{k}^{\epsilon}\}$. Using the tower property for conditional expectations, eq.~\eqref{eq:def_roh_jump} can be rewritten as:
\begin{align}
\rho_{l,k}^{\epsilon}(\theta) &= \mathbb{E}\bigg[\mathbb{1}_{\{\sigma_{l}^{\epsilon}<\tau^{\epsilon}\}}\mathbb{1}_{E_{l,k}^{\epsilon}} \mathbb{E}\bigg[f(X_{\hat{\theta}}^{(1)}(T))-f(X_{\hat{\theta}}^{(2)}(T))-f(X_{\hat{\theta}}^{(3)}(T))+f(X_{\hat{\theta}}^{(4)}(T)) \bigg| \mathcal{F}_{T \wedge \tau^{\epsilon}}^{\epsilon}\bigg]\bigg] \nonumber \\
&= \mathbb{E}\bigg[\mathbb{1}_{\{\sigma_{l}^{\epsilon}<\tau^{\epsilon}\}}\mathbb{1}_{E_{l,k}^{\epsilon}}\mathbb{1}_{\tau_{k}^{\epsilon}=\tau_{k,1}^{\epsilon}} \mathbb{E}\bigg[f(X_{\hat{\theta}}^{(1)}(T))-f(X_{\hat{\theta}}^{(2)}(T))-f(X_{\hat{\theta}}^{(3)}(T))+f(X_{\hat{\theta}}^{(4)}(T))\bigg| \mathcal{F}_{T \wedge \tau^{\epsilon}}^{\epsilon}\bigg] \bigg]\nonumber\\
&+ \mathbb{E}\bigg[\mathbb{1}_{\{\sigma_{l}^{\epsilon}<\tau^{\epsilon}\}}\mathbb{1}_{E_{l,k}^{\epsilon}} \mathbb{1}_{\{\tau_{k}^{\epsilon}=\tau_{k,(1,1,0,0)}^{\epsilon}\}\cup\{\tau_{k}^{\epsilon}=\tau_{k,(0,0,1,1)}^{\epsilon}\}} \mathbb{E}\bigg[f(X_{\hat{\theta}}^{(1)}(T))-f(X_{\hat{\theta}}^{(2)}(T))-f(X_{\hat{\theta}}^{(3)}(T))+f(X_{\hat{\theta}}^{(4)}(T))\bigg| \mathcal{F}_{T \wedge \tau^{\epsilon}}^{\epsilon}\bigg]\bigg] \nonumber \\
&+ \mathbb{E}\bigg[\mathbb{1}_{\{\sigma_{l}^{\epsilon}<\tau^{\epsilon}\}}\mathbb{1}_{E_{l,k}^{\epsilon}} \mathbb{1}_{\{\tau_{k}^{\epsilon}=\tau_{k,(1,0,1,0)}^{\epsilon}\}\cup\{\tau_{k}^{\epsilon}=\tau_{k,(0,1,0,1)}^{\epsilon}\}}  \mathbb{E}\bigg[f(X_{\hat{\theta}}^{(1)}(T))-f(X_{\hat{\theta}}^{(2)}(T))-f(X_{\hat{\theta}}^{(3)}(T))+f(X_{\hat{\theta}}^{(4)}(T))\bigg| \mathcal{F}_{T \wedge \tau^{\epsilon}}^{\epsilon}\bigg]\bigg] \label{roh_jump_discunted}\\
&= \rho_{l,k,1}^{\epsilon}(\theta) + \rho_{l,k,2}^{\epsilon}(\theta), \nonumber
\end{align}

where we have defined $\rho_{l,k,1}^{\epsilon}(\theta)$ as the first expectation and $\rho_{l,k,2}^{\epsilon}(\theta)$ as the sum of the second and third expectations in eq.~\eqref{roh_jump_discunted}.\newline

\textbf{Step 1.3.} Let us start with $\rho_{l,k,1}^{\epsilon}(\theta)$. Using the strong Markov property and observing that $X_{\hat{\theta}}^{(1)}(\sigma_{l}^{\epsilon}) = X_{\hat{\theta}}^{(2)}(\sigma_{l}^{\epsilon}) = X_{\hat{\theta}}^{(3)}(\sigma_{l}^{\epsilon}) = X_{\hat{\theta}}^{(4)}(\sigma_{l}^{\epsilon})$ on $\{\sigma_{l}^{\epsilon}<\tau^{\epsilon}\}$, we know that:
\begin{multline*}
\mathbb{1}_{E_{l,k}^{\epsilon}} \mathbb{1}_{\{\tau_{k}^{\epsilon}=\tau_{k,1}^{\epsilon}\}} \mathbb{E}\bigg[f(X_{\hat{\theta}}^{(1)}(T))-f(X_{\hat{\theta}}^{(2)}(T))-f(X_{\hat{\theta}}^{(3)}(T))+f(X_{\hat{\theta}}^{(4)}(T)) \bigg| \mathcal{F}_{T \wedge \tau^{\epsilon}}^{\epsilon}\bigg]\\
= \mathbb{E}\bigg[\mathbb{1}_{E_{l,k}^{\epsilon}}\mathbb{1}_{\{\tau_{k}^{\epsilon}=\tau_{k,1}^{\epsilon}\}} \mathbb{E}\bigg[f(X_{\hat{\theta}}^{(1)}(T))-f(X_{\hat{\theta}}^{(2)}(T))-f(X_{\hat{\theta}}^{(3)}(T))+f(X_{\hat{\theta}}^{(4)}(T)) \bigg| \mathcal{F}_{T \wedge \tau^{\epsilon}}^{\epsilon}\bigg]\bigg| X_{\hat{\theta}}^{(4)}(\sigma_{l}^{\epsilon})\bigg]
\text{ almost surely on } \{\sigma_{l}^{\epsilon}<\tau^{\epsilon}\}.
\end{multline*}

Let $\gamma_{l}^{\epsilon} \coloneqq (\tau^{\epsilon}-\sigma_{l}^{\epsilon}) \wedge (\sigma_{l+1}^{\epsilon} - \sigma_{l}^{\epsilon})$. On the event $\{\sigma_{l}^{\epsilon}<\tau^{\epsilon}\}$, given $X_{\hat{\theta}}^{(4)}(\sigma_{l}^{\epsilon})=x$, $\gamma_{l}^{\epsilon}$ is exponentially distributed as the minimum of the exponentially distributed random variables $(\tau^{\epsilon}-\sigma_{l}^{\epsilon})$ and $(\sigma_{l+1}^{\epsilon} - \sigma_{l}^{\epsilon})$. Its rate is given by: 
$$
r(x,\hat{\theta},\epsilon) = \Lambda_{k,(1,1,1,1)}(x,x,x,x) + \Lambda_{k,(1,1,0,0)}(x,x,x,x) + \Lambda_{k,(0,0,1,1)}(x,x,x,x) + \Lambda_{k,(1,0,1,0)}(x) + \Lambda_{k,(0,1,0,1)}(x,x,x,x) + n_{k}(x, \hat{\theta}),
$$

where $n_{k}(x, \hat{\theta}) \coloneqq \lambda_{k,1}(x)+\lambda_{k,2}(x)-2(\lambda_{k,1}(x)\wedge\lambda_{k,2}(x))-2\Lambda_{k,(1,0,1,0)}(x,x,x,x)+\lambda_{k,3}(x)+\lambda_{k,4}(x)-2(\lambda_{k,3}(x)\wedge\lambda_{k,4}(x))-2\Lambda_{k,(0,1,0,1)}(x,x,x,x)$. Observe that $E_{l,k}^{\epsilon} = \{\tau^{\epsilon}=\sigma_{l}^{\epsilon}+\gamma_{l}^{\epsilon}, \tau^{\epsilon}=\tau_{k}^{\epsilon}\}$. On the event $\{\sigma_{l}^{\epsilon}<\tau^{\epsilon}\}$, given $X_{\hat{\theta}}^{(4)}(\sigma_{l}^{\epsilon})=x$, the probability $p_{l,k}^{\epsilon}(x,\hat{\theta})$ of the event $E_{l,k}^{\epsilon}\cap\{\tau_{k}^{\epsilon}=\tau_{k,1}^{\epsilon}\}$ is given by:
$$
p_{l,k}^{\epsilon}(x,\hat{\theta}) = \frac{n_{k}(x,\hat{\theta})}{r(x,\hat{\theta},\epsilon)}.
$$

Using assumption~\ref{assumption:propensity_regularity}, we can write the Taylor expansions of $\lambda_{k}(x,\cdot)$ and $\partial_{j}\lambda_{k}(x,\cdot)$ around $\theta_{2}^{\epsilon}$ and $\theta_{4}^{\epsilon}$:
$$
\lambda_{k}(x, \theta_{1}^{\epsilon})= \lambda_{k}(x, \theta_{2}^{\epsilon}) + \frac{\partial \lambda_{k}}{\partial \theta_{j}}(x, \theta_{2}^{\epsilon})+o(\epsilon), \quad \lambda_{k}(x, \theta_{3}^{\epsilon}) = \lambda_{k}(x, \theta_{4}^{\epsilon}) + \frac{\partial \lambda_{k}}{\partial \theta_{j}}(x, \theta_{4}^{\epsilon})+o(\epsilon), \quad \frac{\partial \lambda_{k}}{\partial\theta_{j}}(x, \theta_{2}^{\epsilon}) =  \frac{\partial \lambda_{k}}{\partial\theta_{j}}(x, \theta_{4}^{\epsilon}) + \frac{\partial^{2} \lambda_{k}}{\partial \theta_{i}\partial\theta_{j}}(x, \theta_{4}^{\epsilon})+o(\epsilon).
$$

This means we can pick an $\epsilon$ such that:
$$
\lambda_{k,1}(x) > \lambda_{k,2}(x) \iff \partial_{j}\lambda_{k,2}(x)>0, \quad \lambda_{k,3}(x) > \lambda_{k,4}(x) \iff \partial_{j}\lambda_{k,4}(x)>0, \quad \partial_{j} \lambda_{k,2}(x) > \partial_{j} \lambda_{k,4}(x) \iff \partial_{i,j}^{2} \lambda_{k,4}(x)>0.
$$

Assumption~\ref{assumption:propensity_regularity} also guarantees we can pick an $\epsilon$ such that $\partial_{j}\lambda_{k,2}(x)$ and $\partial_{j}\lambda_{k,4}(x)$ have same sign, \emph{e.g.}: 
$$
\partial_{j}\lambda_{k,2}(x) > 0 \iff \partial_{j}\lambda_{k,4}(x) > 0.
$$

On the event $\{\sigma_{l}^{\epsilon}<\tau^{\epsilon}<T\}\cap E_{l,k}^{\epsilon}\cap\{\tau_{k}^{\epsilon}=\tau_{k,1}^{\epsilon}\}$, given  $X_{\hat{\theta}}^{(4)}(\sigma_{l}^{\epsilon})=x$:
\begin{equation}
\label{eq:state_at_splitting_time}
X_{\hat{\theta}}(\tau^{\epsilon})= \left\{
\begin{array}{rr}
(x+\zeta_{k},x,x,x), & \text{if } \partial_{j} \lambda_{k,4}(x)>0 \text{ and } \partial_{i,j}^{2} \lambda_{k,4}(x)>0\\
(x,x,x+\zeta_{k},x), & \text{if } \partial_{j} \lambda_{k,4}(x)>0 \text{ and } \partial_{i,j}^{2} \lambda_{k,4}(x)<0\\
(x,x+\zeta_{k},x,x), & \text{if } \partial_{j} \lambda_{k,4}(x)<0 \text{ and } \partial_{i,j}^{2} \lambda_{k,4}(x)>0\\
(x,x,x,x+\zeta_{k}), & \text{otherwise}.
\end{array}\right.    
\end{equation}

Indeed, for example when $\partial_{j} \lambda_{k,4}(x)>0$ and $\partial_{i,j}^{2} \lambda_{k,4}(x)>0$:
\begin{align*}
\Lambda_{k,(1,0,0,0)}(x,x,x,x) &= (\lambda_{k,1}(x) - \lambda_{k,1} (x)\wedge \lambda_{k,2}(x)) - (\lambda_{k,1}(x) - \lambda_{k,1}(x) \wedge \lambda_{k,2}(x)) \wedge (\lambda_{k,3}(x) - \lambda_{k,3}(x) \wedge \lambda_{k,4}(x)) > 0 \\
\Lambda_{k,(0,1,0,0)}(x,x,x,x) &= (\lambda_{k,2}(x) - \lambda_{k,1}(x) \wedge \lambda_{k,2}(x)) - (\lambda_{k,2}(x) - \lambda_{k,1}(x) \wedge \lambda_{k,2}(x)) \wedge (\lambda_{k,4}(x) - \lambda_{k,3}(x) \wedge \lambda_{k,4}(x)) = 0\\
\Lambda_{k,(0,0,1,0)}(x,x,x,x) &= (\lambda_{k,3}(x) - \lambda_{k,1}(x) \wedge \lambda_{k,2}(x)) -  (\lambda_{k,1}(x) - \lambda_{k,1}(x) \wedge \lambda_{k,2}(x)) \wedge (\lambda_{k,3}(x) - \lambda_{k,3}(x) \wedge \lambda_{k,4}(x)) = 0\\
\Lambda_{k,(0,0,0,1)}(x,x,x,x) &= (\lambda_{k,4}(x) - \lambda_{k,1}(x) \wedge \lambda_{k,2}(x)) - (\lambda_{k,2}(x) - \lambda_{k,1}(x) \wedge \lambda_{k,2}(x)) \wedge (\lambda_{k,4}(x) - \lambda_{k,3}(x) \wedge \lambda_{k,4}(x)) = 0.    
\end{align*}

Let us consider a specific $x \in \Natural^{N}$ and assume for now that $\partial_{j} \lambda_{k}(x,\theta_{4}^{\epsilon})>0$ and $\partial_{i,j}^{2} \lambda_{k}(x,\theta_{4}^{\epsilon})>0$. Using the distribution of $\gamma_{l}^{\epsilon}$ and eq.~\eqref{eq:state_at_splitting_time}, we have:
\begin{multline*}
\mathbb{1}_{\{\sigma_{l}^{\epsilon}<\tau^{\epsilon}\}}\mathbb{E}\bigg[\mathbb{1}_{E_{l,k}^{\epsilon}} \mathbb{1}_{\{\tau_{k}^{\epsilon}=\tau_{k,1}^{\epsilon}\}}\mathbb{E}\bigg[f(X_{\hat{\theta}}^{(1)}(T))-f(X_{\hat{\theta}}^{(2)}(T))-f(X_{\hat{\theta}}^{(3)}(T))+f(X_{\hat{\theta}}^{(4)}(T))\bigg| \mathcal{F}_{T \wedge \tau^{\epsilon}}^{\epsilon}\bigg]\bigg| X_{\hat{\theta}}^{(4)}(\sigma_{l}^{\epsilon})=x\bigg] \\
= \mathbb{1}_{\{\sigma_{l}^{\epsilon}<\tau^{\epsilon}\}} p_{l,k}^{\epsilon}(x,\hat{\theta})r(x,\hat{\theta},\epsilon)\hat{R}_{\hat{\theta}}^{\epsilon}(x,f,T-(\sigma_{l}^{\epsilon} \wedge T),k),
\end{multline*}

where have introduced:
$$
\hat{R}_{\hat{\theta}}^{\epsilon}(x,f,t,k) \coloneqq \int_{0}^{t}\bigg(\Psi_{\theta_{1}^{\epsilon}}(x+\zeta_{k},f,t-s)-\Psi_{\theta_{2}^{\epsilon}}(x,f,t-s)-\Psi_{\theta_{3}^{\epsilon}}(x,f,t-s)+\Psi_{\theta_{4}^{\epsilon}}(x,f,t-s)\bigg)e^{-r(x,\hat{\theta},\epsilon)s}ds.
$$

Recall that as $\epsilon \rightarrow 0$, for each $q \in [\![1, 4]\!]$, $X_{\hat{\theta}}^{q}(t) \rightarrow X_{\theta}(t)$ almost surely. This implies $\sigma_{l}^{\epsilon} \rightarrow \sigma_{l}$ and $\tau^{\epsilon} \rightarrow \infty$ as $\epsilon \rightarrow 0$ almost surely, which means in turn that:
$$
\lim_{\epsilon \rightarrow 0} \mathbb{1}_{\{\sigma_{l}^{\epsilon}<\tau^{\epsilon}\}} \rightarrow 1 \quad \text{almost surely.}
$$

Now express the numerator of $ p_{l,k}^{\epsilon}(x,\hat{\theta})$ as: 
\begin{equation}
\label{eq:numerator_expansion}
n_{k}(x,\hat{\theta}) = \left\{
\begin{array}{rr}
\partial_{i,j}^{2} \lambda_{k,4}\epsilon^{2}+o(\epsilon^{2}), & \text{if } (\partial_{j} \lambda_{k,4}>0 \text{ and } \partial_{i,j}^{2} \lambda_{k,4}>0) \text{ or } (\partial_{j} \lambda_{k,4}<0 \text{ and } \partial_{i,j}^{2} \lambda_{k,4}<0)\\
-\partial_{i,j}^{2} \lambda_{k,4}\epsilon^{2}+o(\epsilon^{2}), & \text{otherwise.}\\
\end{array}\right.   
\end{equation}

Indeed, for example when $\partial_{j} \lambda_{k,4}>0$ and $\partial_{i,j}^{2} \lambda_{k,4}>0$:
$$
n_{k}(x,\hat{\theta}) 
= \bigg(\partial_{j} \lambda_{k}(x, \theta_{2}^{\epsilon}) - \partial_{j} \lambda_{k}(x, \theta_{4}^{\epsilon})+o(\epsilon)\bigg)\epsilon
= \partial_{ij}^{2} \lambda_{k}(x, \theta_{4}^{\epsilon})\epsilon^{2}+o(\epsilon^{2}).
$$

Observe that $r(X_{\hat{\theta}}^{(4)}(\sigma_{l}^{\epsilon},\theta),\hat{\theta},\epsilon) \rightarrow \lambda_{0}(X_{\hat{\theta}}^{(4)}(\sigma_{l}^{\epsilon},\theta))$ almost surely as $\epsilon \rightarrow 0$. Therefore using eq.~\eqref{eq:numerator_expansion} and the continuous mapping theorem, we have:
\begin{multline*}
\lim_{\epsilon \rightarrow 0} \frac{1}{\epsilon^{2}}p_{l,k}^{\epsilon}(X_{\hat{\theta}}^{(4)}(\sigma_{l}^{\epsilon}),\hat{\theta})r(x,\hat{\theta},\epsilon)\hat{R}_{\hat{\theta}}^{\epsilon}(X_{\hat{\theta}}^{(4)}(\sigma_{l}^{\epsilon}),f,T-(\sigma_{l}^{\epsilon} \wedge T),k) \\
= \partial_{ij}^{2} \lambda_{k}(X_{\theta}(\sigma_{l}), \theta)\int_{0}^{T-\sigma_l} \Delta_{\zeta_{k}}\Psi_{\theta}(X_{\theta}(\sigma_{l}), f, T-\sigma_{l}-s) e^{-\lambda_{0}(X_{\theta}(\sigma_l), \theta)s}ds  \quad \text{almost surely.}
\end{multline*}

The result is the same when we use eq.~\eqref{eq:state_at_splitting_time} and~\eqref{eq:numerator_expansion} starting from the other assumptions on $\partial_{j} \lambda_{k,4}$ and $\partial_{i,j}^{2} \lambda_{k,4}$ at $x$. Using the Portmanteau theorem for expectations~\autocite[theorem 2.1]{billingsley2013convergence}, this leads to:
$$
\lim_{\epsilon \rightarrow 0} \frac{1}{\epsilon^{2}} \rho_{l,k,1}^{\epsilon}(\theta) = \mathbb{E}\Bigg[ \frac{\partial^2 \lambda_{k}}{\partial \theta_{i} \partial \theta_{j}}(X_{\theta}(\sigma_l), \theta) \int_{0}^{T-\sigma_l} \Delta_{\zeta_{k}}\Psi_{\theta}(X_{\theta}(\sigma_{l}), f, T-\sigma_{l}-s)e^{-\lambda_{0}(X_{\theta}(\sigma_l), \theta)s}ds\Bigg].
$$

\textbf{Step 1.4.} Let us now turn to $\rho_{l,k,2}^{\epsilon}(\theta)$. On the event $\{\sigma_{l}^{\epsilon}<\tau^{\epsilon}\}$, given $X_{\hat{\theta}}^{(4)}(\sigma_{l}^{\epsilon})=x$, the probability $p_{l,k}^{\epsilon}(x,\hat{\theta})$ of the event $E_{l,k}^{\epsilon}\cap \{\{\tau_{k}^{\epsilon}=\tau_{k,(1,1,0,0)}^{\epsilon}\}\cup\{\tau_{k}^{\epsilon}=\tau_{k,(0,0,1,1)}^{\epsilon}\}\}$ is: $
p_{l,k}^{\epsilon}(x,\hat{\theta}) = n_{k}(x,\hat{\theta})/r(x,\hat{\theta},\epsilon)
$, where $n_{k}(x,\hat{\theta}) \coloneqq \lambda_{k,1}(x)\wedge\lambda_{k,2}(x)+\lambda_{k,3}(x)\wedge\lambda_{k,4}(x)-2\Lambda_{k,(1,1,1,1)}(x,x,x,x)$. Observe that: $
n_{k}(x,\hat{\theta}) = |\partial_{i}\lambda_{k,4}(x)|\epsilon + o(\epsilon)$. On the event $\{\sigma_{l}^{\epsilon}<\tau^{\epsilon}<T \}\cap E_{l,k}^{\epsilon}\cap \{\{\tau_{k}^{\epsilon}=\tau_{k,(1,1,0,0)}^{\epsilon}\}\cup\{\tau_{k}^{\epsilon}=\tau_{k,(0,0,1,1)}^{\epsilon}\}\}$, given $X_{\hat{\theta}}^{(4)}(\sigma_{l}^{\epsilon})=x$, we have:
$$
X_{\hat{\theta}}(\tau^{\epsilon}) = \left\{
\begin{array}{rr}
(x+\zeta_{k}, x+\zeta_{k}, x, x), & \text{if } \partial_{j} \lambda_{k,4}(x)>0\\
(x, x, x+\zeta_{k}, x+\zeta_{k}), & \text{otherwise.}\\
\end{array}\right.
$$

Similarly as before, assume for now that $\partial_{j} \lambda_{k, 4}(x)>0$ and introduce:
$$
\hat{R}_{\hat{\theta}}^{\epsilon}(x,f,t,k) = \int_{0}^{t}\bigg(\Psi_{\theta_{1}^{\epsilon}}(x+\zeta_{k},f,t-s)-\Psi_{\theta_{2}^{\epsilon}}(x+\zeta_{k},f,t-s)-\Psi_{\theta_{3}^{\epsilon}}(x,f,t-s)+\Psi_{\theta_{4}^{\epsilon}}(x,f,t-s)\bigg)e^{-r(x,\hat{\theta},\epsilon)s}ds.
$$

We have:
\begin{multline*}
\lim_{\epsilon \rightarrow 0} \frac{1}{\epsilon^{2}}p_{l,k}^{\epsilon}(X_{\hat{\theta}}^{(4)}(\sigma_{l}^{\epsilon}),\hat{\theta})r(x,\hat{\theta},\epsilon)\hat{R}_{\hat{\theta}}^{\epsilon}(X_{\hat{\theta}}^{(4)}(\sigma_{l}^{\epsilon}),f,T-(\sigma_{l}^{\epsilon} \wedge T),k) \\
= \lim_{\epsilon \rightarrow 0} \frac{\partial_{i}\lambda_{k,4}(X_{\hat{\theta}}^{(4)}(\sigma_{l}^{\epsilon}))\epsilon + o(\epsilon)}{\epsilon} \int_{0}^{T-(\sigma_{l}^{\epsilon} \wedge T)}\bigg(\frac{\partial_{j}\Psi_{\theta_{2}^{\epsilon}}(X_{\hat{\theta}}^{(4)}(\sigma_{l}^{\epsilon})+\zeta_{k},f,t-s)+ o(\epsilon)}{\epsilon}-\frac{\partial_{j}\Psi_{\theta_{4}^{\epsilon}}(X_{\hat{\theta}}^{(4)}(\sigma_{l}^{\epsilon}),f,t-s)+ o(\epsilon)}{\epsilon}\bigg) e^{-r(X_{\hat{\theta}}^{(4)}(\sigma_{l}^{\epsilon}),\hat{\theta},\epsilon)s}ds.
\end{multline*}

Proceeding similarly for the second expectation in $\rho_{l,k,2}^{\epsilon}(\theta)$ and following similar steps as for $\rho_{l,k,1}^{\epsilon}(\theta)$, we obtain:
\begin{align*}
\begin{split}
\lim_{\epsilon \rightarrow 0} \frac{1}{\epsilon^{2}}\rho_{l,k,2}^{\epsilon}(\theta) &= \mathbb{E}\Bigg[  \frac{\partial \lambda_{k}}{\partial \theta_{i}}(X_{\theta}(\sigma_l), \theta) \Bigg(\int_{0}^{T-\sigma_l} \Delta_{\zeta_{k}}S_{\theta}^{(j)}(X_{\theta}(\sigma_{l}), f, T-\sigma_{l}-s) e^{-\lambda_{0}(X_{\theta}(\sigma_l), \theta)s}ds\Bigg) \\
&+ \frac{\partial \lambda_{k}}{\partial \theta_{j}}(X_{\theta}(\sigma_l), \theta) \Bigg(\int_{0}^{T-\sigma_l} \Delta_{\zeta_{k}}S_{\theta}^{(i)}(X_{\theta}(\sigma_{l}), f, T-\sigma_{l}-s) e^{-\lambda_{0}(X_{\theta}(\sigma_l), \theta)s}ds\Bigg)\Bigg].
\end{split}
\end{align*}

\textbf{Step 2.} Now let us prove that we can transform eq.~\eqref{sensitivity_first_expression} to:
\begin{align}
\begin{split}
\label{eq:intermediate_for_theorem}
s_{\theta}^{(i,j)}
(f, T) &= \sum_{k=1}^{M} \Bigg[\int_{0}^{T} \frac{\partial^2 \lambda_{k}}{\partial \theta_{i} \partial \theta_{j}}(X_{\theta}(s), \theta) \Delta_{\zeta_k} \Psi_{\theta}(X_{\theta}(s), f, T-s)ds \\
&+ \int_{0}^{T} \frac{\partial \lambda_{k}}{\partial \theta_{i}}(X_{\theta}(s), \theta) \Delta_{\zeta_k} S_{\theta}^{(j)}(X_{\theta}(s), f, T-s) ds \\
&+ \int_{0}^{T} \frac{\partial \lambda_{k}}{\partial \theta_{j}}(X_{\theta}(s), \theta) \Delta_{\zeta_k} S_{\theta}^{(i)}(X_{\theta}(s), f, T-s) ds\Bigg].
\end{split}
\end{align}

\textbf{Step 2.1.} Let us start with the first term in the summation in eq.~\eqref{sensitivity_first_expression}. Using eq.~\eqref{eq:conditioned_delta} from lemma~\ref{lemma:for_second_part} with $g(s)\coloneqq\Delta_{\zeta_{k}}\Psi_{\theta}(y, f, T-u-s)$, observe that:
\begin{align*}
\int_{0}^{T-u}\Delta_{\zeta_{k}}\Psi_{\theta}(y, f, T-u-s)e^{-\lambda_{0}(y, \theta)s}ds &= \mathbb{E} \Bigg[\int_{0}^{\delta_{l}}\Delta_{\zeta_{k}}\Psi_{\theta}(y, f, T-u-s)ds\Bigg| X_{\theta}(\sigma_{l})=y,\sigma_{l}=u\Bigg]\\
&= \mathbb{E} \Bigg[\int_{u}^{\delta_{l}+u}\Delta_{\zeta_{k}}\Psi_{\theta}(y, f, T-s)ds\Bigg| X_{\theta}(\sigma_{l})=y,\sigma_{l}=u\Bigg]\\
&= \mathbb{E} \Bigg[\int_{u}^{\sigma_{l+1} \wedge T}\Delta_{\zeta_{k}}\Psi_{\theta}(y, f, T-s)ds\Bigg| X_{\theta}(\sigma_{l})=y,\sigma_{l}=u\Bigg],
\end{align*}

which means:
\begin{equation}
\label{eq:integral_term}
\int_{0}^{T-\sigma_{l}}\Delta_{\zeta_{k}}\Psi_{\theta}(X_{\theta}(\sigma_{l}), f, T-\sigma_{l}-s)e^{-\lambda_{0}(X_{\theta}(\sigma_{l}), \theta)s}ds = \mathbb{E} \Bigg[\int_{\sigma_{l}}^{\sigma_{l+1} \wedge T}\Delta_{\zeta_{k}}\Psi_{\theta}(X_{\theta}(\sigma_{l}), f, T-s)ds\Bigg| \mathcal{F}_{\sigma_{l}}, \sigma_{l}<T\Bigg].   \end{equation}

Substituting in eq.~\eqref{sensitivity_first_expression} the expression given in eq.~\eqref{eq:integral_term}, we get:
\begin{align*}
&\mathbb{E}\Bigg[\sumoverjumps \frac{\partial^2 \lambda_{k}}{\partial \theta_{i} \partial \theta_{j}}(X_{\theta}(\sigma_l), \theta) \Bigg(\int_{0}^{T-\sigma_l}\Delta_{\zeta_{k}}\Psi_{\theta}(X_{\theta}(\sigma_{l}), f, T-\sigma_{l}-s) e^{-\lambda_{0}(X_{\theta}(\sigma_l), \theta)s}ds\Bigg)\Bigg] \nonumber\\
&= \mathbb{E}\Bigg[\sumoverjumps \frac{\partial^2 \lambda_{k}}{\partial \theta_{i} \partial \theta_{j}}(X_{\theta}(\sigma_l), \theta) \mathbb{E} \Bigg[\int_{\sigma_{l}}^{\sigma_{l+1} \wedge T}\Delta_{\zeta_{k}}\Psi_{\theta}(X_{\theta}(\sigma_{l}), f, T-s)ds\Bigg| \mathcal{F}_{\sigma_{l}}, \sigma_{l}<T\Bigg] \Bigg]\\
&= \mathbb{E}\Bigg[\sumoverjumps \int_{\sigma_{l}}^{\sigma_{l+1} \wedge T}\frac{\partial^2 \lambda_{k}}{\partial \theta_{i} \partial \theta_{j}}(X_{\theta}(\sigma_l), \theta) \Delta_{\zeta_{k}}\Psi_{\theta}(X_{\theta}(\sigma_{l}), f, T-s)ds\Bigg]\\
&= \mathbb{E}\Bigg[\int_{0}^{T}\frac{\partial^2 \lambda_{k}}{\partial \theta_{i} \partial \theta_{j}}(X_{\theta}(s), \theta) \Delta_{\zeta_{k}}\Psi_{\theta}(X_{\theta}(s), f, T-s)ds\Bigg].
\end{align*}

\textbf{Step 2.2.} Let us now turn to the second and third terms in the summation in eq.~\eqref{sensitivity_first_expression}. Following similar steps and introducing $g(s)\coloneqq\Delta_{\zeta_{k}}S_{\theta}^{(j)}(y, f, T-u-s)$ or $g(s)\coloneqq\Delta_{\zeta_{k}}S_{\theta}^{(i)}(y, f, T-u-s)$, we get:
\begin{align*}
\mathbb{E}\Bigg[\sumoverjumps \frac{\partial \lambda_{k}}{\partial \theta_{i}}(X_{\theta}(\sigma_l), \theta) \Bigg(\int_{0}^{T-\sigma_l} \Delta_{\zeta_{k}}S_{\theta}^{(j)}(X_{\theta}(\sigma_{l}), f, T-\sigma_{l}-s) e^{-\lambda_{0}(X_{\theta}(\sigma_l), \theta)s}ds\Bigg) \Bigg] &= \mathbb{E}\Bigg[\int_{0}^{T} \frac{\partial \lambda_{k}}{\partial \theta_{i}}(X_{\theta}(s), \theta) \Delta_{\zeta_k} S_{\theta}^{(j)}(X_{\theta}(s), f, T-s) ds\Bigg], \\
\mathbb{E}\Bigg[ \sumoverjumps \frac{\partial \lambda_{k}}{\partial \theta_{j}}(X_{\theta}(\sigma_l), \theta) \Bigg(\int_{0}^{T-\sigma_l} \Delta_{\zeta_{k}}S_{\theta}^{(i)}(X_{\theta}(\sigma_{l}), f, T-\sigma_{l}-s) e^{-\lambda_{0}(X_{\theta}(\sigma_l), \theta)s}ds\Bigg)\Bigg] &= \mathbb{E}\Bigg[\int_{0}^{T} \frac{\partial \lambda_{k}}{\partial \theta_{j}}(X_{\theta}(s), \theta) \Delta_{\zeta_k} S_{\theta}^{(i)}(X_{\theta}(s), f, T-s) ds\Bigg].
\end{align*}

\end{proof}

\printbibliography

\end{document}